\definecolor{penndarkestblue}{cmyk}{1,0.74,0,0.77}
\definecolor{penndarkerblue}{cmyk}{1,0.74,0,0.70}
\definecolor{pennblue}{cmyk}{0.99,0.66,0,0.57} 
\definecolor{pennlighterblue}{cmyk}{0.98,0.44,0,0.35}
\definecolor{pennlightestblue}{cmyk}{0.38,0.17,0,0.17} 
\definecolor{penndarkestred}{cmyk}{0,1,0.89,0.66}
\definecolor{penndarkerred}{cmyk}{0,1,0.88,0.55}
\definecolor{pennred}{cmyk}{0,1,0.83,0.42} 
\definecolor{pennlighterred}{cmyk}{0,1,0.6,0.24}
\definecolor{pennlightestred}{cmyk}{0,0.43,0.26,0.12} 
\definecolor{penndarkestgreen}{cmyk}{1,0,1,0.68}
\definecolor{penndarkergreen}{cmyk}{1,0,1,0.57}
\definecolor{penngreen}{cmyk}{1,0,1,0.44} 
\definecolor{pennlightergreen}{cmyk}{1,0,1,0.25}
\definecolor{pennlightestgreen}{cmyk}{0.43,0,0.43,0.13}
\definecolor{penndarkestorange}{cmyk}{0,0.65,1,0.49}
\definecolor{penndarkerorange}{cmyk}{0,0.65,1,0.33}
\definecolor{pennorange}{cmyk}{0,0.54,1,0.24} 
\definecolor{pennlighterorange}{cmyk}{0,0.32,1,0.13}
\definecolor{pennlightestorange}{cmyk}{0,0.15,0.46,0.06}
\definecolor{penndarkestpurple}{cmyk}{0,1,0.11,0.86}
\definecolor{penndarkerpurple}{cmyk}{0,1,0.13,0.82}
\definecolor{pennpurple}{cmyk}{0,1,0.11,0.71} 
\definecolor{pennlighterpurple}{cmyk}{0,1,0.05,0.46}
\definecolor{pennlightestpurple}{cmyk}{0,0.35,0.02,0.23}
\definecolor{pennyellow}{cmyk}{0,0.20,1,0.05} 
\definecolor{pennlightgray1}{cmyk}{0,0,0,0.05}
\definecolor{pennlightgray3}{cmyk}{0.01,0.01,0,0.18}
\definecolor{pennmediumgray1}{cmyk}{0.04,0.03,0,0.31}
\definecolor{pennmediumgray4}{cmyk}{0.08,0.06,0,0.54}
\definecolor{penndarkgray2}{cmyk}{0.09,0.07,0,0.71}
\definecolor{penndarkgray4}{cmyk}{0.1,0.1,0,0.92}
\def\SO3{\mathrm{SO(3)}}
\newtheorem{assumption}{\hspace{0pt}\bf Assumption \hspace{-0.15cm}}
\newtheorem{theorem}{\hspace{0pt}\bf Theorem}
\newtheorem{corollary}{\hspace{0pt}\bf Corollary}
\newtheorem{remark}{\hspace{0pt}\bf Remark}
\newtheorem{definition}{\hspace{0pt}\bf Definition}
\newtheorem{problem}{\hspace{0pt}\bf Problem}
\begin{document}

\title{Constrained Environment Optimization for Prioritized Multi-Agent Navigation}

\author{\IEEEauthorblockN{Zhan Gao and Amanda Prorok}\\
\thanks{This work is an extension of \cite{gao2022environment}. Department of Computer Science and Technology, University of Cambridge, Cambridge, UK (Email: zg292@cam.ac.uk, asp45@cam.ac.uk). This work was supported by European Research Council (ERC) Project 949940 (gAIa).}}

\markboth{}%
{Constrained Environment Optimization for Prioritized Multi-Agent Navigation}

\maketitle

\begin{abstract}
Traditional approaches to the design of multi-agent navigation algorithms consider the environment as a fixed constraint, despite the influence of spatial constraints on agents' performance. Yet hand-designing conducive environment layouts is inefficient and potentially expensive. The goal of this paper is to consider the environment as a decision variable in a system-level optimization problem, where both agent performance and environment cost are incorporated. Towards this end, we propose novel problems of \emph{unprioritized} and \emph{prioritized environment optimization}, where the former considers agents unbiasedly and the latter accounts for agent priorities. We show, through formal proofs, under which conditions the environment can change while guaranteeing completeness (i.e., all agents reach goals), and analyze the role of agent priorities in the environment optimization. We proceed to impose real-world constraints on the environment optimization and formulate it mathematically as a constrained stochastic optimization problem. Since the relation between agents, environment and performance is challenging to model, we leverage reinforcement learning to develop a model-free solution and a primal-dual mechanism to handle constraints. Distinct information processing architectures are integrated for various implementation scenarios, including online/offline optimization and discrete/continuous environment. Numerical results corroborate the theory and demonstrate the validity and adaptability of our approach.
\end{abstract}

\begin{IEEEkeywords}
Environment optimization, multi-agent systems, navigation, constrained optimization
\end{IEEEkeywords}

\IEEEpeerreviewmaketitle

\section{Introduction} \label{sec:intro}

Multi-agent systems present an attractive solution to spatially distributed tasks, wherein motion planning among moving agents and obstacles is one of the central problems. To date, the primal focus in multi-agent motion planning has been on developing effective, safe, and near-optimal navigation algorithms \cite{silver2005cooperative, van2008reciprocal,van2005prioritized,desaraju2012decentralized, standley2011complete, gao2023learning}. These algorithms consider the \textit{agents' environment as a fixed constraint}, where structures and obstacles must be circumnavigated. In this process, mobile agents engage in negotiations with one another for right-of-way, driven by local incentives to minimize individual delays. However, environmental constraints may result in dead-locks, live-locks and prioritization conflicts, even for state-of-the-art algorithms \cite{mani2010search}. These insights highlight the impact of the environment on multi-agent navigation.

Not all environments elicit the same kinds of agent behaviors and individual navigation algorithms are susceptible to environmental artefacts; undesirable environments can lead to irresolution in path planning \cite{ruderman_Uncovering_2018}. To deal with such bottlenecks, spatial structures (e.g., intersections, roundabouts) and markings (e.g., lanes) are developed to facilitate path de-confliction \cite{boudet2021collections} but these concepts are based on legacy mobility paradigms, which ignore inter-agent communication, cooperation, and systems-level optimization. While it is possible to deal with the circumvention of dead-locks and live-locks through hand-designed environment templates, such hand-designing process is inefficient~\cite{cap_Prioritized_2015}.

Reconfigurable and automated environments are emerging as a new trend~\cite{wang2010new, bier2014robotic, custodio2020flexible, gao2022environment}, incorporating mechatronic devices to establish interactive relations with agents. Such environments have wide applications in next generation buildings, especially in structured settings where agents are expected to solve repetitive tasks (like warehouses, factories, restaurants, etc.). In tandem with that enabling technology, the goal of this paper is to consider the environment as a \emph{variable} in pursuit of the agents' incentives. We propose the problem of systematically \textit{optimizing an environment to improve the navigation performance of a given multi-agent system}. This allows us to facilitate multi-agent navigation, even if the agents' actions cannot be directly controlled, e.g., when agents are human with inherent preferences and unknown decision-making processes, or physical robots with hardware and computation limitations. More in detail, our contributions are as follows:
\begin{enumerate}[(i)]
	
	\item We define novel problems of \textit{unprioritized} and \textit{prioritized} environment optimization, where the former considers agents unbiasedly and the latter accounts for agent priorities. We develop two solution variants, i.e., \textit{offline} and \textit{online} environment optimization, that adapt to different implementation scenarios. 
	
	\item We conduct the completeness analysis for multi-agent navigation with environment optimization, which identifies the conditions under which all agents are guaranteed to reach their goals. We also analyze the effects of \textit{agent priorities} on environment optimization, and show how environment ``resources'' can be negotitated to improve performance in an ordered scheme.
	
	\item We impose practical constraints on the environment optimization, and formulate it as a constrained stochastic optimization problem. We leverage reinforcement learning and a primal-dual mechanism to develop a model-free method, which allows us to integrate different information processing architectures (e.g., CNNs, GNNs) as a function of the problem setting.
	
	\item We evaluate the proposed framework in both discrete and continuous environment settings. The results corroborate theoretical findings and show the proposed approach can generalize beyond training instances, adapts to various optimization objectives and constraints, and allows for decentralized implementation. 
	
\end{enumerate}

\noindent \textbf{Related work.} The problem of environment optimization is reminiscent of robot co-design \cite{tanaka2015sdp,tatikonda2004control,tzoumas2018sensing}, wherein sensing, computation, and control are jointly optimized. While this domain of robotics is \textit{robot-centric} (i.e., does not consider the environment as an optimization criteria), there are a few works that, similarly to our approach, use reinforcement learning to co-optimize objectives \cite{lipson2000automatic,hornby2003generative,cheney2018scalable}. A more closely related idea is exploited in \cite{ruderman_Uncovering_2018}, wherein the environment is adversarially optimized to fail the navigation tasks of state-of-the-art solutions. It conducts a worst-case analysis to shed light on directions in which more robust systems can be developed. On the contrary, we optimize the environment to \textit{facilitate} multi-agent navigation tasks.

The role of the environment on motion planning has been previously explored. Specifically, \cite{bennewitz2002finding, jager2001decentralized} 
emphasize the existence of congestion and deadlocks in undesirable environments and develop trajectory planning methods to escape from potential deadlocks. The work in~\cite{vcap2015complete} defines the concept of ``well-formed'' environments, in which the navigation tasks of all agents can be carried out successfully without collisions. In~\cite{wu2020multi}, Wu et al. show that the shape of the environment leads to distinct \emph{path prospects} for different agents, and that this information can be shared among the agents to optimize and coordinate their mobility. Gur et al. in \cite{gur2021adversarial} generate adversarial environments and account for the latter information to develop resilient navigation algorithms. However, none of these works consider optimizing the environment to improve system-wide navigation performance. 

Our problem is also related to the problem of environment design. The work in \cite{zhang2009general} design environments to influence a agent's decision and \cite{keren2015goal, kulkarni2020designing} focus on boosting the interpretability of a robot's behavior for human, while both are formulated for a single-agent scenario. The works of Hauser \cite{hauser2013minimum, hauser2014minimum} consider removing obstacles from the environment to improve navigation performance of one agent, while \cite{bellusci2020multi} extends a similar idea to multi-agent settings but limited for discrete environments. However, it may not be practical to remove obstacles in real-world applications.

\section{Problem Formulation} \label{sec:problem}

Let $\ccalE$ be a $2$-D environment described by a starting region $\ccalS$, a destination region $\ccalD$ and an obstacle region $\Delta$ without overlap, i.e., $\ccalS \bigcap \ccalD = \ccalS \bigcap \Delta = \ccalD \bigcap \Delta = \emptyset$. Consider a multi-agent system with $n$ agents $\ccalA = \{A_i\}_{i=1}^n$ distributed in $\ccalE$. The agent bodies are contained within circles of 
radii $\{r_i\}_{i=1}^n$. Agents are initialized at positions $\bbS = [\bbs_1,\ldots,\bbs_n]$ in $\ccalS$ and deploy a given trajectory planner $\pi_a$ to move towards destinations $\bbD = [\bbd_1,\ldots,\bbd_n]$ in $\ccalD$. Let $\bbrho = [\rho_1,\ldots,\rho_n]^\top$ be a set of priorities that represent the importance of agents w.r.t. the navigation tasks, i.e., the larger the priority, the more important the agent; in other words, for $\rho_1 \ge \cdots \ge \rho_n$, agent $A_1$ has the highest priority. We consider a decentralized trajectory planner $\pi_a$, i.e., $\pi_a$ can be executed locally with 
only neighborhood information.

The existing literature focuses on developing novel trajectory planners to improve navigation performance. However, the latter depends not only on the implemented planner but also on the surrounding environment. A ``well-formed'' environment with an appropriate obstacle region yields good performance for a simple planner, while a ``poorly-formed'' environment may result in poor performance for an advanced planner. This insight implies an important role played by the environment in multi-agent navigation, and motivates the definition of the problem of environment optimization. We consider two variants: (1) 
\emph{unprioritized environment optimization} and (2) \emph{prioritized environment optimization}.
\begin{problem}[Unprioritized Environment Optimization]\label{def:environmenProblem}
	\emph{Given an environment with an initial obstacle region and a multi-agent system of $n$ agents that are required to reach $n$ destinations in a labeled navigation problem, find a policy that optimizes the environment layout to improve the agents' path efficiency while guaranteeing that they reach their destinations without collision.}
\end{problem}
\begin{problem}[Prioritized Environment Optimization]\label{def:prioritizedEO}
	\emph{Given an environment with an initial obstacle region, a multi-agent system of $n$ agents $\{A_i\}_{i=1}^n$ that are required to reach $n$ labeled destinations, agent priorities $\bbrho$ that are ordered by the index $\rho_1\ge \rho_2 \ge \cdots\ge\rho_n$, and a metric $M(\cdot) : A \to \mathbb{R}$ that measures the navigation performance of agents,\footnote{Without loss of generality, we assume the lower the value of $M(\cdot)$, the better the performance, e.g., corresponding to traveled distance and navigation time.} 
		find a policy that optimizes the environment layout to guarantee that all agents reach their destinations without collision 
		and to improve agents' navigation performance according to their priorities, i.e.,
		\begin{align}\label{eq:boundedPerformance}
			M(A_i) \le C_i~\forall~i=1,\ldots,n
		\end{align} 
		and $C_1 \le C_2 \le \cdots \le C_n$ is ordered by the priorities $\bbrho$.
	}	
\end{problem}

\noindent Problem \ref{def:environmenProblem} considers agents with equal priority and optimizes the environment to improve their performance unbiasedly. Problem \ref{def:prioritizedEO} is an extension of Problem \ref{def:environmenProblem} that accounts for agent priorities. It not only guarantees the success of all navigation tasks but also upper-bounds the worst-case performance of agents by constants $\{C_i\}_{i=1}^n$ [cf. \eqref{eq:boundedPerformance}], which are ordered based on agent priorities $\bbrho$.

In Sec. \ref{sec:Completeness}, we analyze the completeness of multi-agent navigation in Problem \ref{def:environmenProblem} to show the effectiveness of environment optimization. In Sec. \ref{sec:prioritizedEO}, we conduct the completeness analysis for Problem \ref{def:prioritizedEO} and characterize the effects of agent priorities on environment optimization. In Sec. \ref{sec:constrainedEO}, we impose real-world constraints on the environment optimization and formulate the problem mathematically as a constrained stochastic optimization problem. We transform the latter into the dual domain to tackle constraints and combine reinforcement learning with a primal-dual method for solutions. Lastly, we evaluate the proposed approach numerically and 
corroborate theoretical findings in Sec. \ref{sec:experiments}.

\section{Completeness of Unprioritized System}\label{sec:Completeness}

In this section, we provide the completeness analysis of multi-agent navigation for unprioritized environment optimization [Problem \ref{def:environmenProblem}]. Specifically, the multi-agent system may fail to find collision-free trajectories in an environment with an unsatisfactory obstacle region. Environment optimization overcomes this issue by modifying the obstacle region to guarantee the navigation success for all agents. The principle of environment optimization is that the obstacle region $\Delta$ can be controlled (i.e., changed) while its area $|\Delta|$ remains the same, i.e., $|\Delta| = |\hat{\Delta}|$ where $\hat{\Delta}$ is the changed obstacle region. That is, we do not remove or add any obstacles from our environment. In the following, we consider both \emph{offline} environment optimization and \emph{online} environment optimization. 

\subsection{Offline Environment Optimization}

Offline environment optimization optimizes the obstacle region $\Delta$ based on the starting region $\ccalS$ and the destination region $\ccalD$, and completes the optimization before the agents start moving towards destinations. The optimized environment remains static during agent movement. The goal is to find an optimal obstacle region $\Delta^*$ that maximizes the navigation performance of the multi-agent system.

We introduce our notation as follows. Let $\partial \ccalS$, $\partial \ccalD$, $\partial\Delta$ be the boundaries of the regions $\ccalS$, $\ccalD$, $\Delta$ and $B(\bbs,r)$ be a closed disk centered at position $\bbs$ with a radius $r$. This allows us to represent each agent $A_i$ as the disk $B(\bbs_i, r_i)$ where $\bbs_i$ is the central position of the agent. Define $d(\bbs_i, \bbs_j)$ as the closest distance between agents $A_i$, $A_j$, i.e., $d(\bbs_i, \bbs_j) = \min \|\bbz_i - \bbz_j\|_2$ for any $\bbz_i \in B(\bbs_i, r_i)$ and $\bbz_j \in B(\bbs_j, r_j)$, $d(\bbs_i, \partial \ccalS)$ the closest distance between agent $A_i$ and the starting region boundary $\partial \ccalS$, i.e., $d(\bbs_i, \partial \ccalS) = \min \|\bbz_i - \bbz_\ccalS\|_2$ for any $\bbz_i \in B(\bbs_i, r_i)$ and $\bbz_\ccalS \in \partial \ccalS$, and $d(\partial \ccalS, \partial \Delta)$ the closest distance between the starting boundary $\partial \ccalS$ and the obstacle boundary $\partial \Delta$, i.e., $d(\partial \ccalS, \partial \Delta) = \min \|\bbz_\ccalS - \bbz_\Delta\|_2$ for any $\bbz_\ccalS \in  \partial \ccalS$ and $\bbz_\Delta \in \partial \Delta$. Analogous definitions apply for the destination boundary $\partial\ccalD$ and the obstacle boundary $\partial\Delta$. Let $\bbp(t): [0,\infty) \to \mathbb{R}^2$ be the trajectory representing the central position movement of an agent. The trajectory $\bbp_i$ of agent $A_i$ is collision-free w.r.t. the obstacle region $\Delta$ if $d\big(\bbp_i(t), \partial \Delta \big) \ge r_i$ for all $t\ge 0$. The trajectories $\bbp_i$, $\bbp_j$ of two agents $A_i$, $A_j$ are collision-free with each other if $\|\bbp_i(t) - \bbp_j(t)\|_2 \ge r_i + r_j$ for all $t \ge 0$. To conduct the completeness analysis, we need the following assumptions. 
\begin{assumption}\label{as:initialDistribution}
	The initial positions $\{\bbs_i\}_{i=1}^n$ in $\ccalS$ and the destinations $\{\bbd_i\}_{i=1}^n$ in $\ccalD$ 
	are distributed in a way such that the distance between either two agents or the agent and the region boundary is larger than the maximal agent size, i.e., $d(\bbs_i,\bbs_j) \ge 2 \hat{r}$, $d(\bbs_i,\partial \ccalS) \ge 2 \hat{r}$ and $d(\bbd_i,\bbd_j) \ge 2 \hat{r}$, $d(\bbd_i,\partial \ccalD) \ge 2 \hat{r}$ for $i,j\!=\!1,...,n$ with $\hat{r} \!=\! \max_{i=1,...,n} r_i$. 
\end{assumption}
\begin{assumption}\label{as:trajectory}
	The agents $\ccalA$ can compute optimal trajectories from the starting positions $\{\bbs_i\}_{i=1}^n$ to the destinations $\{\bbd_i\}_{i=1}^n$ in the presence of the obstacle region $\Delta$, and any trajectories generated for agents, will be followed precisely.
\end{assumption}

Assumption \ref{as:initialDistribution} indicates that the starting positions $\{\bbs_i\}_{i=1}^n$ (or goal positions $\{\bbd_i\}_{i=1}^n$) are not too close to each other, which is commonly satisfied in real-world navigation tasks. Assumption \ref{as:trajectory} is standard in completeness analysis for multi-agent navigation \cite{vcap2015complete}. With these preliminaries, we show the completeness with the offline environment optimization. \vspace{-4mm}

\begin{theorem}\label{thm:offlineCompleteness}
	\emph{Consider the multi-agent system in the environment $\ccalE$ with starting, destination and obstacle regions $\ccalS$, $\ccalD$ and $\Delta$ satisfying Assumptions \ref{as:initialDistribution}-\ref{as:trajectory}. Let $d_{\max}$ be the maximal distance between $\ccalS$ and $\ccalD$, i.e., $d_{\max} = \max_{\bbz_\ccalS,\bbz_\ccalD} \| \bbz_\ccalS - \bbz_\ccalD \|_2$ for any $\bbz_\ccalS \in \ccalS,\bbz_\ccalD \in \ccalD$. Then, if $\ccalE$ satisfies
		\begin{align}\label{eq:offlineCompletenessCondition}
			|\ccalE \setminus (\Delta \cup \ccalS \cup \ccalD)| \ge 2 n d_{\max} \hat{r}
		\end{align}
		where $\hat{r} = \max_{i=1,\ldots,n} r_i$ is the maximal radius of $n$ agents and $|\cdot|$ represents the region area, the offline environment optimization guarantees that the navigation tasks of all agents can be carried out successfully without collision.} 
\end{theorem}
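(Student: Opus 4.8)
The plan is to give a constructive existence proof: I would exhibit a reshaped obstacle region $\hat\Delta$ with $|\hat\Delta| = |\Delta|$ whose complement contains, for every agent, a collision-free channel from its start to its destination. The governing intuition is a pure area budget --- condition \eqref{eq:offlineCompletenessCondition} is precisely the combined area of $n$ straight ``tubes,'' one per agent, each wide enough for the largest agent and no longer than the start-to-destination diameter $d_{\max}$.

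First I would fix, for each agent $A_i$, a corridor $\ccalC_i$ consisting of all points within distance $\hat{r}$ of the straight segment joining $\bbs_i$ and $\bbd_i$. Such a tube has width $2\hat{r} \ge 2 r_i$, so an agent of radius $r_i$ travelling along the central segment stays at distance at least $r_i$ from $\partial\ccalC_i$; its length is $\|\bbs_i-\bbd_i\|_2 \le d_{\max}$, whence $|\ccalC_i| \le 2\hat{r}\, d_{\max}$ and $\sum_{i=1}^n |\ccalC_i| \le 2 n d_{\max}\hat{r}$. By \eqref{eq:offlineCompletenessCondition} this total does not exceed the free-space area $|\ccalE\setminus(\Delta\cup\ccalS\cup\ccalD)|$, which is invariant under reshaping because $|\hat\Delta|=|\Delta|$. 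Consequently the region $\ccalE\setminus(\bigcup_i\ccalC_i\cup\ccalS\cup\ccalD)$ still has area at least $|\Delta|$, so the entire obstacle mass can be relocated into it; I define $\hat\Delta$ to be any such placement, which yields $|\hat\Delta|=|\Delta|$ while leaving every $\ccalC_i$ obstacle-free. I would then verify the obstacle-avoidance condition $d(\bbp_i(t),\partial\hat\Delta)\ge r_i$ directly, since each agent's central trajectory lies on the axis of its cleared width-$2\hat{r}$ corridor, and invoke Assumption \ref{as:initialDistribution} to guarantee that the corridor mouths in $\ccalS$ and $\ccalD$ are separated by at least $2\hat{r}$, together with Assumption \ref{as:trajectory} to ensure the resulting collision-free trajectory is computed and tracked exactly.

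The main obstacle, and the step I would treat most carefully, is inter-agent collision when corridors of distinct labelled pairs $(\bbs_i,\bbd_i)$ geometrically intersect, which straight tubes will in general do. I would resolve this by ordering the agents and scheduling their traversals so that no two agents occupy a shared crossing simultaneously: an agent holds at a safe point until the crossing is vacated, then proceeds. Because each corridor is individually clear and the environment is static, such a schedule maintains $\|\bbp_i(t)-\bbp_j(t)\|_2 \ge r_i+r_j$ for all $t$ while every agent still reaches its goal, and it leaves the area budget untouched, since scheduling only reuses the same corridors in time. The delicate points are establishing that a deadlock-free ordering exists for an arbitrary crossing pattern and that introducing waiting is consistent with Assumption \ref{as:trajectory}; I would handle these by processing agents in a fixed priority order and parking each completed agent inside $\ccalD$, so that it never re-enters any corridor and the remaining crossings are progressively cleared.
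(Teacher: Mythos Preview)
Your area-budget and corridor construction match the paper's core idea, but there is a genuine gap in the inter-agent step. Your straight tube $\ccalC_i$ from $\bbs_i$ to $\bbd_i$ may pass directly through some other agent's start $\bbs_j$ or goal $\bbd_j$; Assumption~\ref{as:initialDistribution} only separates the \emph{endpoints}, not the interior of the tubes. In that case your priority-ordered scheduling can deadlock: if $A_j$ is processed first and parked at $\bbd_j\in\ccalC_i$, it blocks $A_i$ forever; if $A_i$ is processed first, then $A_j$ is still sitting at $\bbs_j\in\ccalC_i$ and has nowhere obstacle-free to step aside, since the only cleared region available to $A_j$ is its own corridor $\ccalC_j$. ``Processing in a fixed priority order'' does not, by itself, produce a deadlock-free schedule for arbitrary crossing patterns that include start/goal incidences.

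The paper avoids this by constructing, for each $i$, a path that is collision-free not only with respect to $\hat\Delta$ but also with respect to \emph{every other} $\bbs_j$ and $\bbd_j$: it routes $\bbs_i\to\partial\bbs_i\in\partial\ccalS$ inside $\ccalS$, then takes a straight segment $\partial\bbs_i\to\partial\bbd_i$ (rerouting internally through $\ccalS$ or $\ccalD$ whenever the segment re-enters them, using the $2\hat r$ spacing of Assumption~\ref{as:initialDistribution}), then $\partial\bbd_i\to\bbd_i$ inside $\ccalD$. This makes the optimized environment \emph{well-formed} in the sense of \v{C}\'ap et al., and the multi-agent completeness then follows by citing their Theorem~4 on prioritized planning in well-formed infrastructures, rather than by an ad hoc scheduling argument. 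To repair your proof you would need either to reroute your corridors so they avoid all foreign $\bbs_j,\bbd_j$ (which is what the paper does), or to replace the scheduling sketch with an actual invocation of that result.
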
 
\begin{proof}
	See Appendix \ref{Appendix:Theorem1}.
\end{proof}

Theorem \ref{thm:offlineCompleteness} states that the offline environment optimization guarantees the success of all navigation tasks under a mild condition. It does not require any initial ``well-formed" environment but only an obstacle-free area of minimal size [cf. \eqref{eq:offlineCompletenessCondition}], which is common in real-world scenarios. The offline environment optimization depends on the starting region $\ccalS$ and the destination region $\ccalD$, and completes optimizing the obstacle region \emph{before} the agents start to move. This requires a computational overhead before each new navigation task. Moreover, we are interested in generalizing the problem s.t. $\ccalS$ and $\ccalD$ are allowed to be time-varying during deployments.

\subsection{Online Environment Optimization} 

Online environment optimization changes the obstacle region $\Delta$ based on instantaneous states of the agents, e.g., positions, velocities and accelerations, after the agents start moving towards destinations. In other words, the environment is being optimized concurrently with agent movement. The goal is to find the optimal obstacle policy $\pi_o$ that changes the obstacle region to maximize the navigation performance of the multi-agent system.

Specifically, define the starting region as the union of the starting positions $\ccalS = \bigcup_{i=1,...,n}B(\bbs_i, r_i)$ and the destination region as that of the destinations $\ccalD = \bigcup_{i=1,...,n}B(\bbd_i, r_i)$ s.t. $\ccalS \bigcap \ccalD = \emptyset$. Since the obstacle region $\Delta$ now changes continuously, we define the \emph{capacity} of the online environment optimization as the maximal changing rate of the obstacle area $\dot{\Delta}$, i.e., the maximal obstacle area that can be changed per time step. To proceed, we require 
an assumption for the environment with an empty obstacle region $\Delta = \emptyset$. 
\begin{assumption}\label{as:emptyEnvironment}
	For an environment $\ccalE$ with no obstacle region $\Delta = \emptyset$, all navigation tasks can be carried out successfully without collision and the corresponding agent trajectories $\{\bbp_i\}_{i=1}^n$ are non-overlap. 
\end{assumption} 

Assumption \ref{as:emptyEnvironment} is mild because an environment with no obstacle region is the best scenario for the multi-agent navigation. For an environment with a non-empty obstacle region $\Delta$, the following theorem shows the completeness with the online environment optimization. 
\begin{theorem}\label{thm:onlineCompleteness}
	\emph{Consider the multi-agent system of $n$ agents $\{A_i\}_{i=1}^n$ satisfying Assumption \ref{as:emptyEnvironment} with $n$ collision-free trajectories $\{\bbp_i(t)\}_{i=1}^n$ for the environment with an empty obstacle region. Let $\{\bbv_i(t)\}_{i=1}^n$ be the velocities along $\{\bbp_i(t)\}_{i=1}^n$, respectively. For the environment $\ccalE$ with a non-empty obstacle region $\Delta \subset \ccalE \setminus (\ccalS \cup \ccalD)$ and $\ccalE \setminus (\Delta \cup \ccalS \cup \ccalD) \ne \emptyset$, if the capacity of the online environment optimization satisfies
		\begin{align}\label{eq:onlineCompletenessCondition}
			\dot{\Delta} \ge 2 n \hat{r} \| \hat{\bbv} \|_2
		\end{align}
		where $\|\hat{\bbv}\|_2 = \max_{t \in [0,T]}\max_{i=1,...,n}\|\bbv_i(t)\|_2$ is the maximal norm of the velocities and $\hat{r} = \max_{i=1,\ldots,n} r_i$ is the maximal agent radius, the navigation tasks of all agents can be carried out successfully without collision.} 
\end{theorem}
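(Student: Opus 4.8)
The plan is to reuse the collision-free, non-overlapping trajectories $\{\bbp_i(t)\}_{i=1}^n$ guaranteed by Assumption \ref{as:emptyEnvironment} for the empty-obstacle environment, and to show that the non-empty obstacle region can be continuously reshaped so that it never intersects any agent while keeping its area $|\Delta|$ fixed. Inter-agent collision freedom is already granted by Assumption \ref{as:emptyEnvironment} (the $\{\bbp_i\}$ are non-overlapping), so the only thing left to certify is that the obstacle can always ``get out of the way'' quickly enough, and the capacity condition \eqref{eq:onlineCompletenessCondition} is precisely the budget required for this.

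First I would formalize the region the obstacle must avoid. At time $t$ the agents occupy $\ccalR(t) = \bigcup_{i=1}^n B(\bbp_i(t), r_i)$, and the trajectory of $A_i$ is collision-free w.r.t. the obstacle iff $d(\bbp_i(t), \partial\Delta) \ge r_i$ for all $t$, i.e., $\Delta$ stays outside $\ccalR(t)$. The key quantity is the rate at which fresh area enters $\ccalR(t)$ and must therefore be vacated by the obstacle. For a single disk of radius $r_i$ whose center moves at velocity $\bbv_i(t)$, the area of the leading crescent swept in an infinitesimal step $dt$ is bounded by $2 r_i \|\bbv_i(t)\|_2\, dt$ (projected width $2r_i$ times displacement). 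Summing over all agents and bounding each term by the maxima gives an incoming-area rate of at most $\sum_{i=1}^n 2 r_i \|\bbv_i(t)\|_2 \le 2 n \hat{r} \|\hat{\bbv}\|_2$, which is exactly the right-hand side of \eqref{eq:onlineCompletenessCondition}.

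Next I would define the obstacle policy $\pi_o$. At each instant the policy removes obstacle mass from the thin band $\ccalR(t+dt)\setminus \ccalR(t)$ that the agents are about to enter, and redeposits the same amount of area either into the trailing band $\ccalR(t)\setminus\ccalR(t+dt)$ just vacated by the agents or into the nonempty free region $\ccalE\setminus(\Delta\cup\ccalS\cup\ccalD)$. Because the agents have constant size, area leaves $\ccalR(t)$ at the same rate it enters, so this redistribution exactly preserves $|\Delta|$ and keeps the obstacle clear of the agents. The amount of obstacle area that must be moved per unit time is at most the incoming-area rate $2 n \hat{r} \|\hat{\bbv}\|_2$, which by hypothesis \eqref{eq:onlineCompletenessCondition} does not exceed $\dot\Delta$; hence the policy is always realizable within the allotted capacity.

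The main obstacle in making this rigorous is the geometric bookkeeping of the relocation step: I must verify that the displaced obstacle can be placed without itself colliding with an agent and without double-counting area, and handle the degenerate cases where the trailing band cannot absorb all displaced mass (routing the surplus to the free region via $\ccalE\setminus(\Delta\cup\ccalS\cup\ccalD)\neq\emptyset$). Once the policy is shown to maintain $\Delta(t)\cap\ccalR(t)=\emptyset$ for all $t$ with $|\Delta(t)|$ constant, the agents follow their original trajectories $\{\bbp_i(t)\}$ unobstructed and reach their destinations collision-free, completing the argument.
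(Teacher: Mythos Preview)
Your argument is correct and rests on the same area--sweeping estimate as the paper, but you organize it differently. The paper discretizes time into $H$ slices of length $\delta t = T/H$; at each slice it treats the sub-navigation from $\bbp_i((h-1)\delta t)$ to $\bbp_i(h\delta t)$ as an instance of the \emph{offline} problem, invokes the proof of Theorem~\ref{thm:offlineCompleteness} to produce a ``well-formed'' obstacle region $\Delta^{(h)}$, bounds the changed area per slice by $2n\hat r\|\hat\bbv\|_2\,\delta t$, and then passes to the limit $H\to\infty$ so that the per-slice area bound becomes the rate bound \eqref{eq:onlineCompletenessCondition}. You instead work directly in continuous time: you keep the agents on the obstacle-free trajectories $\{\bbp_i\}$ from Assumption~\ref{as:emptyEnvironment} and observe that the instantaneous rate at which fresh area enters $\ccalR(t)=\bigcup_i B(\bbp_i(t),r_i)$ is at most $2n\hat r\|\hat\bbv\|_2$, so a policy that evacuates obstacle mass from the leading crescent and redeposits it in the trailing one (or in the nonempty free region) never exceeds the capacity $\dot\Delta$. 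Your route is more economical---it avoids the detour through Theorem~\ref{thm:offlineCompleteness} and the limiting argument---while the paper's slicing makes the reduction to the already-proved offline case explicit and yields the intermediate inequality \eqref{eq:proofTheorem24} as a by-product. The one place you should be slightly more careful is the initial instant and the degenerate relocation cases you flag: note that $\Delta\subset\ccalE\setminus(\ccalS\cup\ccalD)$ gives $\Delta(0)\cap\ccalR(0)=\emptyset$, and that the symmetric treatment of leading and trailing bands matches the paper's convention that moving area $A$ counts as change $A$ (cf.\ the factor $1/2$ in the paper's symmetric-difference bound).
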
 
\begin{proof}
	See Appendix \ref{proof:Thm2}.
\end{proof}

Theorem \ref{thm:onlineCompleteness} states that the online environment optimization guarantees the success of all navigation tasks as well as its offline counterpart. The result is established under a mild condition on the changing rate of the obstacle region [cf. \eqref{eq:onlineCompletenessCondition}] rather than the initial obstacle-free area [cf. \eqref{eq:offlineCompletenessCondition}]. This is due to the fact that the online environment optimization changes the obstacle region concurrently with agent movement. Hence, it improves navigation performance only if timely actions can be taken based on instantaneous system states. The completeness analysis in Theorems \ref{thm:offlineCompleteness}-\ref{thm:onlineCompleteness} demonstrates theoretically the effectiveness of the proposed environment optimization, in improving the performance of the multi-agent navigation.

\section{Completeness of Prioritized System}\label{sec:prioritizedEO}

Unprioritized environment optimization guarantees the success of navigation tasks in scenarios with sufficient ``resources'', i.e., a sufficiently large obstacle-free area [Thm. \ref{thm:offlineCompleteness}] and a sufficiently large obstacle changing rate [Thm. \ref{thm:onlineCompleteness}]. However, this may not be the case for scenarios with reduced resources. In the latter circumstances, the environment needs to be optimized with respect to the conflicts of interest among different agents, and allocates resources according to \textit{priorities} in order to negotiate these conflicts. 

With the formulation of prioritized environment optimization [Problem \ref{def:prioritizedEO}], we overcome this issue by incorporating agent priorities into environment optimization. That is, we put more emphasis on agents with higher priorities to guide the negotiation. In the sequel, we show that prioritized environment optimization is capable of guaranteeing the success of all navigation tasks with reduced resources by sacrificing the navigation performance of agents with lower priorities. Analogous to Section \ref{sec:Completeness}, we analyze the completeness of offline and online prioritized environment optimization, respectively, which characterizes the explicit effects of agent priorities on the navigation performance.

\subsection{Offline Prioritized Environment Optimization} 

Offline environment optimization optimizes the obstacle region $\Delta$ before 
navigation and guarantees the success of all navigation tasks if $|\ccalE \setminus (\Delta \cup \ccalS \cup \ccalD)| \ge 2 n d_{\max} \hat{r}$ [Thm. \ref{thm:offlineCompleteness}]. We consider the reduced-resource scenario, where the initial obstacle-free area is smaller than $2 n d_{\max} \hat{r}$, i.e., $|\ccalE \setminus (\Delta \cup \ccalS \cup \ccalD)| < 2 n d_{\max} \hat{r}$. In this circumstance, offline prioritized environment optimization is able to maintain the success of all navigation tasks, at the cost of lower-priority agent performance. 

\begin{figure}
	\centering
	\includegraphics[width=0.7\linewidth]{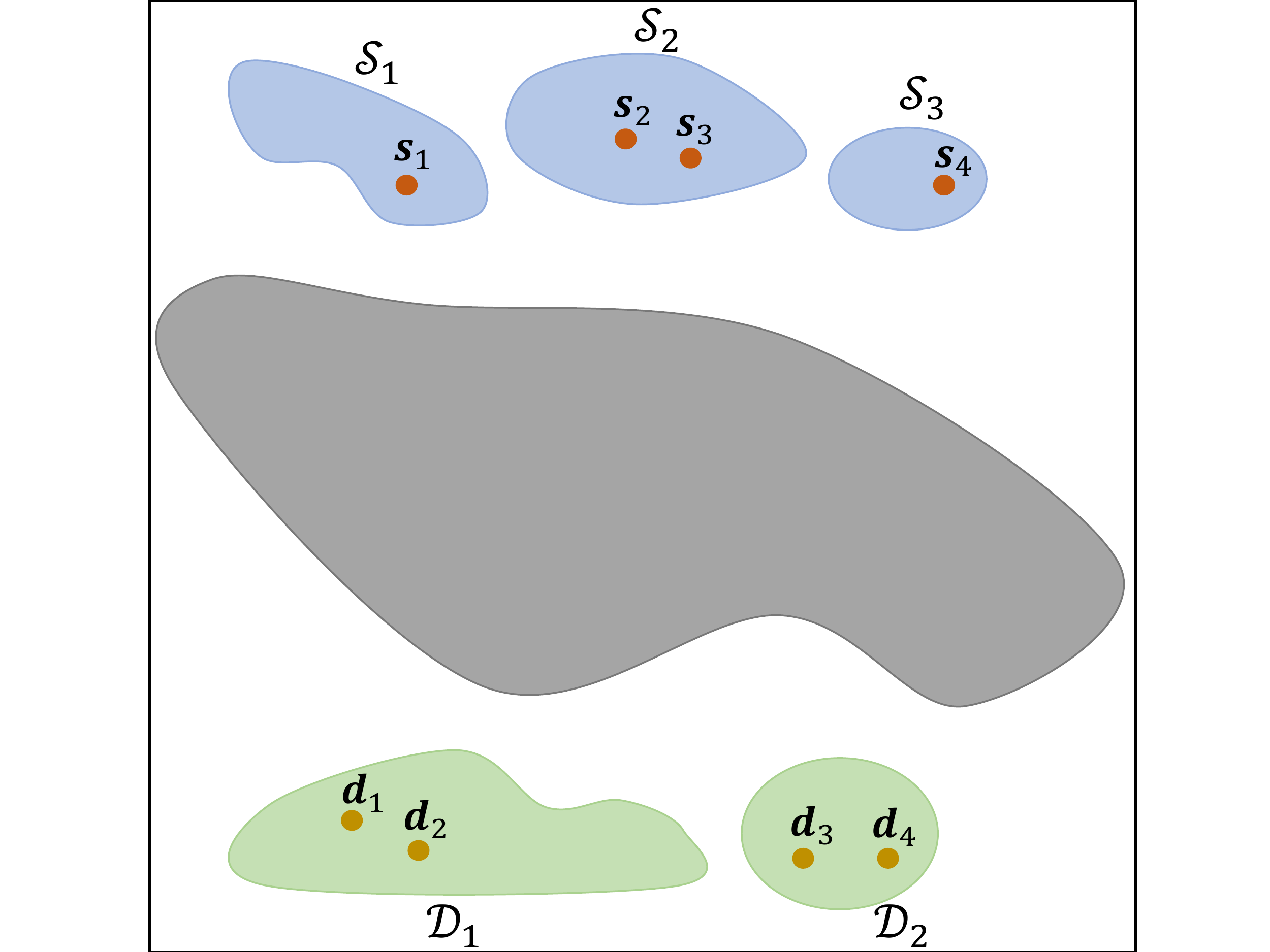}
	\caption{Example environment with $4$ starting positions $\{\bbs_i\}_{i=1}^4$ and $4$ destinations $\{\bbd_i\}_{i=1}^4$. There are $C_{\ccalS}=3$ self-connected starting components $\ccalS_1, \ccalS_2, \ccalS_3$ in the starting region $\ccalS$ (blue), $C_{\ccalD}=2$ self-connected destination components $\ccalD_1, \ccalD_2$ in the destination region $\ccalD$ (green), and one obstacle region $\Delta$ (grey). Agent $A_1$ is initialized in the starting component $\ccalS_{A_1} = \ccalS_1$ and moves towards its goal in the destination component $\ccalD_{A_1} = \ccalD_1$. Analogous notation applies for the other agents $A_2$, $A_3$ and $A_4$.}\label{fig:environmentExample1}\vspace{-4mm}
\end{figure}

We introduce additional notation as follows. Define the starting region $\ccalS$ as \emph{disconnected} if there exist two points in $\ccalS$ that cannot be connected by a path inside $\ccalS$, and \emph{self-connected} otherwise. For a disconnected $\ccalS$, let $\{\ccalS_{l}\}_{l=1}^{C_\ccalS}$ be the least number of self-connected components in $\ccalS$ s.t. their union covers $\ccalS$, i.e., $\bigcup_{l=1}^{C_\ccalS} \ccalS_l = \ccalS$.\footnote{For a self-connected $\ccalS$, we have $C_\ccalS = 1$ and $\ccalS_1 = \ccalS$.} 
Assume that agent $A_i$ is initialized in one of these component $\ccalS_{A_i}\in \{\ccalS_{l}\}_{l=1}^{C_\ccalS}$ for $i=1,\ldots,n$. Analogous definitions apply for the destination region $\ccalD$ -- see Fig. \ref{fig:environmentExample1} for an example environment. Let $d_{\max}(\ccalS_{l_1}, \ccalS_{l_2})$ be the maximal distance between the components $\ccalS_{l_1}$ and $\ccalS_{l_2}$ for $l_1 \ne l_2\in\{1,\ldots,C_\ccalS\}$, i.e., $d_{\max}(\ccalS_{l_1}, \ccalS_{l_2}) = \max \|\bbz_{\ccalS_{l_1}} - \bbz_{\ccalS_{l_2}}\|_2$ for any $\bbz_{\ccalS_{l_1}} \in \ccalS_{l_1}$ and  $\bbz_{\ccalS_{l_2}} \in \ccalS_{l_2}$. Denote by $\bbD_\ccalS \in \mathbb{R}^{C_\ccalS \times C_\ccalS}$ the matrix that collects the distances between the components $\{\ccalS_{l}\}_{l=1}^{C_\ccalS}$ with the $(l_1,l_2)$th entry $[\bbD_{\ccalS}]_{l_1 l_2} = d_{\max}(\ccalS_{l_1}, \ccalS_{l_2})$ for $l_1,l_2=1,\ldots,C_\ccalS$ and $[\bbD_{\ccalS}]_{ll} = 0$ for $l=1,\ldots,C_\ccalS$, $\bbD_\ccalD \in \mathbb{R}^{C_\ccalD \times C_\ccalD}$ the matrix that collects the distances between the components $\{\ccalD_k\}_{k=1}^{C_\ccalD}$, and $\bbD_{\ccalS,\ccalD} \in \mathbb{R}^{C_\ccalS \times C_\ccalD}$ the matrix that collects the distances between $\{\ccalS_{l}\}_{l=1}^{C_\ccalS}$ and $\{\ccalD_k\}_{k=1}^{C_\ccalD}$. For completeness analysis, we assume the following.
\begin{assumption}\label{as:distance}
	The distance between the starting components $\{\ccalS_{l}\}_{l=1}^{C_\ccalS}$ (or destination components $\{\ccalD_k\}_{k=1}^{C_\ccalD}$) is significantly smaller than the distance between the starting components $\{\ccalS_{l}\}_{l=1}^{C_\ccalS}$ and the destination components $\{\ccalD_k\}_{k=1}^{C_\ccalD}$, i.e., for any entry $d_{\ccalS}$ of $\bbD_{\ccalS}$, entry $d_{\ccalD}$ of $\bbD_{\ccalD}$ and entry $d_{\ccalS,\ccalD}$ of $\bbD_{\ccalS,\ccalD}$, it holds that $d_{\ccalS} + d_{\ccalD} \le d_{\ccalS,\ccalD}$.
\end{assumption}

Assumption \ref{as:distance} is reasonable because the starting positions are typically close to each other but far away from the destinations in real-world applications. With these in place, we characterize the completeness with the offline prioritized environment optimization in the following theorem.

\begin{theorem}\label{thm:prioritizedOfflineCompleteness}
	\emph{Consider the multi-agent system of $n$ agents $\{A_i\}_{i=1}^n$ in the environment $\ccalE$ with the same settings as Theorem \ref{thm:offlineCompleteness} and satisfying Assumption \ref{as:distance}. Let $\bbrho$ be the agent priorities ordered by the index $\rho_1\ge \rho_2 \ge \cdots \ge \rho_n$, agent $A_i$ be in the starting component $\ccalS_{A_i} \in \{\ccalS_{l}\}_{l=1}^{C_\ccalS}$ and assigned to the destination component $\ccalD_{A_i} \in \{\ccalD_{k}\}_{k=1}^{C_\ccalD}$ for $i\!=\!1,...,n$. 
		Then, if $\ccalE$ satisfies  	
		\begin{align}\label{eq:prioritizedOfflineCompletenessCondition}
			|\ccalE \!\setminus\! (\Delta \!\cup\! \ccalS \!\cup\! \ccalD)| \!\ge\! 2 d_{A_1} \hat{r} \!+\! \sum_{i=2}^n \!2 (d_{A_i,\ccalS}\!+\!d_{A_i,\ccalD})\hat{r}
		\end{align}		
		where $d_{A_1}$ is the distance between $\ccalS_{A_1}$ and $\ccalD_{A_1}$, i.e., $d_{A_1} = d_{\max} (\ccalS_{A_1}, \ccalD_{A_1})$, $d_{A_i,\ccalS}$ the minimal distance between $\ccalS_{A_i}$ and $\{\ccalS_{A_j}\}_{j=1}^{i-1}$, i.e., $d_{A_i,\ccalS} = \min \{ d_{\max} (\ccalS_{A_i}, \ccalS_{A_1}), \ldots, d_{\max} (\ccalS_{A_i}, \ccalS_{A_{i-1}})\}$, $d_{A_i,\ccalD}$ the minimal distance between $\ccalD_{A_i}$ and $\{\ccalD_{A_j}\}_{j=1}^{i-1}$, i.e., $d_{A_i,\ccalD} = \min \{ d_{\max} (\ccalD_{A_i}, \ccalD_{A_1}), \ldots, d_{\max} (\ccalD_{A_i}, \ccalD_{A_{i-1}})\}$, the offline prioritized environment optimization guarantees that the navigation tasks of all agents can be carried out successfully without collision.} \emph{Moreover, the traveled distance of agent $A_i$ outside $\ccalS$ and $\ccalD$ is bounded by
		\begin{align}\label{eq:performanceOfflinePriority}
			|\bbp_i| \le d_{A_1} + \sum_{j=2}^i (d_{A_i,\ccalS}+d_{A_i,\ccalD}) = C_i
		\end{align}
		for $i=1,\ldots,n$, which increases with the decreasing of priority, i.e., the distance bounds $C_1 \le \cdots \le C_n$ with the priorities $p_1 \ge \cdots \ge p_n$.
	}
\end{theorem}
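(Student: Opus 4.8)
The plan is to build on the corridor construction underlying Theorem \ref{thm:offlineCompleteness}, but to process the agents \emph{sequentially} in order of decreasing priority and to let lower-priority agents \emph{reuse} corridors already carved for higher-priority ones. Concretely, I would first clear a single straight ``trunk'' corridor of width $2\hat{r}$ joining $\ccalS_{A_1}$ to $\ccalD_{A_1}$; this consumes obstacle-free area $2 d_{A_1}\hat{r}$ and lets the highest-priority agent travel along a path of length $\le d_{A_1}$. Then, for each $i=2,\ldots,n$ in turn, I would add two short ``feeder'' corridors of width $2\hat{r}$: one joining $\ccalS_{A_i}$ to the nearest already-connected starting component (length $d_{A_i,\ccalS}$) and one joining the nearest already-connected destination component to $\ccalD_{A_i}$ (length $d_{A_i,\ccalD}$). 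Summing the trunk and all feeder areas yields exactly the right-hand side of \eqref{eq:prioritizedOfflineCompletenessCondition}, so the hypothesis guarantees enough obstacle-free area to clear all of them; as in Theorem \ref{thm:offlineCompleteness}, the displaced obstacles are repacked into the remaining free region so that $|\Delta|$ is unchanged. Note this total is far below the $2 n d_{\max}\hat{r}$ of \eqref{eq:offlineCompletenessCondition}, which is precisely the saving obtained by sharing corridors.

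Next I would establish navigation feasibility. The feeders connect every starting component into a tree rooted at $\ccalS_{A_1}$ and every destination component into a tree rooted at $\ccalD_{A_1}$, and the trunk links the two roots, so the cleared region is a connected, acyclic network of width-$2\hat{r}$ corridors joining each $\ccalS_{A_i}$ to its $\ccalD_{A_i}$. The key structural observation is that on every segment all agents move in a common direction: inbound toward $\ccalS_{A_1}$ on the starting tree, forward along the trunk, and outbound from $\ccalD_{A_1}$ on the destination tree. Hence no head-on conflicts occur, and only merging/following conflicts remain, which the priority order resolves without deadlock on finitely many agents over an acyclic network; Assumption \ref{as:trajectory} then lets the planner realize such collision-free trajectories exactly. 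Assumption \ref{as:distance} is what makes the feeder routing legitimate: $d_{\ccalS}+d_{\ccalD}\le d_{\ccalS,\ccalD}$ guarantees that reaching a neighboring component and reusing its corridor never costs more than a fresh direct tube, so the reuse is always advantageous.

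For the performance bound \eqref{eq:performanceOfflinePriority}, I would induct on $i$ using the tree structure. Agent $A_i$'s path decomposes into the start-side path from $\ccalS_{A_i}$ up the starting tree to $\ccalS_{A_1}$, the trunk, and the destination-side path from $\ccalD_{A_1}$ down to $\ccalD_{A_i}$. Each edge of the start-side path is a feeder of length $d_{A_j,\ccalS}$ for a distinct $j\le i$, so this portion has length $\le \sum_{j=2}^i d_{A_j,\ccalS}$, and symmetrically on the destination side; adding the trunk length $d_{A_1}$ gives $|\bbp_i|\le d_{A_1}+\sum_{j=2}^i (d_{A_j,\ccalS}+d_{A_j,\ccalD}) = C_i$ (the summand index in the statement being understood as $j$). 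Since $C_i - C_{i-1} = d_{A_i,\ccalS}+d_{A_i,\ccalD}\ge 0$, the bounds obey $C_1\le\cdots\le C_n$, matching the priority order $\rho_1\ge\cdots\ge\rho_n$; note also that $|\bbp_i|$ is a spatial path length, so any temporal queueing used for collision avoidance does not affect it.

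The main obstacle I anticipate is the collision-free feasibility step rather than the area or distance accounting: unlike Theorem \ref{thm:offlineCompleteness}, where each agent owns a disjoint tube, here agents share corridor segments, so I must argue rigorously that a collision-free space-time realization genuinely exists on the shared network and that priority-ordered traversal cannot deadlock. The properties I would lean on to close this are the unidirectional flow on every segment (ruling out head-on conflicts) together with the one-agent-per-cross-section width and the acyclic (tree-plus-trunk) topology of the cleared region, which reduce the problem to a merging schedule that the priority order always resolves.
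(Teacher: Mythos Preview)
Your corridor construction --- a trunk for $A_1$ and feeder pairs for $A_2,\ldots,A_n$ in priority order, reusing earlier corridors --- is exactly the paper's argument, and your area accounting matches. The paper proceeds recursively: build the collision-free path $\bbp_1$ for $A_1$ using $2d_{A_1}\hat r$ of free area (as in Theorem~\ref{thm:offlineCompleteness}); then for $A_2$ connect $\bbs_2$ to $\bbs_1$ inside $\ccalS$ (or via a feeder of area $\le 2d_{A_2,\ccalS}\hat r$ if the components differ), reuse $\bbp_1$, and connect $\bbd_1$ to $\bbd_2$ symmetrically; iterate for $A_3,\ldots,A_n$.

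The one place you diverge is the feasibility step, which you correctly flag as the main obstacle. You attempt a direct argument about unidirectional flow and priority scheduling on the shared tree; the paper avoids this entirely. Once each $\bbs_i$--$\bbd_i$ pair is joined by a path that is collision-free with respect to the optimized obstacle region \emph{and the other initial positions and destinations}, the environment is ``well-formed'' in the sense of \cite{vcap2015complete}, and the paper simply invokes Theorem~4 of that reference (together with Assumption~\ref{as:trajectory}) to conclude that all navigation tasks succeed without collision. So your tree-topology/merging-schedule argument is unnecessary: what you need to verify is only that the concatenated paths $\bbp_i$ avoid every other $\bbs_j$ and $\bbd_j$, which is what Assumption~\ref{as:initialDistribution} buys you inside $\ccalS$ and $\ccalD$ (the same mechanism used in the proof of Theorem~\ref{thm:offlineCompleteness}).

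Your treatment of the distance bound \eqref{eq:performanceOfflinePriority} is actually more explicit than the paper's; the paper leaves it implicit in the recursive construction, whereas you spell out that $A_i$'s path outside $\ccalS\cup\ccalD$ consists of at most the trunk plus a subset of feeders with indices $j\le i$, giving the stated upper bound and the monotonicity $C_1\le\cdots\le C_n$.
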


\begin{proof}
	See Appendix \ref{proof:Theorem3}.
\end{proof}

Theorem \ref{thm:prioritizedOfflineCompleteness} states that the offline prioritized environment optimization guarantees the success of all navigation tasks and requires less obstacle-free area  
compared to its unprioritized counterpart, i.e., the lower bound in \eqref{eq:prioritizedOfflineCompletenessCondition} is smaller than that in \eqref{eq:offlineCompletenessCondition} [Asm. \ref{as:distance}]. However, this improvement is obtained by sacrificing the navigation performance of the agents with lower priorities. That is, these agents are no longer moving along the shortest path and the traveled distance increases with the decreasing of agent priority [cf. \eqref{eq:performanceOfflinePriority}]. It corresponds to Problem \ref{def:prioritizedEO} of prioritized environment optimization, where the metric $M(\cdot)$ is the traveled distance. The latter shows a trade-off between the navigation completeness of all agents and the individual performance of lower-priority agents. 

Next, we consider scenarios with further reduced resources, i.e., we show the partial completeness when the obstacle-free area is smaller than that required in \eqref{eq:prioritizedOfflineCompletenessCondition}. 

\begin{corollary}\label{coro:incompletePrioritizedOfflineEO}
	\emph{Consider the same setting as in Theorem \ref{thm:prioritizedOfflineCompleteness}. If the environment $\ccalE$ satisfies 	
		\begin{align}\label{eq:prioritizedOfflineIncompletenessCondition}
			&2 d_{A_1} \hat{r} + \sum_{i=2}^{b} 2 (d_{A_i,\ccalS}+d_{A_i,\ccalD})\hat{r}\\
			&\le |\ccalE \setminus (\Delta \cup \ccalS \cup \ccalD)| < 2 d_{A_1} \hat{r} + \sum_{i=2}^{b+1} 2 (d_{A_i,\ccalS}+d_{A_i,\ccalD})\hat{r} \nonumber
		\end{align}	
		where $b$ is an integer with $1 \le b < n$, the offline prioritized environment optimization guarantees that the navigation tasks of the agents with highest $b$ priorities $\{A_i\}_{i=1}^b$ can be carried out successfully without collision.} 
	
	\emph{For the rest of the agents $\{A_j\}_{j=b+1}^n$, if the starting position and the destination of agent $A_j$ is within the same connected components in $\ccalS$ and $\ccalD$ as one of the agents $\{A_i\}_{i=1}^b$, i.e., $\bbs_j \in \ccalS_{A_i}$ and $\bbd_j \in \ccalD_{A_i}$ for any $i \in \{1,\ldots,b\}$, 
		the navigation task of agent $A_j$ can be carried out successfully without collision as well.}
\end{corollary}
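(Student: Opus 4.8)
The plan is to reduce both claims to the corridor-building construction already established in the proof of Theorem \ref{thm:prioritizedOfflineCompleteness}, which processes agents in order of decreasing priority and carves each a collision-free corridor of width $2\hat{r}$. The essence of that construction is that the highest-priority agent $A_1$ consumes obstacle-free area $2 d_{A_1}\hat{r}$ for a direct corridor from $\ccalS_{A_1}$ to $\ccalD_{A_1}$, while every subsequent agent $A_i$ spends only the incremental area $2(d_{A_i,\ccalS}+d_{A_i,\ccalD})\hat{r}$ needed to splice a short connector from $\ccalS_{A_i}$ into the nearest already-served starting component and a matching connector from the corresponding destination component out to $\ccalD_{A_i}$, reusing an earlier agent's corridor in between. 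Assumption \ref{as:distance} guarantees these connectors are cheaper than a fresh corridor, which is why the bound \eqref{eq:prioritizedOfflineCompletenessCondition} is smaller than \eqref{eq:offlineCompletenessCondition}.

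For the top-$b$ agents I would run exactly this incremental procedure but halt after the $b$-th agent. Summing the per-agent costs yields precisely $2 d_{A_1}\hat{r} + \sum_{i=2}^{b} 2(d_{A_i,\ccalS}+d_{A_i,\ccalD})\hat{r}$, which is at most $|\ccalE \setminus (\Delta \cup \ccalS \cup \ccalD)|$ by the left inequality in \eqref{eq:prioritizedOfflineIncompletenessCondition}. Hence $\Delta$ can be reshaped (preserving $|\Delta|$) to host all $b$ corridors at once, and Assumption \ref{as:trajectory} lets each of $A_1,\dots,A_b$ follow its corridor to completion. The right inequality in \eqref{eq:prioritizedOfflineIncompletenessCondition} shows the leftover budget is strictly below the extra $2(d_{A_{b+1},\ccalS}+d_{A_{b+1},\ccalD})\hat{r}$ that the construction would demand for $A_{b+1}$, which is exactly why no guarantee is claimed for a generic agent beyond the first $b$.

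For a lower-priority agent $A_j$ with $j>b$ satisfying $\bbs_j \in \ccalS_{A_i}$ and $\bbd_j \in \ccalD_{A_i}$ for some $i \le b$, I would route $A_j$ along the path already carved for $A_i$. Since $\ccalS_{A_i}$ is self-connected and both $\bbs_j$ and the mouth of $A_i$'s corridor lie in it, $A_j$ can travel from $\bbs_j$ to that mouth entirely within $\ccalS$; it then traverses $A_i$'s corridor and finally moves from the corridor's exit to $\bbd_j$ entirely within the self-connected component $\ccalD_{A_i}\subset\ccalD$. None of this consumes obstacle-free area outside $\ccalS\cup\ccalD$, so the budget accounting of the first part is untouched and $|\bbp_j|$ inherits the same bound as $A_i$ in \eqref{eq:performanceOfflinePriority}.

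The main obstacle is inter-agent collision avoidance rather than obstacle avoidance: because each corridor has width only $2\hat{r}$, the center of an agent must ride the midline, so $A_i$ and $A_j$ (and any others sharing the corridor) cannot pass side by side. I would dispatch this as in Theorem \ref{thm:prioritizedOfflineCompleteness} — the environment is static and trajectories are executed precisely (Assumption \ref{as:trajectory}), so the finitely many agents sharing a corridor admit a traversal order with safe temporal spacing, each entering only once the shared segment is clear. The clearance needed to queue at a corridor mouth inside a component is underwritten by Assumption \ref{as:initialDistribution}. The delicate point to verify is that reaching the mouth inside $\ccalS_{A_i}$ and reaching $\bbd_j$ inside $\ccalD_{A_i}$ genuinely stay within $\ccalS$ and $\ccalD$, so that the disjoint-corridor area bookkeeping of the first part is never silently violated.
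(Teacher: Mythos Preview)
Your proposal is correct and follows essentially the same route as the paper: truncate the incremental corridor construction of Theorem~\ref{thm:prioritizedOfflineCompleteness} at the $b$-th agent using the left inequality of \eqref{eq:prioritizedOfflineIncompletenessCondition}, and for a qualifying $A_j$ with $j>b$ connect $\bbs_j$ to the boundary point $\partial\bbs_i$ inside the self-connected component $\ccalS_{A_i}$, reuse $A_i$'s corridor, and exit analogously inside $\ccalD_{A_i}$. The only cosmetic difference is that the paper dispatches the inter-agent collision issue by declaring the resulting environment ``well-formed'' and invoking Theorem~4 of \cite{vcap2015complete} (as in Theorem~\ref{thm:offlineCompleteness}), whereas you spell out the temporal-sequencing argument directly; your extra remark that $|\bbp_j|$ inherits $A_i$'s bound from \eqref{eq:performanceOfflinePriority} is a true bonus not stated in the corollary.
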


\begin{proof}
	See Appendix \ref{proof:Corollary1}.	
\end{proof}

Corollary \ref{coro:incompletePrioritizedOfflineEO} demonstrates that the success of all navigation tasks may not be guaranteed if the obstacle-free area is smaller than the lower bound in \eqref{eq:prioritizedOfflineCompletenessCondition}. In this case, the prioritized environment optimization emphasizes the navigation tasks of the agents with higher priorities $\{A_i\}_{i=1}^b$, but overlooks the ones with lower priorities $\{A_j\}_{j=b+1}^n$. Moreover, the navigation tasks of $\{A_j\}_{j=b+1}^n$ can be guaranteed only if their starting and goal positions are within the same connected components as one of the higher-priority agents $\{A_i\}_{i=1}^b$. This implies that the lower-priority agents can benefit from the higher-priority agents if they have a ``similar'' navigation task. 

\subsection{Online Prioritized Environment Optimization} 

Online environment optimization changes the obstacle region $\Delta$ during 
navigation and guarantees the success of all navigation tasks if $\dot{\Delta} \ge 2 n \hat{r} \| \hat{\bbv} \|_2$ [Thm. \ref{thm:onlineCompleteness}]. We similarly consider the reduced-resource scenario, where the capacity of the obstacle region is smaller than $2 n \hat{r} \| \hat{\bbv} \|_2$, i.e., $\dot{\Delta} < 2 n \hat{r} \| \hat{\bbv} \|_2$. In this circumstance, 
online prioritized environment optimization can guarantee the success of all navigation tasks, with a performance loss of lower-priority agents. 
\begin{theorem}\label{thm:prioritizedOnlineCompleteness}
	\emph{Consider the multi-agent system of $n$ agents $\{A_i\}_{i=1}^n$ in an environment $\ccalE$ with the same settings as Theorem \ref{thm:onlineCompleteness}. Let $\bbrho$ be the agent priorities ordered by the index 
		$\rho_1 \ge \rho_2 \ge \cdots \ge \rho_n$ 
		and the reduced capacity of environment optimization be 
		\begin{align}\label{eq:prioritizedOnlineCompletenessCondition}
			2 b \hat{r} \| \hat{\bbv} \|_2 \le \dot{\Delta} < 2 (b+1) \hat{r} \| \hat{\bbv} \|_2
		\end{align}	
		where $b$ is an integer with $1 \le b < n$. Then, online prioritized environment optimization guarantees that the navigation tasks of all agents can be carried out successfully without collision. Moreover, the navigation time of agent $A_i$ can be bounded proportionally to the inverse of its priority
		\begin{align}\label{eq:prioritizedNavigationTime}
			T_i \le \frac{C_T}{\rho_i},~\for~i=1,\ldots,n
		\end{align} 
		where $C_T$ is a time constant determined by $b$.}
\end{theorem}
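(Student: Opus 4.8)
The plan is to reduce the prioritized online problem to a priority-ordered \emph{time-multiplexing} of the single-batch clearing argument already established for Theorem \ref{thm:onlineCompleteness}. First I would recall the core mechanism behind that theorem: to let one agent $A_i$ track its empty-environment trajectory $\bbp_i(t)$ collision-free, it suffices to continuously reshape $\Delta$ so that the tube of width $2r_i$ swept ahead of $A_i$ stays obstacle-free, and the obstacle area entering that tube per unit time is at most $2r_i\|\bbv_i\|_2 \le 2\hat{r}\|\hat{\bbv}\|_2$; conservation of $|\Delta|$ is maintained by relocating displaced obstacle mass into the nonempty free region $\ccalE\setminus(\Delta\cup\ccalS\cup\ccalD)$. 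Hence a capacity $\dot\Delta \ge 2b\hat{r}\|\hat{\bbv}\|_2$ is exactly enough to clear the tubes of $b$ agents simultaneously.

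Second, I would define the obstacle policy $\pi_o$ as a batched schedule. Group the agents by priority into blocks of size $b$, so that batch $m$ contains the indices $(m-1)b+1,\dots,mb$ of the order $\rho_1\ge\cdots\ge\rho_n$, and activate one batch at a time: the $b$ agents of the current batch track their trajectories $\{\bbp_i\}$ while the full capacity is spent opening their tubes, and every other agent remains parked at a safe location (its start, or its already-reached destination). Because the trajectories $\{\bbp_i\}$ are mutually non-overlapping by Assumption \ref{as:emptyEnvironment}—which I read as the radius-$\hat{r}$ tubes around distinct $\bbp_i$ being pairwise disjoint—each agent always stays inside its own tube, so \emph{any} schedule is automatically inter-agent collision-free, while obstacle safety holds batch-by-batch by the clearing argument of the first step. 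This already yields completeness: every agent is eventually activated and reaches its destination without collision.

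Third, I would convert the schedule into the time bound \eqref{eq:prioritizedNavigationTime}. Let $\tau$ be a uniform upper bound on the single traversal time of any empty-environment trajectory. Agent $A_i$ lies in batch $m_i=\lceil i/b\rceil$ and can begin only after the preceding $m_i-1$ batches finish, so $T_i \le m_i\,\tau = \lceil i/b\rceil\,\tau$. Since the index order coincides with the weakly decreasing priority order, $\lceil i/b\rceil$ is nondecreasing while $\rho_i>0$ is nonincreasing; setting $C_T := \tau\,\max_{1\le i\le n}\rho_i\lceil i/b\rceil$ (finite, and dependent on $b$ through the batch size) gives $T_i \le \lceil i/b\rceil\tau \le C_T/\rho_i$ for every $i$, as claimed, so lower-priority agents incur proportionally longer navigation times.

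The main obstacle I expect is not the combinatorial bookkeeping but the geometric hand-off underlying the first two steps: I must argue that the displaced obstacle mass can always be parked in the free region without ever re-entering an \emph{active} tube, and that when a batch terminates its cleared corridor can be released and refilled so the next batch's tubes open within the same capacity budget $\dot\Delta$. Making the ``park the waiting agents safely'' step precise likewise hinges on a parked agent's endpoint never lying on a later batch's trajectory, which I would secure directly from the pairwise tube-disjointness in Assumption \ref{as:emptyEnvironment}. The remaining ingredients—area conservation and the per-batch rate accounting—carry over verbatim from Theorem \ref{thm:onlineCompleteness}.
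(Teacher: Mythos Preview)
Your proposal is correct but structurally different from the paper's argument, and the comparison is worth noting. The paper also reduces to Theorem~\ref{thm:onlineCompleteness}, but it does so through \emph{fine-grained time slicing}: the horizon is cut into $H$ slices of length $2\delta t$, at each slice the capacity \eqref{eq:prioritizedOnlineCompletenessCondition} advances the sub-tasks of exactly $b$ agents while the rest hold still, and the selection schedule is designed so that agent $A_i$ accumulates its $H$ required slices within $H_i=\lceil H_1\rho_1/\rho_i\rceil$ total slices. Feasibility of such a schedule is encoded in a counting inequality that can always be met by choosing $H_1$ large, and the bound $T_i\le C_T/\rho_i$ with $C_T=2H_1\rho_1\delta t$ then drops out directly after taking $H\to\infty$. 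Your route instead uses \emph{coarse sequential batching}: process priority blocks of size $b$ one after another, obtain the rank-based bound $T_i\le\lceil i/b\rceil\tau$, and only afterwards absorb the priorities by defining $C_T:=\tau\max_i\rho_i\lceil i/b\rceil$.

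Both arguments are valid; yours is shorter and avoids the limiting procedure, while the paper's interleaved schedule is closer in spirit to a genuinely online policy (obstacles and agents move concurrently throughout) and bakes the $1/\rho_i$ scaling into the schedule itself rather than into the constant. The geometric hand-off worries you flag---keeping displaced obstacle mass off active tubes and off parked agents, and releasing corridors between batches---are handled in the paper at the same level of rigor as in Theorem~\ref{thm:onlineCompleteness}: parked agents sit at starts or goals, which lie in $\ccalS\cup\ccalD$ and are therefore excluded from $\Delta$ by construction, and the non-overlap clause of Assumption~\ref{as:emptyEnvironment} covers inter-agent safety. So your identification of those points as the delicate steps is accurate, and your resolution via tube disjointness matches what the paper implicitly relies on.
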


\begin{proof}
	See Appendix \ref{proof:Theorem4}.
\end{proof}

Theorem \ref{thm:prioritizedOnlineCompleteness} states that the online prioritized environment optimization guarantees the success of all navigation tasks and requires a smaller obstacle changing rate compared to its unprioritized counterpart, i.e., the lower bound in \eqref{eq:prioritizedOnlineCompletenessCondition} is less than that in \eqref{eq:onlineCompletenessCondition}. This improvement comes with the performance loss of lower-priority agents, i.e., the agents with higher priorities reach the destinations faster than the ones with lower priorities [cf. \eqref{eq:prioritizedNavigationTime}]. We attribute this behavior to the fact that the online prioritized environment optimization changes the obstacle region concurrently to agent movement and is capable of continuously tuning the changing strategy based on the agent priorities during navigation. This corresponds to Problem \ref{def:prioritizedEO} of prioritized environment optimization, where the metric $M(\cdot)$ is the navigation time.

We follow to consider scenarios where all agents are required to reach their destinations \emph{within a required time}, and analyze the partial completeness of the online prioritized environment optimization in this setting.

\begin{corollary}\label{coro:IncompletePrioritizedOnlineEO}
	Consider the same setting as in Theorem \ref{thm:prioritizedOnlineCompleteness}. If all agents are required to reach their destinations within time $T_{\max}$, i.e., $T_i \le T_{\max}$ for $i=1,\ldots,n$, the online prioritized environment optimization guarantees the success of $n_p \le n$ agents $\{A_i\}_{i=1}^{n_p}$ with highest priorities $\{\rho_i\}_{i=1}^{n_p}$, where $n_p$ is determined by the changing rate of the obstacle region, i.e., $b$ in \eqref{eq:prioritizedOnlineCompletenessCondition}, and the required time $T_{\max}$ [cf. \eqref{eq:proofcoro27}]. 
\end{corollary}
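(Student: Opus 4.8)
The plan is to build directly on the scheduling scheme that underlies Theorem \ref{thm:prioritizedOnlineCompleteness}, and then intersect its completion-time guarantees with the hard deadline $T_{\max}$. First I would recall the mechanism: under the reduced capacity $2b\hat{r}\|\hat{\bbv}\|_2 \le \dot{\Delta} < 2(b+1)\hat{r}\|\hat{\bbv}\|_2$, the online prioritized optimizer can clear collision-free corridors for at most $b$ agents concurrently, so it serves agents in decreasing priority order in groups of size $b$: the top $b$ agents $\{A_i\}_{i=1}^b$ move first, then $\{A_i\}_{i=b+1}^{2b}$, and so on. I would make this explicit by noting that agent $A_i$ belongs to group $\lceil i/b\rceil$, and that it only begins once every higher-priority group has cleared its corridors (whose capacity is then reused).

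Next I would quantify per-group timing. Since every active agent follows an empty-environment trajectory that exists by Assumption \ref{as:emptyEnvironment} at a speed capped by $\|\hat{\bbv}\|_2$, a single group completes within a constant traversal time $C_T$ fixed by $b$. Running groups back-to-back then yields that group $k$ finishes by time $k\,C_T$, so agent $A_i$ reaches its destination by $T_i \le \lceil i/b\rceil\,C_T$ — precisely the content of the bound $T_i \le C_T/\rho_i$ in Theorem \ref{thm:prioritizedOnlineCompleteness} read under the priority ordering.

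I would then impose the deadline. The guarantee $T_i \le T_{\max}$ holds whenever $\lceil i/b\rceil\,C_T \le T_{\max}$, i.e. whenever $\lceil i/b\rceil \le \lfloor T_{\max}/C_T\rfloor =: K$; the largest index meeting this is $n_p = bK = b\lfloor T_{\max}/C_T\rfloor$ [cf. \eqref{eq:proofcoro27}], which is determined by $b$ (through $C_T$) and by $T_{\max}$. Because $T_i$ is nondecreasing in $i$ and the scheme processes agents in decreasing priority, the agents that provably meet the deadline form an initial segment $\{A_i\}_{i=1}^{n_p}$ with the highest priorities $\{\rho_i\}_{i=1}^{n_p}$; these inherit their collision-free corridors from the Theorem \ref{thm:prioritizedOnlineCompleteness} construction, so their navigation succeeds both collision-free and within $T_{\max}$, establishing partial completeness.

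The main obstacle is the second step: pinning down the group traversal time $C_T$ rigorously and showing that successive groups do not accumulate extra delay. Concretely, I must argue that when a group reaches its destinations the freed capacity $2b\hat{r}\|\hat{\bbv}\|_2$ is immediately reusable by the next group, so that group start times are spaced by at most $C_T$ rather than by a growing waiting time, and I must handle the boundary case where $T_{\max}$ falls strictly between two group-completion times (resolved by the floor in $n_p$). The ordering of $\bbrho$ and the existence of non-overlapping empty-environment trajectories (Assumption \ref{as:emptyEnvironment}) are exactly what make this corridor reuse well defined and the segment $\{A_i\}_{i=1}^{n_p}$ the correct one.
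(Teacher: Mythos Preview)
Your block-by-block scheduling is not the mechanism used in Theorem~\ref{thm:prioritizedOnlineCompleteness}, and the mismatch breaks the argument. In the paper's construction (and its proof), agents are \emph{not} served in rigid groups of size $b$; instead, agent $A_i$ is allotted $H_i=\lceil H_1\rho_1/\rho_i\rceil$ time slices, and a counting condition of the form $\sum_{j=1}^{\eta}\min\{j,b\}(H_{\eta+1-j}-H_{\eta-j})\ge \eta H$ certifies that the first $\eta$ agents can each be selected $H$ times within their deadlines. The completion bound that results is $T_i\le C_T/\rho_i$, which depends on the \emph{actual values} $\rho_i$, not on the rank $\lceil i/b\rceil$. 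Your assertion that $T_i\le \lceil i/b\rceil\,C_T$ is ``precisely the content'' of $T_i\le C_T/\rho_i$ is therefore incorrect unless one assumes the very special profile $\rho_i=1/\lceil i/b\rceil$, which is nowhere imposed.

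Because of this, your formula $n_p=b\lfloor T_{\max}/C_T\rfloor$ is not the characterization the corollary asserts. The quantity $n_p$ in the statement is defined as the largest $\eta$ satisfying \eqref{eq:proofcoro27}, namely
\[
\sum_{j=1}^{\eta}\min\{j,b\}\Big(\frac{T_{\max}\rho_\eta}{2T\rho_{\eta+1-j}}-\frac{T_{\max}\rho_\eta}{2T\rho_{\eta-j}}\Big)\ge \eta,
\]
which explicitly involves the priorities $\{\rho_j\}$ and $T$; your expression involves neither. The paper obtains this by taking the feasibility inequality from the proof of Theorem~\ref{thm:prioritizedOnlineCompleteness}, capping $H_\eta$ by $\lfloor T_{\max}/(2\delta t)\rfloor$ to encode the deadline, deducing the induced upper bound on $H_1$, substituting it back, and passing to the limit $H\to\infty$. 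To repair your argument you need to work with that priority-weighted scheduling inequality rather than a uniform group-by-group sweep; the block scheme may yield \emph{a} lower bound on $n_p$, but it does not recover the stated one.
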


\begin{proof}
	See Appendix \ref{proof:Corollary2}.
\end{proof}

Corollary \ref{coro:IncompletePrioritizedOnlineEO} demonstrates that the success of all navigation tasks may not be guaranteed if agents are required to reach destinations within a finite time $T_{\max}$. Online prioritized environment optimization allows a set of agents with higher priorities to reach destinations, while the rest of agents may fail navigation tasks. The number of successful agents depends on the changing capacity of the obstacle region and the required time $T_{\max}$, i.e., the higher the changing rate and the required time, the larger the number of successful agents. Corollary \ref{coro:IncompletePrioritizedOnlineEO} recovers Theorem \ref{thm:prioritizedOnlineCompleteness} when the required time is larger than the maximal time in \eqref{eq:prioritizedNavigationTime}, i.e., $T_{\max} \ge C_T/\rho_n$. 

Theorems \ref{thm:prioritizedOfflineCompleteness}-\ref{thm:prioritizedOnlineCompleteness} show the role played by agent priorities in the offline and online environment optimization with less resources, i.e., the obstacle-free area and the obstacle changing rate, than that required by Theorems \ref{thm:offlineCompleteness}-\ref{thm:onlineCompleteness}. The prioritized environment optimization guides the resource allocation by emphasizing higher-priority agents. By doing so, it guarantees the success of all navigation tasks with reduced resources while sacrificing the navigation performance of lower-priority agents. Moreover, Corollaries \ref{coro:incompletePrioritizedOfflineEO}-\ref{coro:IncompletePrioritizedOnlineEO} demonstrate the partial completeness of multi-agent navigation for scenarios with further reduced resources and added requirements (e.g., maximal time allowed). The prioritized environment optimization guarantees the success of higher-priority agents, but lower-priority agents may fail in these circumstances. 

\begin{remark}
	Theoretical analysis of offline and online environment optimization assumes that the obstacle region $\Delta$ can be controlled and re-shaped continuously as long as its area remains the same. The goal is to demonstrate the effectiveness of environment optimization to improve the performance of multi-agent navigation. However, it is worth mentioning that there may exist practical restrictions either on the obstacle region or on the way it can be controlled, in real-world applications. We account for the latter by imposing constraints when mathematically formulating the environment optimization problem in the next section. 
\end{remark}

\section{Methodology}\label{sec:constrainedEO}

In this section, we formulate the prioritized environment optimization problem mathematically as a constrained stochastic optimization problem, where the imposed constraints correspond to resource limitations and physical restrictions on the environment optimization in real-world applications, and solve the latter by leveraging reinforcement learning with the primal-dual mechanism. 

Specifically, consider the obstacle region $\Delta$  comprising $m$ obstacles $\ccalO = \{O_1,\ldots,O_m\}$, which can be of any shape and located at positions $\bbO = [\bbo_1,\ldots,\bbo_m]$. The obstacles can change positions to improve the navigation performance of the agents. Denote by $\bbX_o = [\bbx_{o1},\ldots,\bbx_{om}]$ the obstacle states, $\bbU_o = [\bbu_{o1},\ldots,\bbu_{om}]$ the obstacle actions, $\bbX_a = [\bbx_{a1},\ldots,\bbx_{an}]$ the agent states and $\bbU_a = [\bbu_{a1},\ldots,\bbu_{am}]$ the agent actions. For example, the states can be positions or velocities and the actions can be accelerations. Let $\pi_o(\bbU_o | \bbX_o, \bbX_a)$ be an optimization policy that controls the obstacles, a distribution over $\bbU_o$ conditioned on $\bbX_o$ and $\bbX_a$. The objective function $f(\bbS, \bbD, \pi_a, \bbrho, \bbO, \pi_o)$ measures the performance of the multi-agent navigation task $(\bbS, \bbD)$, given the trajectory planner $\pi_a$, the agent priorities $\bbrho$, the obstacle positions $\bbO$ and the optimization policy $\pi_o$, while the cost functions $\{g_j(\bbS, \bbD, \pi_a, \bbrho, \bbO, \pi_o)\}_{j=1}^m$ indicate the penalties of position changes w.r.t. the obstacles. Moreover, a set of $Q$ constraints are imposed on the obstacles corresponding to resource limitations and physical restrictions, such as limitations of moving velocity and distance, which are represented by the constraint functions $\{h_q(\bbX_o, \bbU_o)\}_{q=1}^Q$. Since the multi-agent system is with random initialization, the objective, cost, constraint functions are random functions and an \textit{expected} performance would be a more meaningful metric for performance evaluation. 

The goal is, thus, to find an optimal obstacle policy $\pi_o$ that maximizes the expected performance $\mathbb{E}[f(\bbS, \bbD, \pi_a, \bbrho, \bbO, \pi_o)]$ regularized by the obstacle costs $\{\mathbb{E}[g_j(\bbS, \bbD, \pi_a, \bbrho, \bbO, \pi_o)]\}_{j=1}^m$, while satisfying the imposed constraints. By introducing $\ccalU_o$ as the action space of the obstacles, we formulate the prioritized environment optimization problem as a constrained stochastic optimization problem
\begin{align}\label{eq:environmentProblemMath}
	&\argmax_{\pi_o}~\!\! \mathbb{E}[f(\bbS,\! \bbD,\! \pi_a,\! \bbrho,\! \bbO,\! \pi_o\!)\!] \!-\!\!\! \sum_{j=1}^m\! \beta_j \mathbb{E}[g_j(\bbS,\! \bbD,\! \pi_a,\! \bbrho,\! \bbO,\! \pi_o\!)\!] \nonumber\\
	&~~~~\text{s.t.}~~~~ \mathbb{E}[h_q(\bbX_o, \bbU_o)] \le 0,~\forall~q\!=\!1,\ldots,Q,\nonumber\\
	&~~~~~~~~~~~~ \pi_{o}(\bbU_{o}|\bbX_o,\bbX_a) \in \ccalU_{o}, 
\end{align}
where $\{\beta_j\}_{j=1}^m$ are regularization parameters. The objective function $f(\bbS, \bbD, \pi_a, \bbrho, \bbO, \pi_o)$, the cost functions $\{g_j(\bbS, \bbD, \pi_a, \bbrho, \bbO, \pi_o)\}_{j=1}^m$, the constraint functions $\{h_q(\bbX_o, \bbU_o)\}_{q=1}^Q$ and the action space $\ccalU_{o}$ are not necessarily convex/non-convex depending on specific application scenarios. The problem is challenging due to four main reasons: 
\begin{enumerate}[(i)]
	
	\item The closed-form expression of the objective function $f(\bbS, \bbD, \pi_a, \bbrho, \bbO, \pi_o)$ is difficult to derive because the explicit relationship between agents, obstacles and navigation performance is difficult to model, precluding the application of conventional model-based methods and leading to poor performance of heuristic methods.
	
	\item The imposed constraints $\{h_q(\bbX_o, \bbU_o)\}_{q=1}^Q$ restrict the space of feasible solutions and are difficult to address, leading to the failure of conventional unconstrained optimization algorithms.
	
	\item The policy $\pi_o(\bbU_o | \bbX_o, \bbX_a)$ is an arbitrary mapping from the state space to the action space, which can take any function form and is infinitely dimensional. 
	
	\item The obstacle actions can be discrete or continuous and the action space $\ccalU_o$ can be non-convex, resulting in further constraints on the feasible solution.
	
\end{enumerate}
Due to the aforementioned challenges, we propose to solve problem \eqref{eq:environmentProblemMath} by leveraging reinforcement learning (RL) and the primal-dual mechanism. The former parameterizes the optimization policy with information processing architectures and allows us to train the architecture parameters in a model-free manner. The latter penalizes the constraints with dual variables, and updates the primal and dual variables alternatively while searching for feasible solutions.

\subsection{Reinforcement Learning} 

We reformulate problem \eqref{eq:environmentProblemMath} in the RL domain and start by defining a Markov decision process. At each step $t$, the obstacles $\ccalO$ and the agents $\ccalA$ observe the states $\bbX_o^{(t)}$, $\bbX_a^{(t)}$ and take the actions $\bbU_o^{(t)}$, $\bbU_a^{(t)}$ with the obstacle policy $\pi_o$ and the trajectory planner $\pi_a$, respectively. The actions $\bbU_o^{(t)}$, $\bbU_a^{(t)}$ amend the states $\bbX_o^{(t)}$, $\bbX_a^{(t)}$ based on the transition probability function $P(\bbX_o^{(t+1)},\bbX_a^{(t+1)}|\bbX_o^{(t)},\bbX_a^{(t)},\bbU_o^{(t)},\bbU_a^{(t)})$, which is a distribution over the states conditioned on the previous states and the actions. Let $r_{ai}(\bbX_o^{(t)},\bbX_a^{(t)}, \bbU_o^{(t)},\bbU_a^{(t)})$ be the reward function of agent $A_i$, which represents its navigation performance at step $t$. The reward function of obstacle $O_j$ comprises two components: (i) the global system reward and (ii) the local obstacle reward, i.e., 
\begin{align}\label{eq:obstacleRewardFunction}
	&r_{oj} = \frac{1}{n}\sum_{i=1}^n \rho_i r_{ai} + \beta_j r_{j, local},~\forall~j=1,...,m
\end{align}
where $\rho_i$ is the priority of agent $A_i$, 
$\beta_j$ is the regularization parameter of obstacle $O_j$, and $r_{ai}$, $r_{j, local}$ are concise notations of $r_{ai}(\bbX_o^{(t)},\bbX_a^{(t)}, \bbU_o^{(t)},\bbU_a^{(t)})$, $r_{j, local} (\bbX_o^{(t)},\bbX_a^{(t)}, \bbU_o^{(t)},\bbU_a^{(t)})$. The first term is the team reward that represents the multi-agent system performance, which is shared over all obstacles. The second term is the individual reward that corresponds to the position change penalty of obstacle $O_j$, which may be different among obstacles, e.g., the collision penalty. This reward definition is novel that differs from common RL scenarios, which is a combination of a global reward with a local reward. The former is the main goal of all obstacles, while the latter is the individual cost of an obstacle. The priorities $\bbrho$ weigh the agents' performance to put more emphasis on the agents with higher priorities, while the regularization parameters $\{\beta_j\}_{j=1}^m \in [0, 1]^m$ weigh the environment optimization penalty on the navigation performance. The total reward of the obstacles is defined as 
\begin{align}\label{eq:obstacleReward}
	r_{o}= \sum_{j=1}^m r_{oj} = \frac{m}{n}\sum_{i=1}^n \rho_i r_{ai} + \sum_{j=1}^m\beta_j r_{j, local}.
\end{align}

Let $\gamma \in [0,1]$ be the discount factor accounting for the future rewards and the expected discounted reward can be represented as
\begin{align}\label{eq:expectedCost}
	&R(\bbS, \bbD, \pi_a, \bbrho, \bbO, \pi_o) = \mathbb{E} \Big[\sum_{t = 0}^\infty \gamma^{t} r_{o}^{(t)}\Big] \\
	& = \mathbb{E}\Big[\sum_{t=0}^\infty \gamma^t \sum_{j=1}^m\!\sum_{i=1}^n\! \frac{\rho_i r_{ai}^{(t)}}{n}\Big] \!+\! \sum_{j=1}^m \beta_j \mathbb{E}\Big[\sum_{t=0}^\infty \gamma^t r_{j, local}^{(t)}\Big] \nonumber
\end{align}
where $\bbO$, $\bbS$ and $\bbD$ are the initial positions and destinations that determine the initial states $\bbX_o^{(0)}$ and $\bbX_a^{(0)}$, and $\mathbb{E}[\cdot]$ is w.r.t. the action policy and the state transition probability. The obstacles are imposed with $Q$ constraints at each step $t$, which are functions of the obstacle states $\bbX_o^{(t)}$ and actions $\bbU_o^{(t)}$ as $h_q(\bbX_o^{(t)}, \bbU_o^{(t)}) \le 0$ for $q=1,\ldots,Q$. By parameterizing the obstacle policy $\pi_o$ with an information processing architecture $\bbPhi(\bbX_o,\bbX_a,\bbtheta)$ of parameters $\bbtheta$, we formulate the constrained reinforcement learning problem as
\begin{align}\label{eq:learningProblem}
	&~~\argmax_{\bbtheta}~ R(\bbS, \bbD, \pi_a, \bbrho, \bbO, \bbtheta)\\
	&~~~\text{s.t.}~~h_q(\bbX_o^{(t)}, \bbU_o^{(t)}) \le 0,~\for~q\!=\!1,...,Q,~t=1,\ldots,\infty,\nonumber\\
	&~~~~~~~~\bbPhi(\bbX_o^{(t)}, \bbX_a^{(t)}, \bbtheta) \in \ccalU_{o},~\for~t=1,\ldots,\infty.\nonumber
\end{align}
By comparing problem \eqref{eq:environmentProblemMath} with problem \eqref{eq:learningProblem}, we see equivalent representations of the objective, cost and constraint functions in the RL domain. 
The goal is to find the optimal parameters $\bbtheta^*$ that maximize the reward while satisfying the constraints at each step $t$. 

\subsection{Primal-Dual Policy Gradient}

Since there is no straightforward way to optimize $\bbtheta$ w.r.t. per-step hard constraints in problem \eqref{eq:learningProblem}, we define the reward of the constraint $h_q(\bbX_o^{(t)}, \bbU_o^{(t)}) \le 0$ with an indicator function as
\begin{align}\label{eq:indicatorFunction}
	r_q^{(t)} = \mathbbm{1}[h_q(\bbX_o^{(t)}, \bbU_o^{(t)}) \le 0],~\for~q=1,\ldots,Q
\end{align} 
where $\mathbbm{1}[h_q(\bbX_o^{(t)}, \bbU_o^{(t)}) \le 0]$ is $1$ if $h_q(\bbX_o^{(t)}, \bbU_o^{(t)}) \le 0$ and $0$ otherwise. It quantifies the constraint by rewarding success and penalizing failure at each step $t$. The cumulative reward of the constraint $h_q(\bbX_o^{(t)}, \bbU_o^{(t)}) \le 0$ is 
\begin{align}\label{eq:expectedConstraintReward}
	R_q(\bbO, \bbtheta) = \mathbb{E}\Big[\sum_{t=1}^\infty \gamma^t r_q^{(t)}\Big],~\for~q=1,\ldots,Q
\end{align} 
where $\gamma$ is the discount factor. The preceding expression allows us to measure how well the constraints are satisfied in expectation and takes the same form as the expected discounted reward of the obstacles [cf. \eqref{eq:expectedCost}]. We can then transform the constraints as
\begin{align}\label{eq:constraintConstants}
	R_q(\bbO, \bbtheta) \ge \ccalC_q,~\for~q=1,\ldots,Q
\end{align}
where $\{\ccalC_q\}_{q=1}^Q$ are constants that lower-bound the constraint rewards and control the constraint guarantees -- see details in Sec. \ref{subsec:constraintGuarantee}. This yields an alternative of problem \eqref{eq:learningProblem} as
\begin{align}\label{eq:alternativeLearningProblem}
	&~~~\mathbb{P}:=\argmax_{\bbtheta}~ R(\bbS, \bbD, \pi_a, \bbrho, \bbO, \bbtheta)\\
	&~~~\text{s.t.}~~R_q(\bbO, \bbtheta) \ge \ccalC_q,~\for~q=1,\ldots,Q,\nonumber\\
	&~~~~~~~~\bbPhi(\bbX_o^{(t)}, \bbX_a^{(t)}, \bbtheta) \in \ccalU_{o},~\for~t=1,\ldots,\infty.\nonumber
\end{align}

By introducing the dual variables $\bblambda = [\lambda_1,\ldots,\lambda_Q]^\top \in \mathbb{R}_+^Q$ that correspond to $Q$ constraints, we formulate the Lagrangian of problem \eqref{eq:alternativeLearningProblem} as
\begin{align}\label{eq:Lagrangian}
	&\ccalL(\bbtheta, \bblambda) \!=\! R(\bbS,\! \bbD,\! \pi_a,\! \bbrho,\! \bbO,\! \bbtheta) \!+\!\! \sum_{q=1}^Q \!\lambda_q \big(R_q(\bbO, \bbtheta) \!-\! \ccalC_q\big) 
\end{align}
which penalizes the objective with the constraint violation weighted by the dual variables $\bblambda$. The dual function is defined as the maximum of the Lagrangian $\ccalD(\bblambda):=\max_{\bbtheta} \ccalL(\bbtheta, \bblambda)$. Since $\ccalD(\bblambda) \ge \mathbb{P}$ for any dual variables $\bblambda$, we define the dual problem as
\begin{align}\label{eq:dualProblem}
	\mathbb{D}:= \min_{\bblambda \ge 0} \ccalD(\bblambda) = \min_{\bblambda \ge 0} \max_{\bbtheta} \ccalL(\bbtheta, \bblambda)
\end{align}
which computes the optimal dual variables that minimizes the dual function $\ccalD(\bblambda)$. The dual solution $\mathbb{D}$ in \eqref{eq:dualProblem} can be considered as the best approximation of the primal solution $\mathbb{P}$ in \eqref{eq:alternativeLearningProblem}. This translates a constrained maximization problem to an unconstrained min-max problem, which searches for a saddle point solution $(\bbtheta^*, \bblambda^*)$ that is maximal w.r.t. the primal variables $\bbtheta$ and minimal w.r.t. the dual variables $\bblambda$. 

We propose to solve the dual problem \eqref{eq:dualProblem} with the primal-dual policy gradient method, which updates $\bbtheta$ with policy gradient ascent and $\bblambda$ with gradient descent in an alternative manner. Specifically, it trains the model iteratively and each iteration consists of primal and dual steps. 

\smallskip
\noindent \textbf{Primal step.} At iteration $k$, let $\bbtheta^{(k)}$ be the primal variables, $\bblambda^{(k)}$ the dual variables, $\bbX_o^{(k)}$ the obstacle states, and $\bbX_a^{(k)}$ the agent states. Given these system states, the obstacle policy $\bbPhi(\bbX_o^{(k)},\bbX_a^{(k)},\bbtheta^{(k)})$ generates the obstacle actions $\bbU_o^{(k)}$ and the given trajectory planner $\pi_a$ generates the agent actions $\bbU_a^{(k)}$. These actions $\bbU_o^{(k)}$, $\bbU_a^{(k)}$ change the states from $\bbX_o^{(k)}$, $\bbX_a^{(k)}$ to $\bbX_{o,1}^{(k)}$, $\bbX_{a,1}^{(k)}$ based on the transition probability function $P(\bbX_{o,1}^{(k)},\bbX_{a,1}^{(k)}|\bbX_o^{(k)},\bbX_a^{(k)},\bbU_o^{(k)},\bbU_a^{(k)})$, which feeds back the corresponding obstacle reward $r_o^{(k)}$ [cf. \eqref{eq:obstacleReward}] and the constraint rewards $\{r_q^{(k)}\}_{q=1}^Q$ [cf. \eqref{eq:indicatorFunction}]. 

We follow the actor-critic method to define a value function $V(\bbX_o,\bbX_a, \bbomega)$ of parameters $\bbomega$ that estimates the Lagrangian \eqref{eq:Lagrangian} initialized at the states $\bbX_o,\bbX_a$. Let $\bbomega^{(k)}$ be the parameters of the value function at iteration $k$, and $V(\bbX_o^{(k)},\bbX_a^{(k)}, \bbomega^{(k)})$, $V(\bbX_{o,1}^{(k)},\bbX_{a,1}^{(k)}, \bbomega^{(k)})$ be the estimated values at $\bbX_o^{(k)},\bbX_a^{(k)}$ and $\bbX_{o,1}^{(k)},\bbX_{a,1}^{(k)}$. This allows us to compute the estimation error as
\begin{align}\label{eq:valueFunctionError}
	\delta^{(k)}_1 \!&=\! r_o^{(k)} \!\!+\! \sum_{q=1}^Q \lambda_q^{(k)}\big(r_q^{(k)} - (1-\gamma)\ccalC_q\big) \\ &+ \gamma V(\bbX_{o,1}^{(k)}\!,\bbX_{a,1}^{(k)}\!, \bbomega^{(k)}) \!-\! V\big(\bbX_o^{(k)}\!,\bbX_a^{(k)}\!, \bbomega^{(k)}\big) \nonumber
\end{align}
which is used to update the parameters $\bbomega^{(k)}$ of the value function. Then, we can update the primal variables $\bbtheta^{(k)}$ with policy gradient ascent as
\begin{align}
	\label{eq:policyGradientObstacle} \bbtheta^{(k)}_{1} = \bbtheta^{(k)} + \alpha \nabla_{\bbtheta} \log \pi_o(\bbU_o^{(k)} | \bbX_a^{(k)}, \bbX_o^{(k)}) \delta^{(k)}_1
\end{align}
where $\alpha$ is the step-size and $\pi_o(\bbU_o^{(k)} | \bbX_a^{(k)}, \bbX_o^{(k)})$ is the policy distribution of the obstacle action specified by the parameters $\bbtheta^{(k)}$. The primal update is model-free because \eqref{eq:policyGradientObstacle} requires only the error value $\delta^{(k)}_1$ and the gradient of the policy distribution, but not the objective, cost and constraint function models. By performing the above procedure recursively $\psi \in \mathbb{Z}_+$ times, we obtain a sequence of primal variables $\{\bbtheta^{(k)}_{1}, \bbtheta^{(k)}_{2},\ldots,\bbtheta^{(k)}_{\psi}\}$. Define $\bbtheta^{(k+1)} := \bbtheta^{(k)}_{\psi}$ as the new primal variables and step into the dual step.

\begin{figure*}
	\centering
	\subfloat[]{\includegraphics[width=0.25\linewidth]{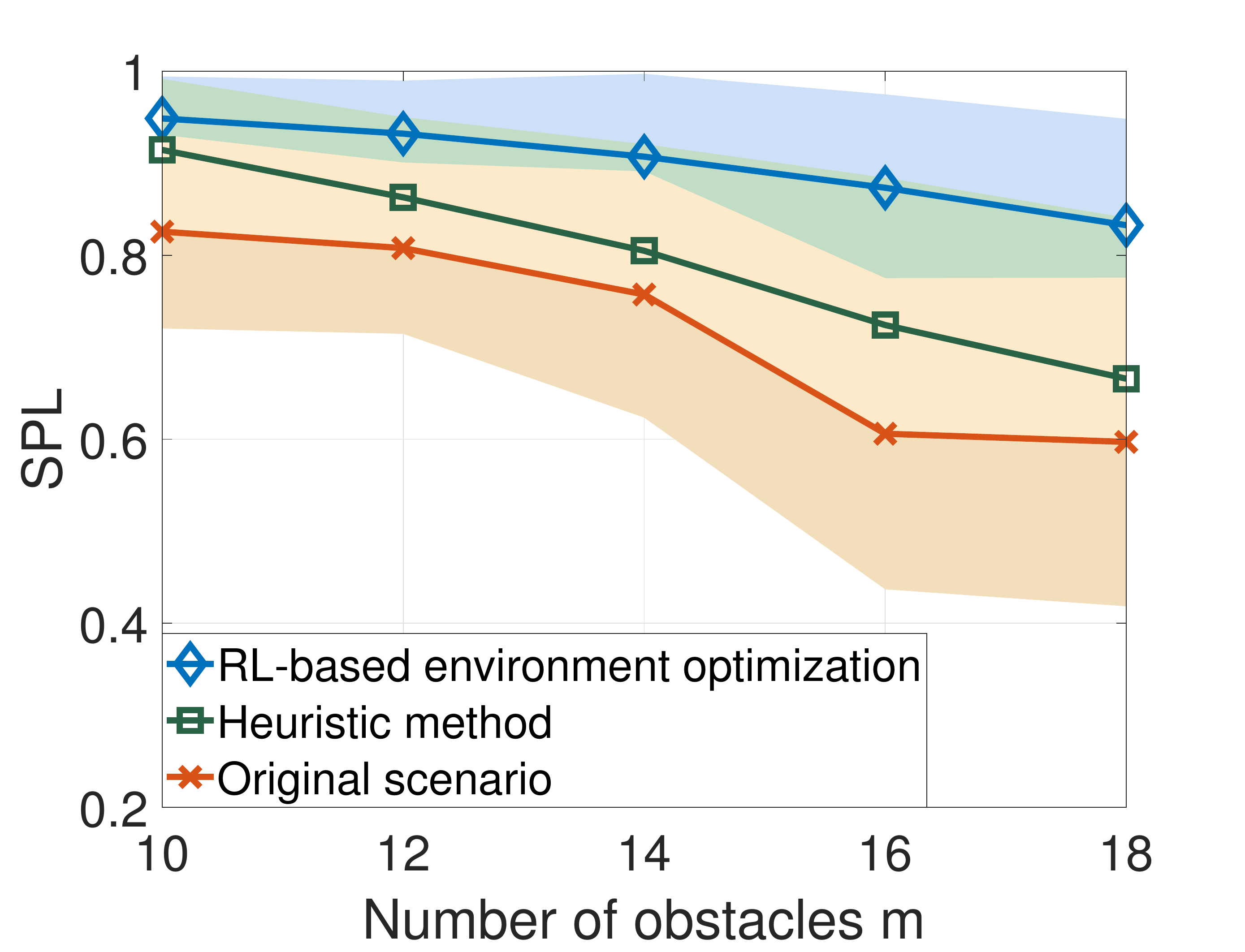}}
	\subfloat[]{\includegraphics[width=0.25\linewidth]{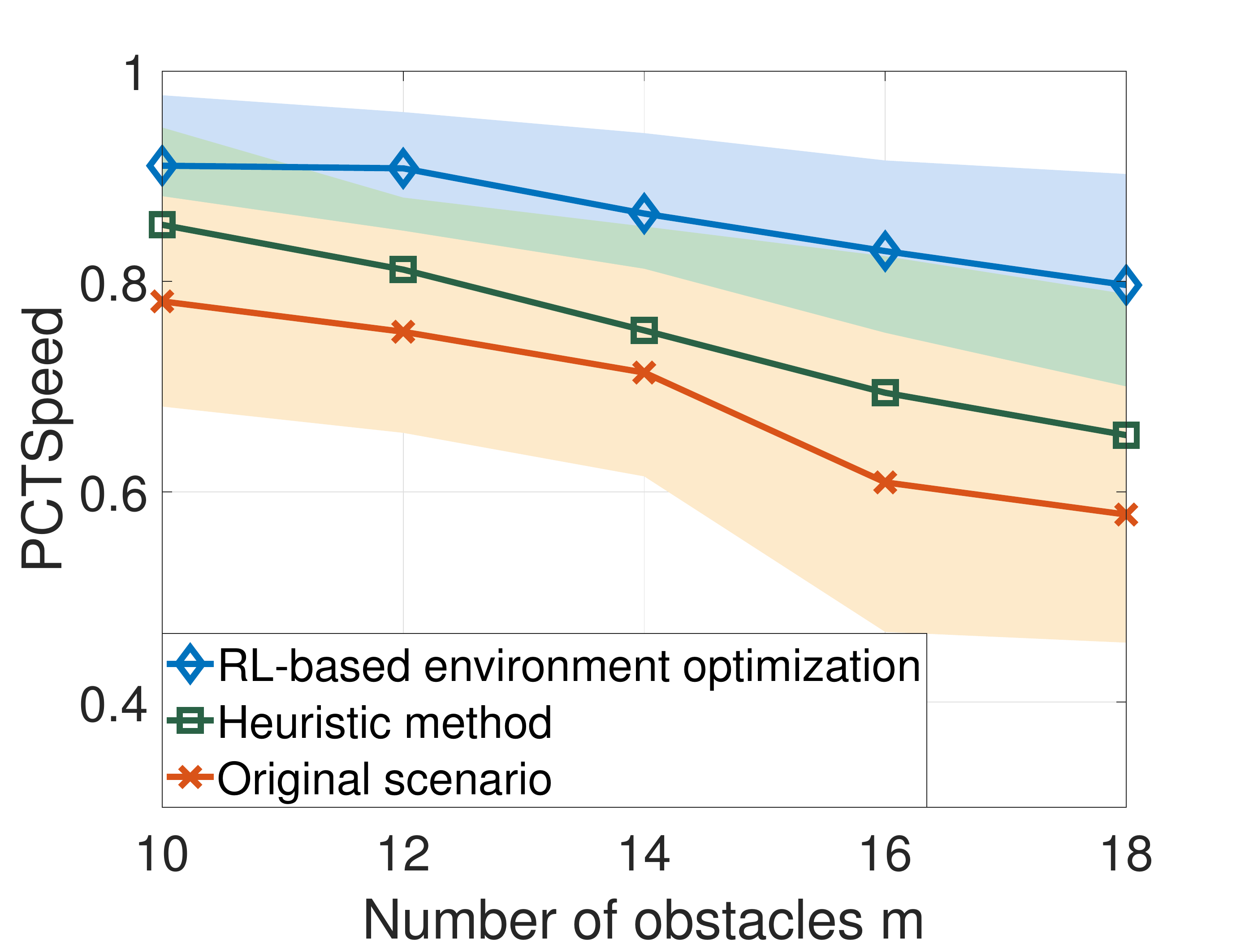}}
	\subfloat[]{\includegraphics[width=0.25\linewidth, height = 0.19\linewidth, trim = {1cm 0.7cm 2cm 0cm}, clip]{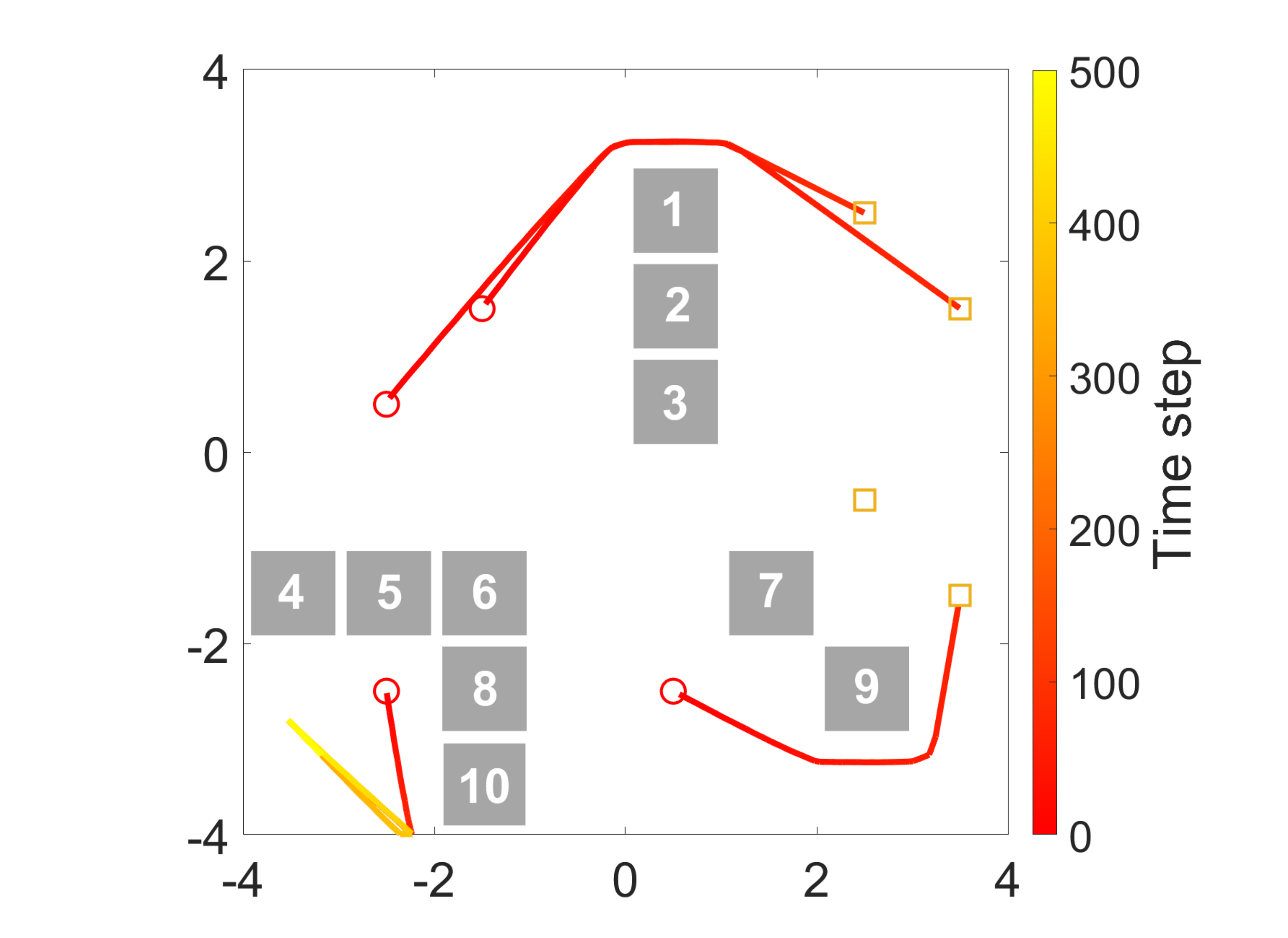}}
	\subfloat[]{\includegraphics[width=0.25\linewidth, height = 0.19\linewidth, trim = {1cm 0.7cm 2cm 0cm}, clip]{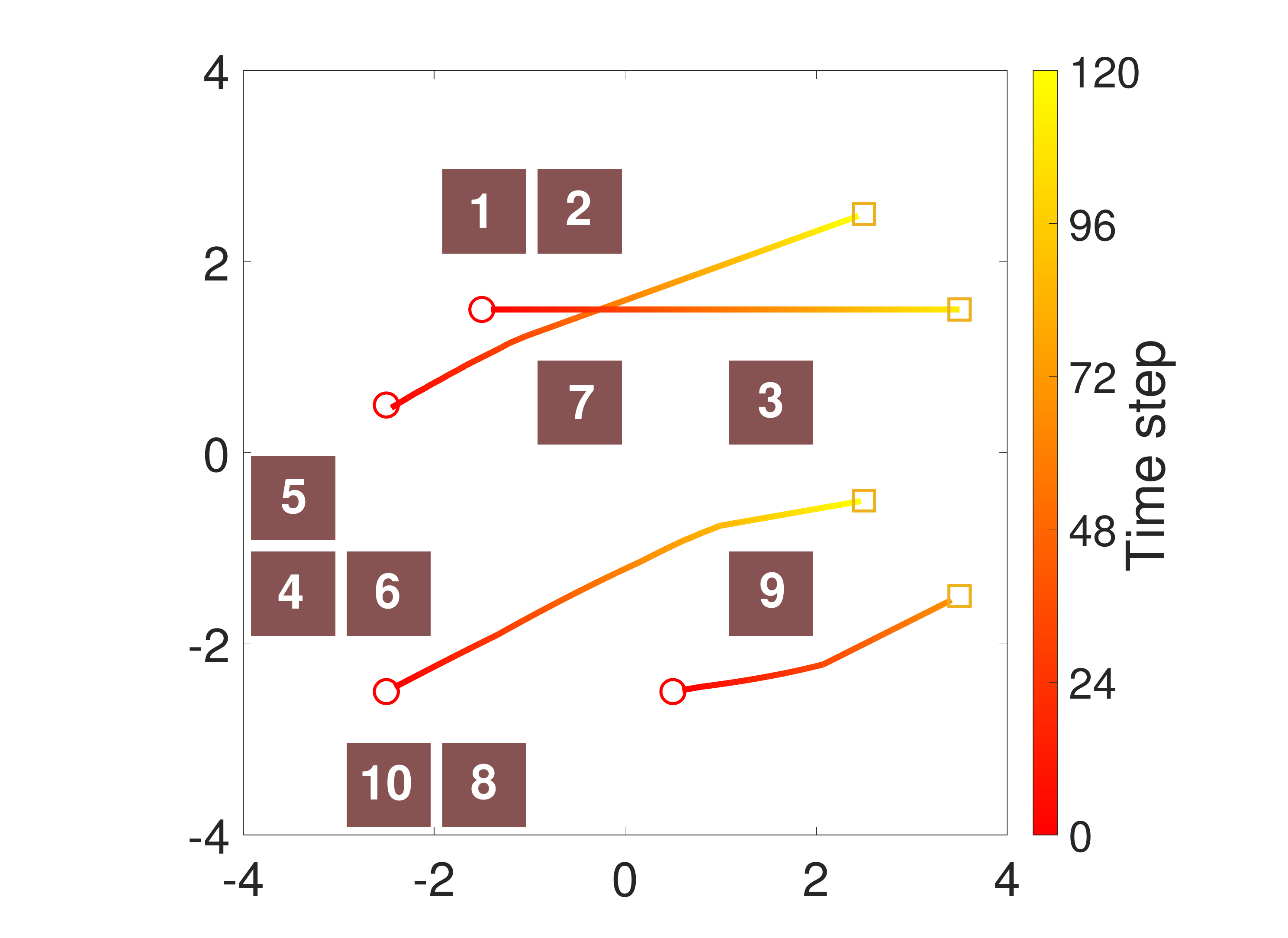}}
	\caption{(a-b) Performance of offline environment optimization compared to the baselines. Results are averaged over $20$ trials and the shaded area shows the std. dev. The RL system is trained on $10$ obstacles and tested on $10$ to $18$ obstacles. (a) SPL ($1$ is best). (b) PCTSpeed. (c-d) Example of offline environment optimization. Red circles are initial positions, yellow squares are destinations, and obstacles are numbered for exposition. Color lines from red to yellow are agent trajectories and the color bar represents the time scale. Obstacle layout and agent trajectories (c) \textit{before} environment optimization and (d) \textit{after} environment optimization.}\label{fig:offline}\vspace{-4mm}
\end{figure*}

\smallskip
\noindent \textbf{Dual step.} With the updated primal variables $\bbtheta^{(k+1)}$, we update the dual variables $\bblambda^{(k)}$ with gradient descent as
\begin{align}\label{eq:dualStep}
	&\lambda_q^{(k\!+\!1)} \!\!=\!\! \Big[\!\lambda_q^{(k)} \!\!-\!\! \beta \big(\!R_q(\bbO, \bbtheta^{(k\!+\!1)}) \!-\! \ccalC_q\big)\!\Big]_+\for~q\!=\!1,...,Q 
\end{align}
where $\beta$ is the step-size and $[\cdot]_+$ is the non-negative operator corresponding to the positivity of dual variables. Since the constraint function values $\{h_q(\bbX_o^{(k)}, \bbU_o^{(k)})\}_{q=1}^Q$ are given, the cumulative constraint rewards $\{R_q(\bbO, \bbtheta^{(k)})\}_{q=1}^Q$ can be estimated by sampling $\ccalT$ trajectories $\{\bbX_o^{(t),1}, \bbU_o^{(t), 1}\}_{t},\ldots,\{\bbX_o^{(t),\ccalT}, \bbU_o^{(t), \ccalT}\}_{t}$, computing the respective cumulative constraint rewards, and taking the average, i.e.,
\begin{align}\label{eq:constarintApprox}
	R_q(\bbO, \bbtheta^{(k\!+\!1)}) \!\approx\! \frac{1}{\ccalT}\sum_{\tau=1}^{\ccalT}\sum_{t=0}^{T_\tau} \!\gamma^t \mathbbm{1}[h_q(\bbX_o^{(t),\tau}\!, \bbU_o^{(t), \tau}) \!\le\! 0]
\end{align} 
where $T_\tau$ is the operation time of the $\tau$th trajectory. The dual update is model-free because \eqref{eq:dualStep}-\eqref{eq:constarintApprox} require only the constraint values $\{h_q(\bbX_o^{(t)}, \bbU_o^{(t)})\}_t$ instead of the constraint function models. Algorithm \ref{alg:primalDualPolicyGradient} summarizes the approach.

{\linespread{1}
	\begin{algorithm}[t] \begin{algorithmic}[1]
			\STATE \textbf{Input:} initial primal variables $\bbtheta^{(0)}$, initial dual variables $\bblambda^{(0)}$, initial value function parameters $\bbomega^{(0)}$, trajectory planner $\pi_a$ and transition probability function $P$
			\FOR {$k=1,\dots$}
			\STATE Compute value functions $V(\bbX_o^{(k)},\bbX_a^{(k)}, \bbomega^{(k)})$ and $V(\bbX_o^{(k+1)},\bbX_a^{(k+1)}, \bbomega^{(k)})$ to estimate the Lagrangian values of \eqref{eq:Lagrangian}
			\STATE Compute the estimation error as in \eqref{eq:valueFunctionError}
			\STATE Update the primal variables $\bbtheta^{(k)}$ with policy gradient ascent as in \eqref{eq:policyGradientObstacle} $\psi$ times and set $\bbtheta^{(k+1)} := \bbtheta^{(k)}_\psi$ as the updated primal variables 
			\STATE Compute the expected constraint rewards as in \eqref{eq:constarintApprox}
			\STATE Update the dual variables $\bblambda^{(k)}$ with gradient descent as in \eqref{eq:dualStep}
			\ENDFOR
		\end{algorithmic}
		\caption{Primal-Dual Policy Gradient Method}\label{alg:primalDualPolicyGradient}
\end{algorithm}}

\subsection{Constraint Guarantees}\label{subsec:constraintGuarantee}

While the cumulative constraint rewards in the alternative problem \eqref{eq:alternativeLearningProblem} is a relaxation of the per-step hard constraints in the original problem \eqref{eq:learningProblem}, we provide constraint guarantees for the alternative problem \eqref{eq:alternativeLearningProblem} in the sequel. Specifically, the expectation of the indicator function in \eqref{eq:expectedConstraintReward} is equivalent to the probability of satisfying the constraint, i.e.,
\begin{align}\label{eq:probability}
	\mathbb{E}\Big[\mathbbm{1}[h_q(\bbX_o^{(t)}, \bbU_o^{(t)}) \le 0]\Big] = \mathbb{P}\big[h_q(\bbX_o^{(t)}, \bbU_o^{(t)}) \le 0\big]
\end{align} 
where $\mathbb{P}[\cdot]$ is the probability measure. This allows to rewrite the cumulative constraint reward as
\begin{align}\label{eq:averageGuarantee}
	R_q(\bbO, \bbtheta) = \sum_{t=0}^\infty \gamma^t \mathbb{P}[h_q(\bbX_o^{(t)}, \bbU_o^{(t)}) \le 0] \ge \ccalC_q 
\end{align} 
for $q=1,\ldots,Q$ [cf. \eqref{eq:constraintConstants}], which is the discounted sum of the constraint satisfaction probability over all steps $t$. 

However, \eqref{eq:averageGuarantee} are not sufficient conditions to satisfy the constraints at each step $t$, as required by problem \eqref{eq:learningProblem}. We then establish the relation between \eqref{eq:averageGuarantee} and the per-step constraints in problem \eqref{eq:learningProblem}. Before proceeding, we define the $(1-\delta)$-constrained solution as follows. 
\begin{definition}[$(1-\delta)$-constrained solution]
	A solution of problem \eqref{eq:alternativeLearningProblem} is $(1-\delta)$-constrained w.r.t. the constraints $\{h_q\}_{q=1}^Q$ if for each step $t \ge 0$, it holds that
	\begin{align}
		\mathbb{P}\big[\!\cap_{0 \le \tau \le t}\! \{h_q(\bbX_o^{(\tau)}\!\!,\! \bbU_o^{(\tau)}\!) \!\le\! 0\}\big] \!\!\ge\!\! 1\!-\!\delta~\for~q\!=\!\!1,...,Q.
	\end{align}
\end{definition}

\smallskip
The $(1-\delta)$-constrained solution satisfies the constraints at each step $t$ with a probability $1 - \delta \in (0,1]$, which is a feasible solution of problem \eqref{eq:learningProblem} if $\delta = 0$. With these preliminaries, we analyze the constraint guarantees of the alternative problem \eqref{eq:alternativeLearningProblem} in the following theorem. 
\begin{theorem}\label{thm:safetyGuarantee}
	Consider problem \eqref{eq:alternativeLearningProblem} with the constraint constant taking the form of $\ccalC_q = (1-\delta + \eps)/(1-\gamma)$ for $q=1,\ldots,Q$ [cf. \eqref{eq:constraintConstants}], where $\delta$ and $\eps$ are constraint parameters with $0 \le \eps \le \delta$. For any $\delta$, there exists $\eps$ such that the solution of problem \eqref{eq:alternativeLearningProblem} is $(1 - \delta)$-constrained. 
\end{theorem}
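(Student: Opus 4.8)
The plan is to read the feasibility constraint $R_q(\bbO,\bbtheta)\ge\ccalC_q$ of \eqref{eq:alternativeLearningProblem} as a budget on how often each constraint may be violated, and then to turn that budget into the per-step joint guarantee in the definition of a $(1-\delta)$-constrained solution. Fix an index $q$ and abbreviate $A_\tau:=\{h_q(\bbX_o^{(\tau)},\bbU_o^{(\tau)})\le 0\}$ and $q_\tau:=\mathbb{P}[A_\tau^c]=1-\mathbb{P}[A_\tau]$. Using \eqref{eq:probability}-\eqref{eq:averageGuarantee} together with $\sum_{t\ge 0}\gamma^t=1/(1-\gamma)$, feasibility is equivalent to $\sum_{t\ge 0}\gamma^t q_t\le(\delta-\eps)/(1-\gamma)$, i.e.\ the discounted violation mass is at most the slack between the maximal attainable reward $1/(1-\gamma)$ and $\ccalC_q=(1-\delta+\eps)/(1-\gamma)$. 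The extreme choice $\eps=\delta$ already makes the claim immediate, since then the budget is zero, forcing $q_t=0$ and hence $\mathbb{P}[\cap_{0\le\tau\le t}A_\tau]=1$ for all $t$; the substance of the theorem is that a strictly smaller $\eps$ still suffices, which is what the argument below establishes.

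The next step is to bound the joint event by the marginals through Boole's inequality: for each horizon length $t$,
\begin{align}
\mathbb{P}\big[\cap_{0\le\tau\le t}A_\tau\big]=1-\mathbb{P}\big[\cup_{0\le\tau\le t}A_\tau^c\big]\ge 1-\sum_{\tau=0}^{t}q_\tau .
\end{align}
Because the left-hand side is non-increasing in $t$, it is enough to control the undiscounted partial sums $\sum_{\tau=0}^{t}q_\tau$ uniformly over the finite operation horizon of the sampled trajectories [the $T_\tau$ in \eqref{eq:constarintApprox}], after which no further steps contribute. Reducing the $(1-\delta)$-constrained property to the single scalar inequality $\sum_{\tau=0}^{t}q_\tau\le\delta$ is the organizing idea.

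The crux is converting the discounted infinite budget that feasibility controls into the undiscounted finite partial sum produced by the union bound. Here I would exploit $\gamma^\tau\ge\gamma^{t}$ for $\tau\le t$ to write $\sum_{\tau=0}^{t}q_\tau\le\gamma^{-t}\sum_{\tau=0}^{t}\gamma^\tau q_\tau\le\gamma^{-t}(\delta-\eps)/(1-\gamma)$, and then, given $\delta$ and $\gamma$, pick the slack $\eps$ large enough to absorb the factor $\gamma^{-t}$ over the relevant horizon $T$; concretely any $\eps$ with $\delta\big(1-(1-\gamma)\gamma^{T}\big)\le\eps\le\delta$ forces $\gamma^{-t}(\delta-\eps)/(1-\gamma)\le\delta$ for all $t\le T$, and this interval is nonempty and contained in $[0,\delta]$. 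This delivers $\mathbb{P}[\cap_{0\le\tau\le t}A_\tau]\ge 1-\delta$ for every $t$ and every $q$, i.e.\ a $(1-\delta)$-constrained solution. I expect this discount-to-horizon conversion to be the main obstacle: the geometric weights over-emphasize early steps, so the infinite discounted budget does not on its own bound late-step joint violations, and it is exactly the freedom to choose $\eps$ as a function of $\delta$ (thereby paying for the $\gamma^{-T}$ blow-up) that bridges the two notions while keeping $\eps$ within the prescribed range.
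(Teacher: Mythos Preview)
Your proposal is correct and follows essentially the same approach as the paper: convert feasibility into a discounted violation budget $\sum_t\gamma^t q_t\le(\delta-\eps)/(1-\gamma)$, use $\gamma^\tau\ge\gamma^t$ on a finite horizon $T$ to pass to the undiscounted partial sum, choose $\eps$ at the threshold $\delta\big(1-(1-\gamma)\gamma^{T}\big)$ to absorb the $\gamma^{-T}$ factor, and close with the union bound. The paper orders the steps slightly differently (union bound last rather than first) and fixes $\eps$ at the threshold rather than stating the admissible interval, but the mathematical content is identical.
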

\begin{proof}
	See Appendix \ref{proof:Theorem5}.
\end{proof}
Theorem \ref{thm:safetyGuarantee} states that the feasible solution of the alternative problem \eqref{eq:alternativeLearningProblem} is a $(1-\delta)$-constrained solution of problem \eqref{eq:learningProblem}. The result provides the per-step guarantees for the environment constraints in probability,  which demonstrates the applicability of the alternative problem \eqref{eq:alternativeLearningProblem}. In essence, we have not lost the constraint guarantees by relaxing the hard constraints to the cumulative constraint rewards. 

\subsection{Information Processing Architecture}

The proposed approach is a general framework that covers various environment optimization scenarios and different information processing architectures can be integrated to solve different variants. We illustrate this fact by analyzing two representative scenarios: 
(i) offline optimization in discrete environments and (ii) online optimization in continuous environments, for which we use convolutional neural networks (CNNs) and graph neural networks (GNNs), respectively. 

\smallskip
\noindent \textbf{CNNs for offline discrete settings.} In this setting, we first optimize the obstacles’ positions and then navigate the agents. Since computations are performed offline, we collect the states of all obstacles $\bbX_o$ and agents $\bbX_a$ apriori, which allows for centralized information processing solutions (e.g., CNNs). CNNs leverage convolutional filters to extract features from image signals and have found wide applications in computer vision \cite{browne2003convolutional, kumra2017robotic, gu2018recent}. In the discrete environment, the system states can be represented by matrices and the latter are equivalent to image signals. This motivates to consider CNNs for policy parameterization.

\smallskip
\noindent \textbf{GNNs for online continuous settings.} In this setting, obstacle positions change while agents move, i.e., the obstacles take actions instantaneously. In large-scale systems with real-time communication and computation constraints, obstacles may not be able to collect the states of all other obstacles/agents and centralized solutions may not be applicable. This requires a decentralized architecture that can be implemented with local neighborhood information. GNNs are inherently decentralizable and are, thus, a suitable candidate.

GNNs are layered architectures that leverage a message passing mechanism to extract features from graph signals \cite{scarselli2008graph, velivckovic2018graph, gao2021stochastic}. At each layer $\ell$, let $\bbX_{\ell-1}$ be the input signal. The output signal is generated with the message aggregation function $\ccalF_{\ell m}$ and the feature update function $\ccalF_{\ell u}$ as
\begin{align}
	[\bbX_{\ell}]_i \!=\! \ccalF_{\ell u}\Big( [\bbX_{\ell\!-\!1}]_i,\!\! \sum_{j \in \ccalN_i}\!\! \ccalF_{\ell m}\big( [\bbX_{\ell\!-\!1}]_i, [\bbX_{\ell\!-\!1}]_j, [\bbE]_{ij} \big)\! \Big)\nonumber
\end{align}
where $[\bbX_{\ell}]_i$ is the signal value at node $i$, $\ccalN_i$ are the neighboring nodes within the communication radius, $\bbE$ is the adjacency matrix, and $\ccalF_{\ell m}$, $\ccalF_{\ell u}$ have learnable parameters $\bbtheta_{\ell m}$, $\bbtheta_{\ell u}$. The output signal is computed with local neighborhood information only, and each node can make decisions based on its own output value; hence, allowing for decentralized implementation \cite{tolstaya2020learning, li2020graph, gao2022wide, gao2022decentralized}.

\begin{figure*}
	\centering
	\subfloat[]{\includegraphics[width=0.33\linewidth]{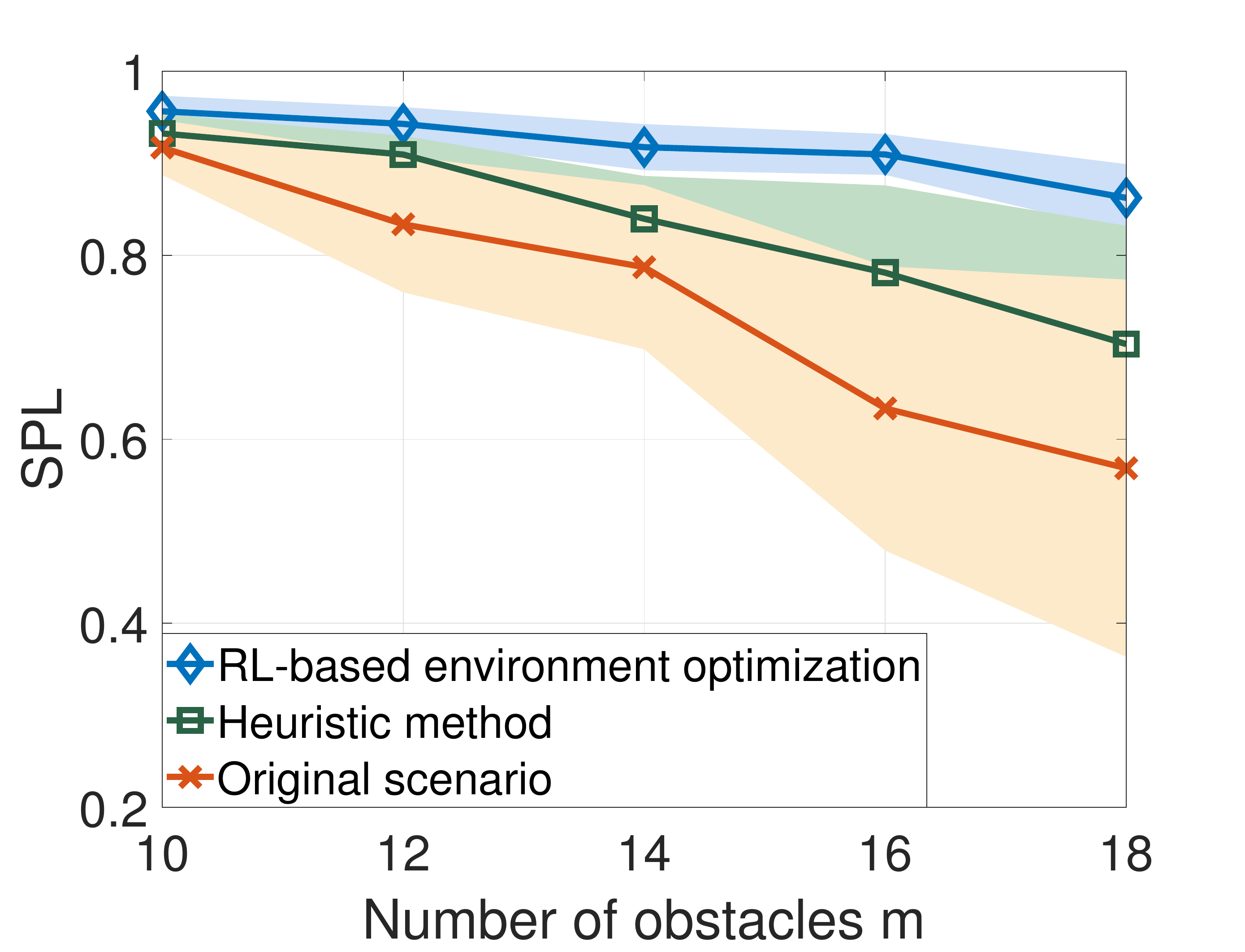}}
	\subfloat[]{\includegraphics[width=0.33\linewidth]{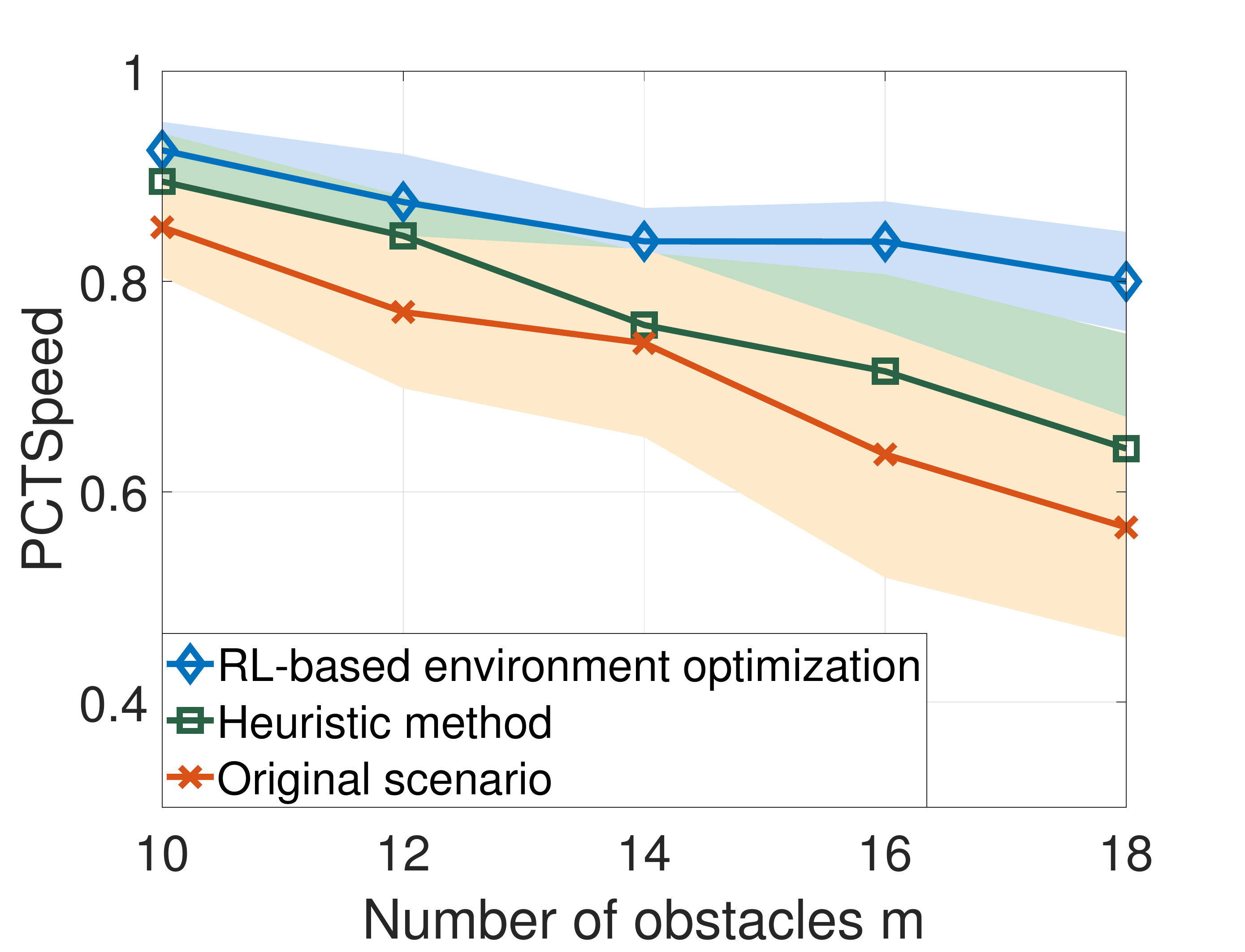}}
	\subfloat[]{\includegraphics[width=0.34\linewidth, height = 0.25\linewidth, trim = {0 0.7cm 2cm 0cm}, clip]{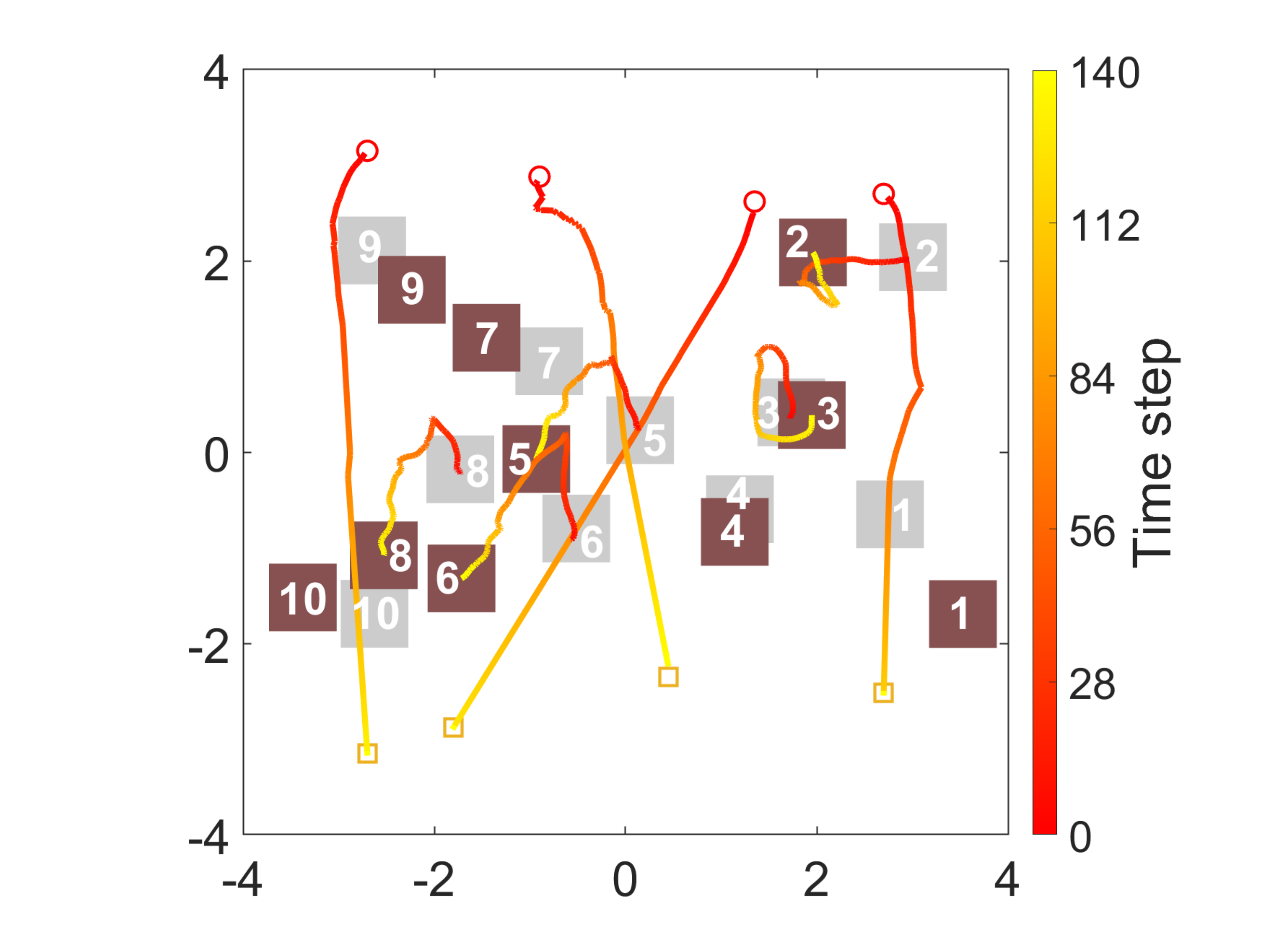}}
	\caption{(a-b) Performance of online environment optimization compared to the baselines. Results are averaged over $20$ trails and the shaded area shows the std. dev. The RL system is trained on $10$ obstacles and tested on $10$ to $18$ obstacles. (a) SPL ($1$ is best). (b) PCTSpeed. (c) Example of online environment optimization. Red circles are initial positions and yellow squares are destinations. Grey and brown rectangles are the obstacles before and after environment optimization, and are numbered for exposition. Red-to-yellow lines are trajectories of agents and $5$ example obstacles, and the color bar represents the time scale, showing that no agent-agent nor agent-obstacle collision occurs.}\label{fig:online}\vspace{-4mm}
\end{figure*}

\section{Experiments}\label{sec:experiments}

We evaluate the proposed approach in this section. First, we consider unprioritized environment optimization without constraints [Problem \ref{def:environmenProblem}]. Then, we consider prioritized environment optimization with constraints [Problem \ref{def:prioritizedEO}]. For both cases, we consider two navigation scenarios, one in which we perform offline optimization with discrete obstacle motion, and another in which we consider online optimization with continuous obstacle motion. The obstacles have rectangular bodies and the agents have circular bodies. The given agent trajectory planner $\pi_a$ is the Reciprocal Velocity Obstacles (RVO) method \cite{van2008reciprocal}.\footnote{The proposed environment optimization approach can be applied in the same manner, irrespective of the chosen agent navigation algorithm.} Two metrics are used: Success weighted by Path Length (SPL) \cite{anderson2018evaluation} and the percentage of the maximal speed (PCTSpeed). The former is a stringent measure combining the success rate and the path length, while the latter is the ratio of the average speed to the maximal one. Results are averaged over $20$ trials with random initial configurations. 

\begin{figure*}
	\centering
	\subfloat[]{\includegraphics[width=0.25\linewidth]{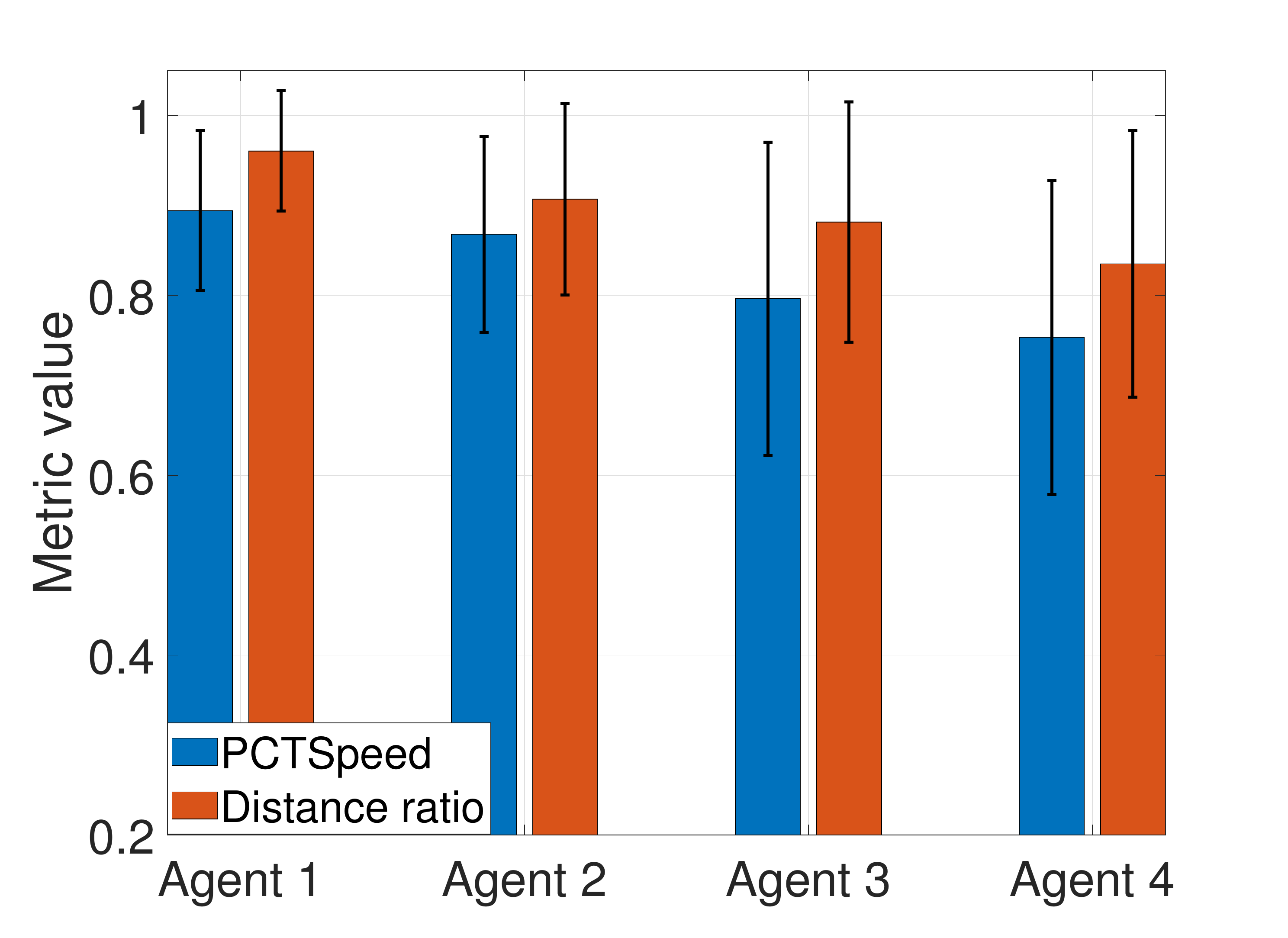}}
	\subfloat[]{\includegraphics[width=0.25\linewidth]{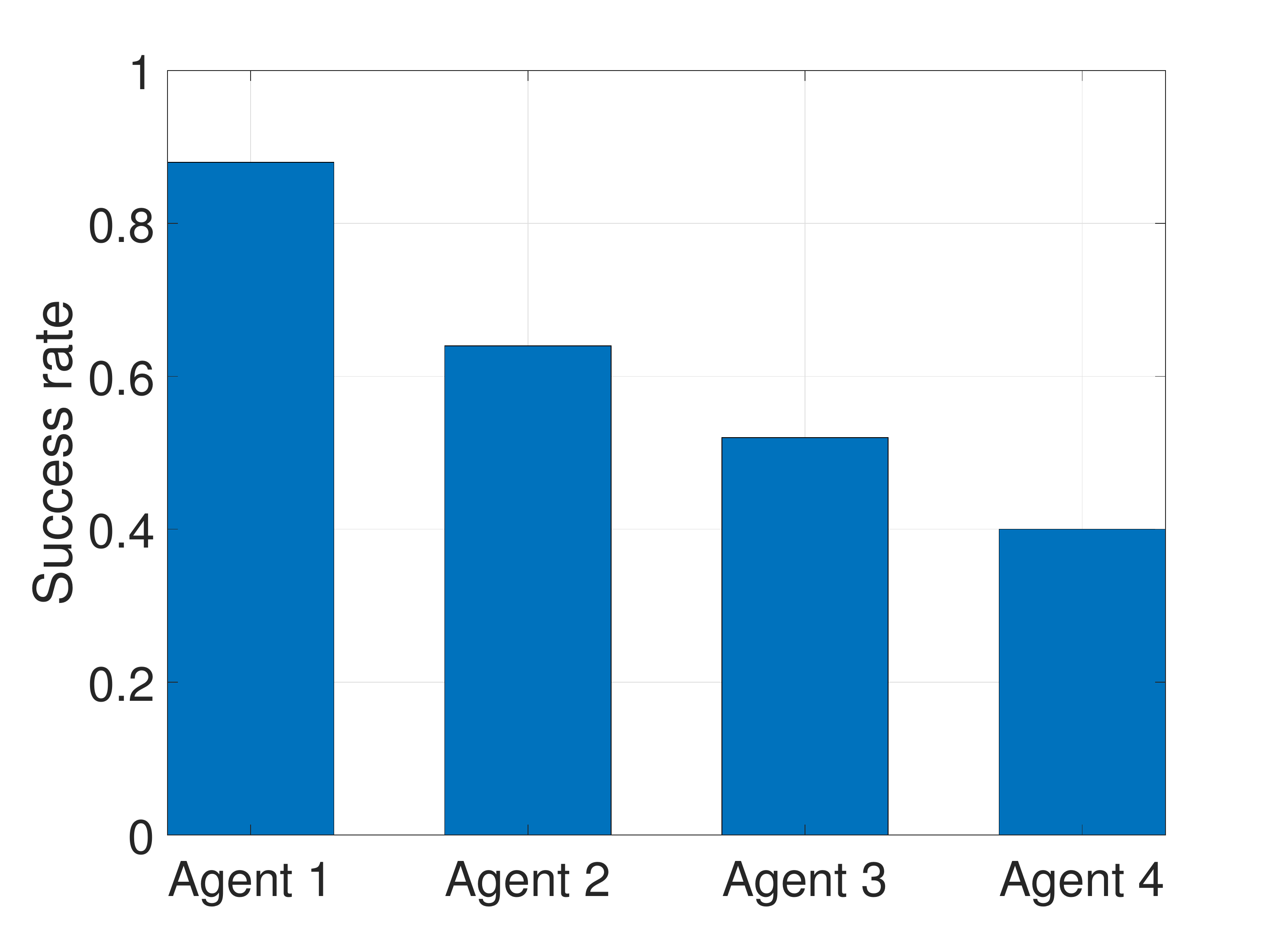}}
	\subfloat[]{\includegraphics[width=0.25\linewidth, height = 0.2\linewidth, trim = {2cm 0.7cm 3cm 0cm}, clip]{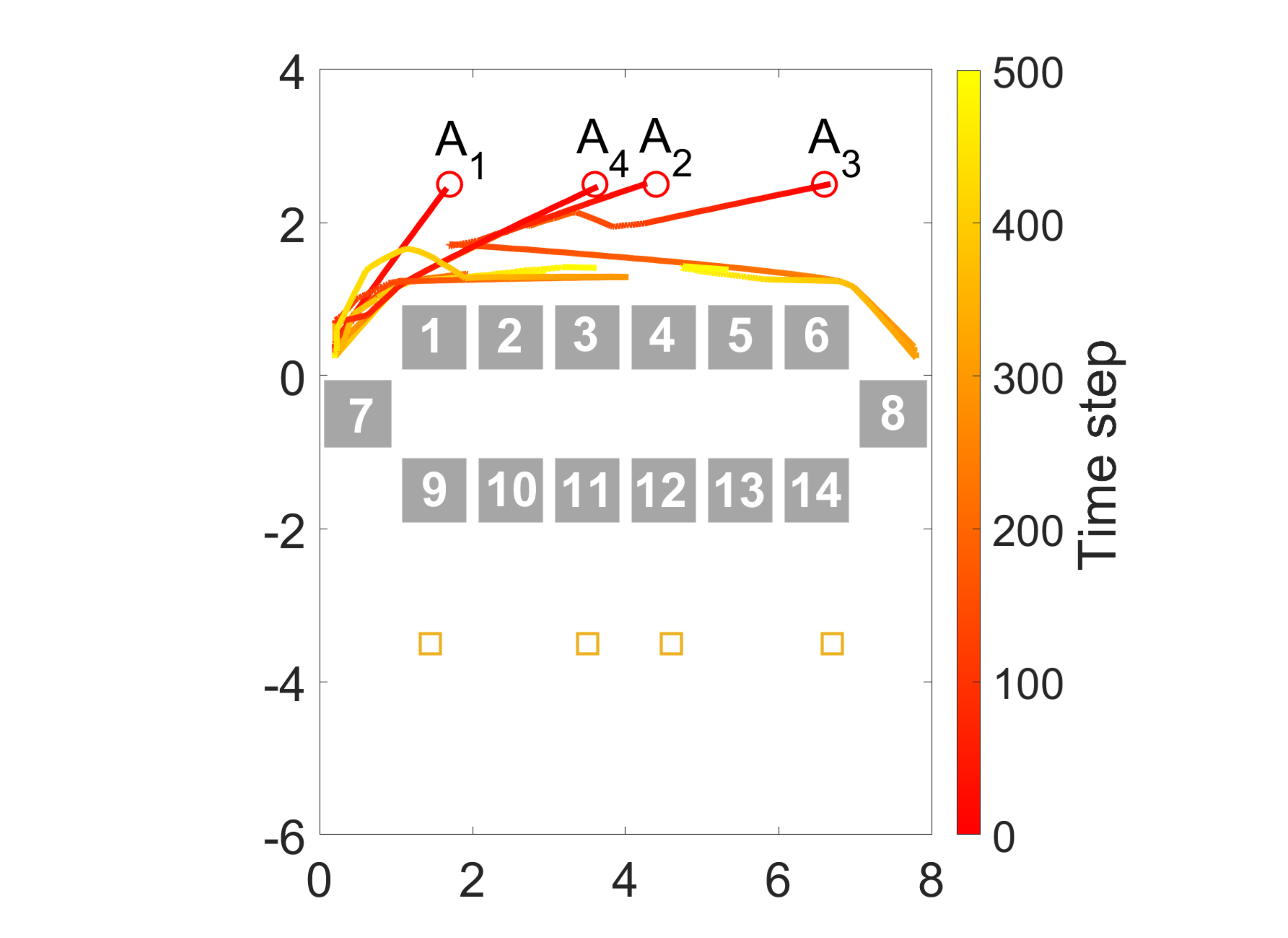}}
	\subfloat[]{\includegraphics[width=0.25\linewidth, height = 0.2\linewidth, trim = {2cm 0.7cm 3cm 0cm}, clip]{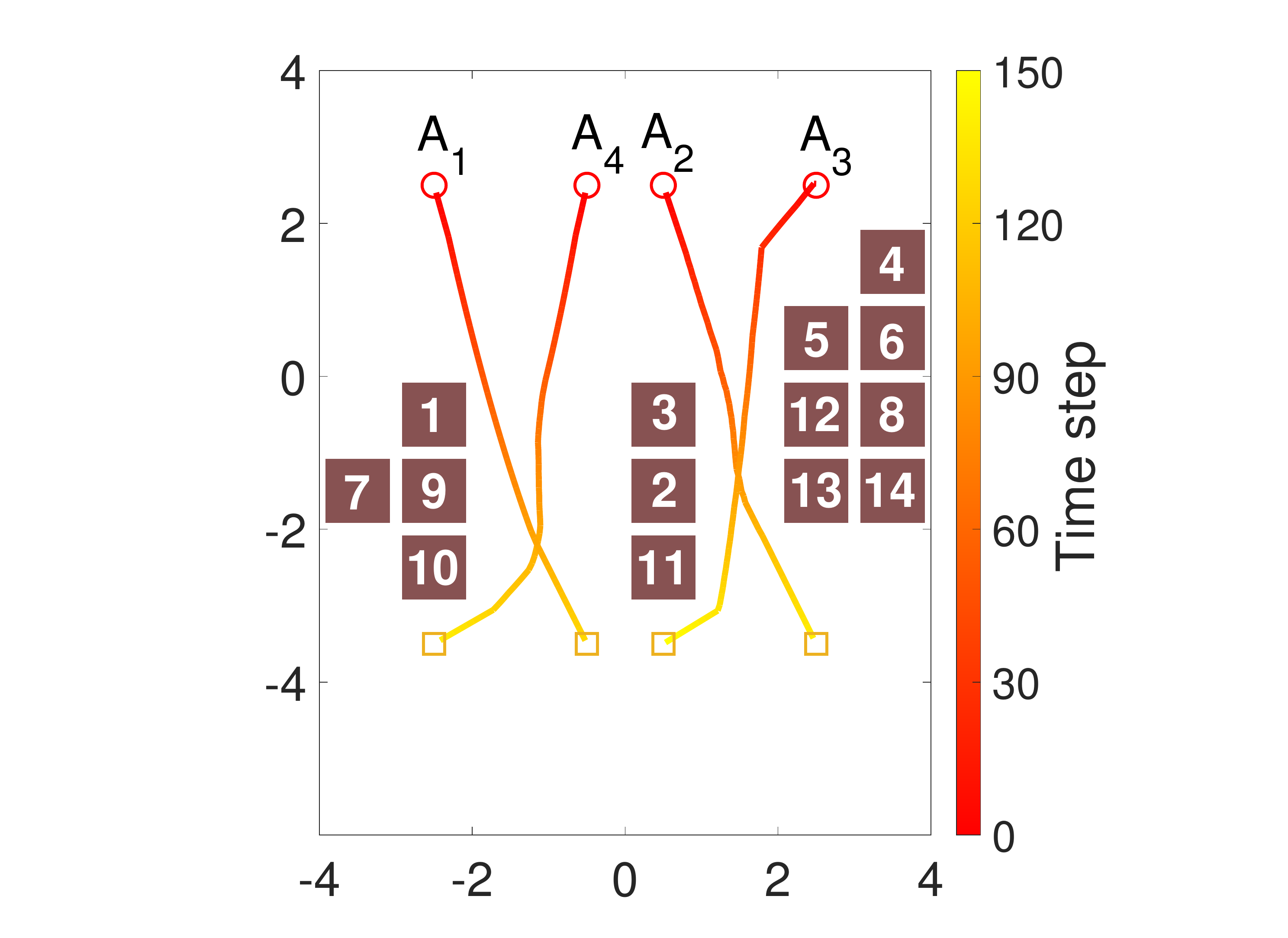}}
	\caption{(a-b) Performance of offline prioritized environment optimization with agent priorities $\rho_1 \ge \rho_2 \ge \rho_3 \ge \rho_4$. (a) PCTSpeeds and distance ratios of different agents on $14$ obstacles ($1$ is best). (b) Success rates of different agents on $24$ obstacles. (c-d) Example of offline prioritized environment optimization. Red circles are initial positions, yellow squares are destinations, and obstacles are numbered for exposition. 
		Color lines from red to yellow are agent trajectories and the color bar represents the time scale. Agent priorities $\{\rho_i\}_{i=1}^4$ reduce from $A_1$ to $A_4$, i.e., $\rho_1 \ge \cdots \ge \rho_4$. (c) Navigation failures in original environment. (d) Completeness with offline prioritized environment optimization. PCTSpeeds of agents are $0.98, 0.94, 0.88, 0.86$ and distance ratios are $0.99, 0.98, 0.95, 0.93$, i.e., agent performance strictly increases with the priority.}\label{fig:offline_prioritized}\vspace{-4mm}
\end{figure*}

\subsection{Unprioritized Environment Optimization}\label{subsec:unprioritizedEO}

We start with unprioritized environment optimization. 

\smallskip
\noindent\textbf{Offline discrete setting.} We consider a grid environment of size $8 \times 8$ with $10$ obstacles and $4$ agents, which are distributed randomly without overlap. The maximal agent / obstacle velocity is $0.05$ and the maximal time step is $500$.  

\smallskip
\noindent \emph{Setup.} The environment is modeled as an occupancy grid map. An agent's location is modeled by a one-hot in a matrix of the same dimension. At each step, the policy considers one of the obstacles and moves it to one of the adjacent grid cells. This repeats for $m$ obstacles, referred to as a round, and an episode ends if the maximal round has been reached.

\smallskip
\noindent \emph{Training.} The objective is to make agents reach destinations quickly while avoiding collision. The team reward is the sum of the PCTSpeed and the ratio of the shortest distance to the traveled distance, while the local reward is the collision penalty of individual obstacle [cf. \eqref{eq:obstacleRewardFunction}]. We parameterize the policy with a CNN of $4$ layers, each containing $25$ features with kernel size $2 \times 2$, and conduct training with PPO \cite{schulman2017proximal}.

\smallskip
\noindent \emph{Baseline.} Since exhaustive search methods are intractable for our problem, we develop a strong heuristic method to act as a baseline: At each step, one of the obstacles computes the shortest paths of all agents, checks whether it blocks any of these paths, and moves away randomly if so. This repeats for $m$ obstacles, referred to as a round, and the method ends if the maximal round is reached. 

\smallskip
\noindent \emph{Performance.} We train our model on $10$ obstacles and test on $10$ to $18$ obstacles, which varies obstacle density from $10\%$ to $30\%$. Figs. \ref{fig:offline}(a)-(b)  
show the results. The proposed approach consistently outperforms baselines with the highest SPL/PCTSpeed and the lowest variance. The heuristic method takes the second place and the original scenario (without any environment modification) performs worst. As we generalize to higher obstacle densities, all methods degrade as expected. However, our approach maintains a satisfactory performance due to the CNN's cabability for generalization. Figs. \ref{fig:offline}(c)-(d) show an example of how the proposed approach circumvents the dead-locks by optimizing the obstacle layout. Moreover, it improves the path efficiency such that all agents find collision-free trajectories close to their shortest paths.

\smallskip
\noindent \textbf{Online continuous setting.} We proceed to a continuous environment. The agents are initialized randomly in an arbitrarily defined starting region and towards destinations in an arbitrarily defined goal region.

\smallskip
\noindent \emph{Setup.} The agents and obstacles are modeled by positions $\{\bbp_{a,i}\}_{i=1}^n$, $\{\bbp_{o,j}\}_{j=1}^m$ and velocities $\{\bbv_{a,i}\}_{i=1}^n$, $\{\bbv_{o,j}\}_{j=1}^m$. At each step, each obstacle has a local policy that generates the desired velocity with neighborhood information and we integrate an acceleration-constrained mechanism for position changes. An episode ends if all agents reach destinations or the episode times out. The maximal acceleration is $0.05$, the communication radius is $2$ and the episode time is $500$. 

\smallskip
\noindent \emph{Training.} The team reward in \eqref{eq:obstacleRewardFunction} guides the agents to their destinations as quickly as possible and is defined as \vspace{-2mm}
\begin{align}
	r_{a,i}^{(t)} = \Big(\frac{\bbp_{i}^{(t)}-\bbd_i}{\|\bbp_{i}^{(t)}-\bbd_i\|_2} \cdot \frac{\bbv_{i}^{(t)}}{\|\bbv_{i}^{(t)}\|_2} \Big) \|\bbv_{i}^{(t)}\|_2
\end{align}
at time step $t$, which rewards fast movements towards the destination and penalizes moving away from it. The local reward is the collision penalty. We parameterize the policy with a single-layer GNN. The message aggregation function and feature update function are multi-layer perceptrons (MLPs), and we train the model using PPO.

\smallskip
\noindent \emph{Performance.} The results are shown in Figs. \ref{fig:online}(a)-(b). The proposed approach exhibits the best performance for both metrics and maintains a good performance for large scenarios. We attribute the latter to the fact that GNNs exploit topological information for feature extraction and are scalable to larger graphs. The heuristic method performs worse but comparably for small number of obstacles, while degrading with the increasing of obstacles. It is note-worthy that the heuristic method is centralized because it requires computing shortest paths of all agents, and hence is not applicable for online optimization and considered here as a benchmark value only for reference. Fig. \ref{fig:online}(c) shows the moving trajectories of agents and example obstacles. We see that obstacles make way for agents to facilitate navigation s.t. agents find trajectories close to their shortest paths.

\subsection{Prioritized Environment Optimization}\label{subsec:prioritizedEO}

We proceed to prioritized environment optimization with real-world constraints, where the agent priorities are set as $\bbrho = [2, 1, 0.5, 0.1]^\top$. The environment settings and implementation details are the same as the unprioritized environment optimization in Section \ref{subsec:unprioritizedEO}.

\smallskip
\noindent\textbf{Offline discrete setting.} We consider the constraint as the maximal round that can be performed by the policy $\pi_o$, i.e., the maximal number of grids each obstacle can move, which restricts the capacity of offline environment optimization. 
This constraint can be satisfied by setting the maximal length of an episode in reinforcement learning and thus, the primal-dual mechanism is not needed in this setting. We consider the maximal round as $8$ and measure the performance with the PCTSpeed and the distance ratio, where the latter is the ratio of the shortest distance to the agent's traveled distance. 

\begin{figure}
	\centering
	\subfloat[]{\includegraphics[width=0.5\linewidth]{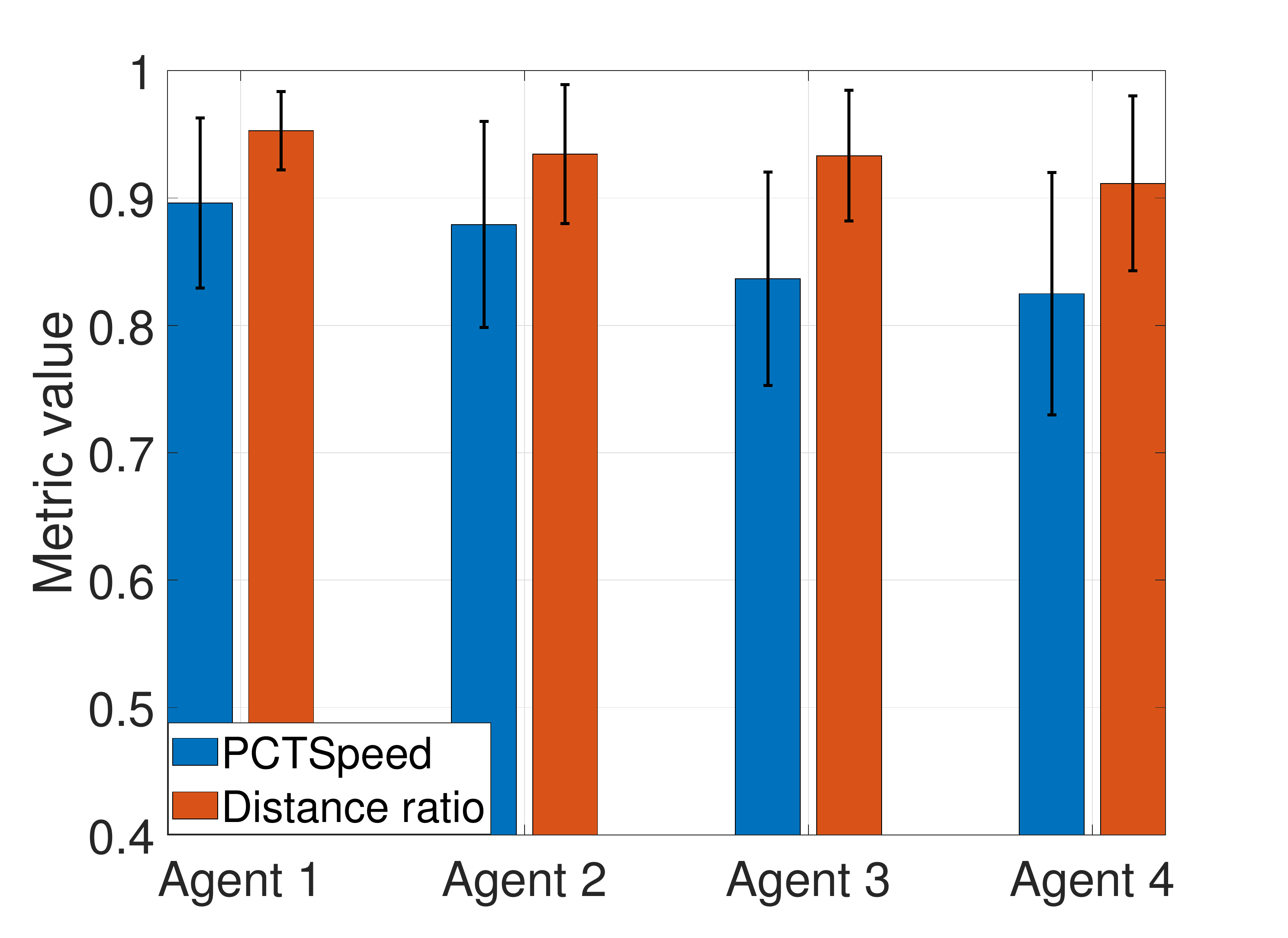}}
	\subfloat[]{\includegraphics[width=0.5\linewidth]{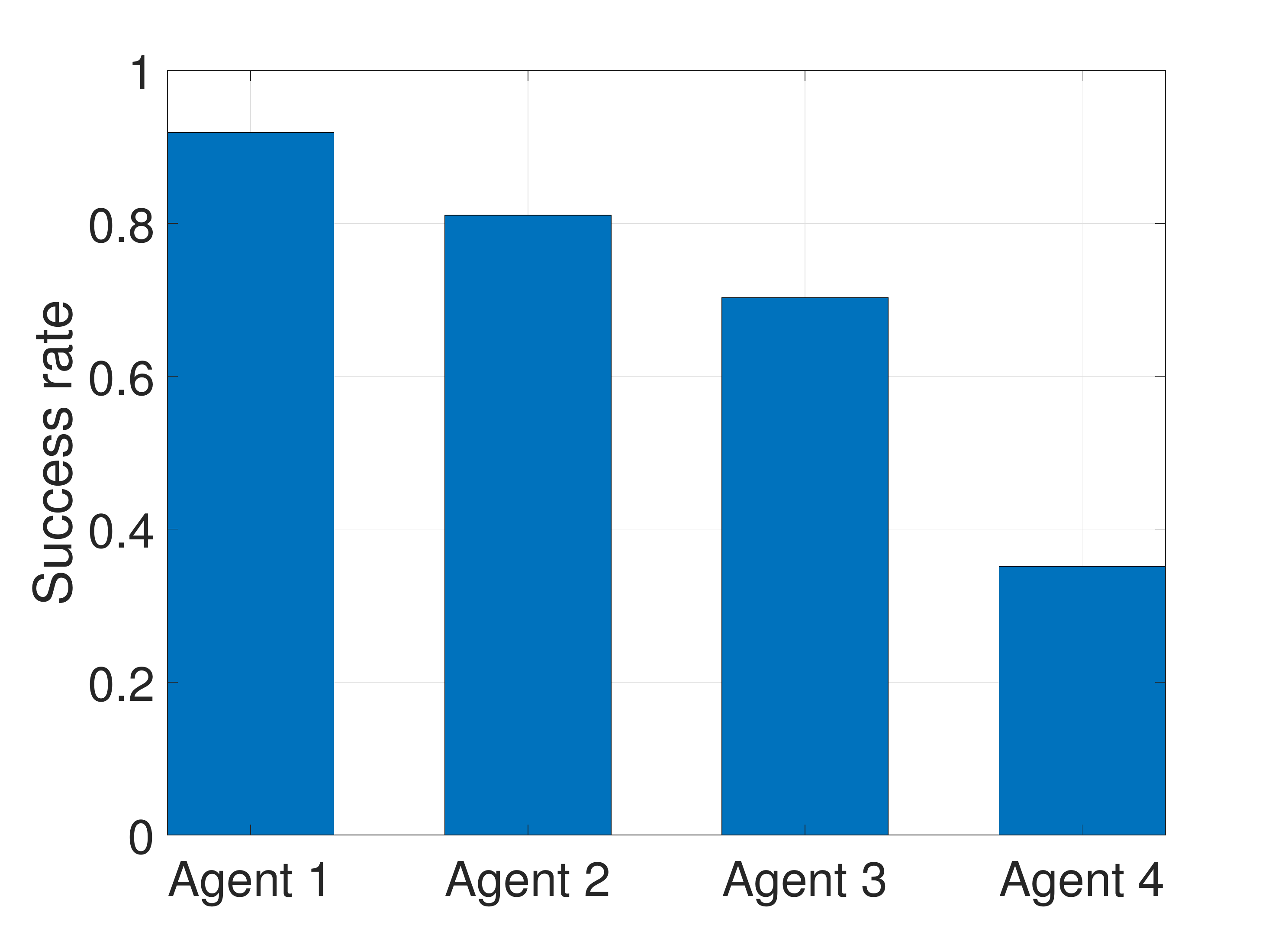}}
	\caption{Performance of online prioritized environment optimization with agent priorities $\rho_1 \ge \rho_2 \ge \rho_3 \ge \rho_4$. (a) PCTSpeeds and distance ratios of different agents on $14$ obstacles ($1$ is best). (b) Success rates of different agents within $150$ time steps.}\label{fig:online_prioritized}\vspace{-4mm}
\end{figure}

\begin{figure*}
	\centering
	\subfloat[]{\includegraphics[width=0.33\linewidth, height = 0.25\linewidth, trim = {0 0.7cm 3cm 0cm}, clip]{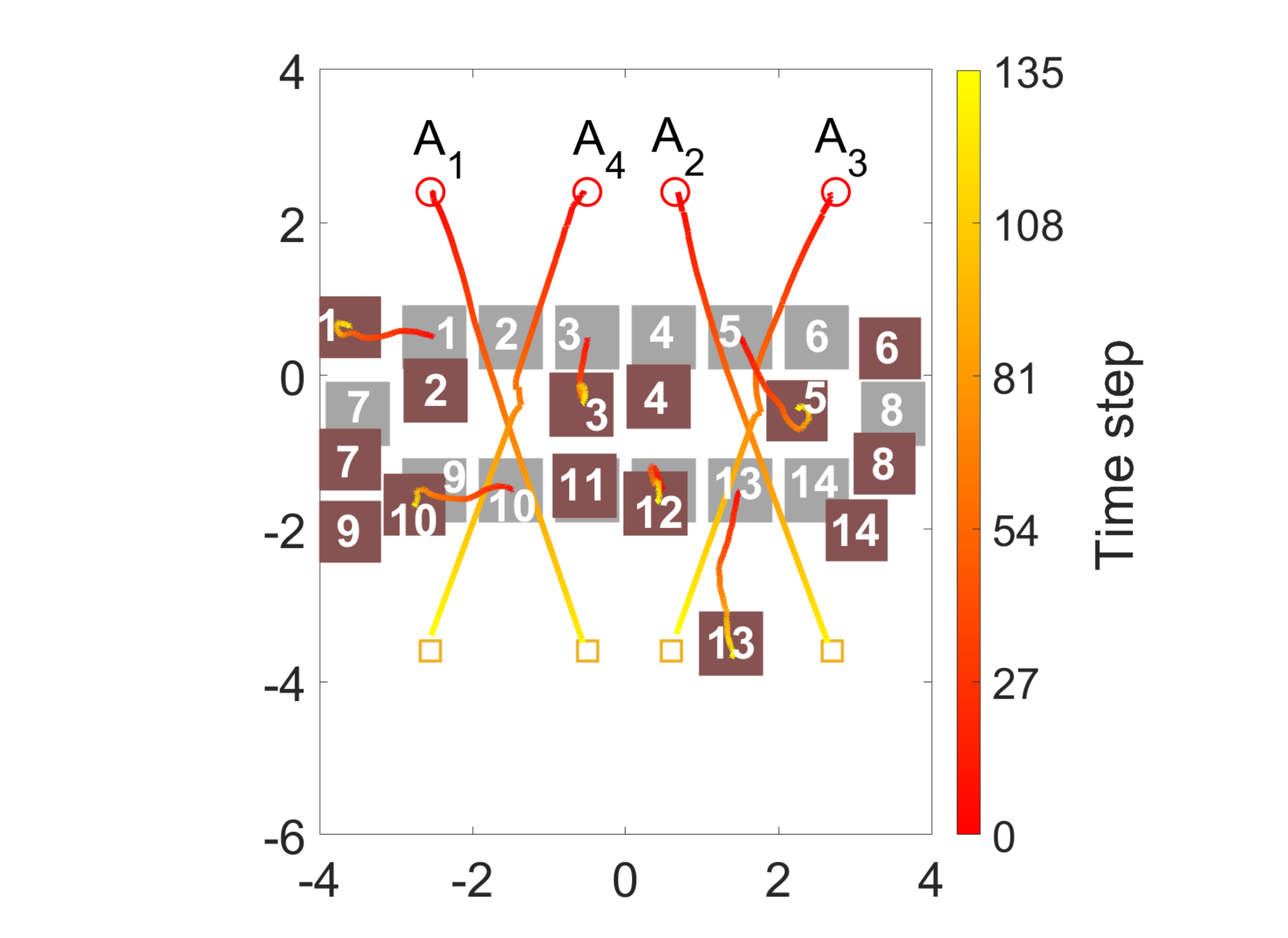}}
	\subfloat[]{\includegraphics[width=0.33\linewidth, height = 0.25\linewidth, trim = {0 0.7cm 3cm 0cm}, clip]{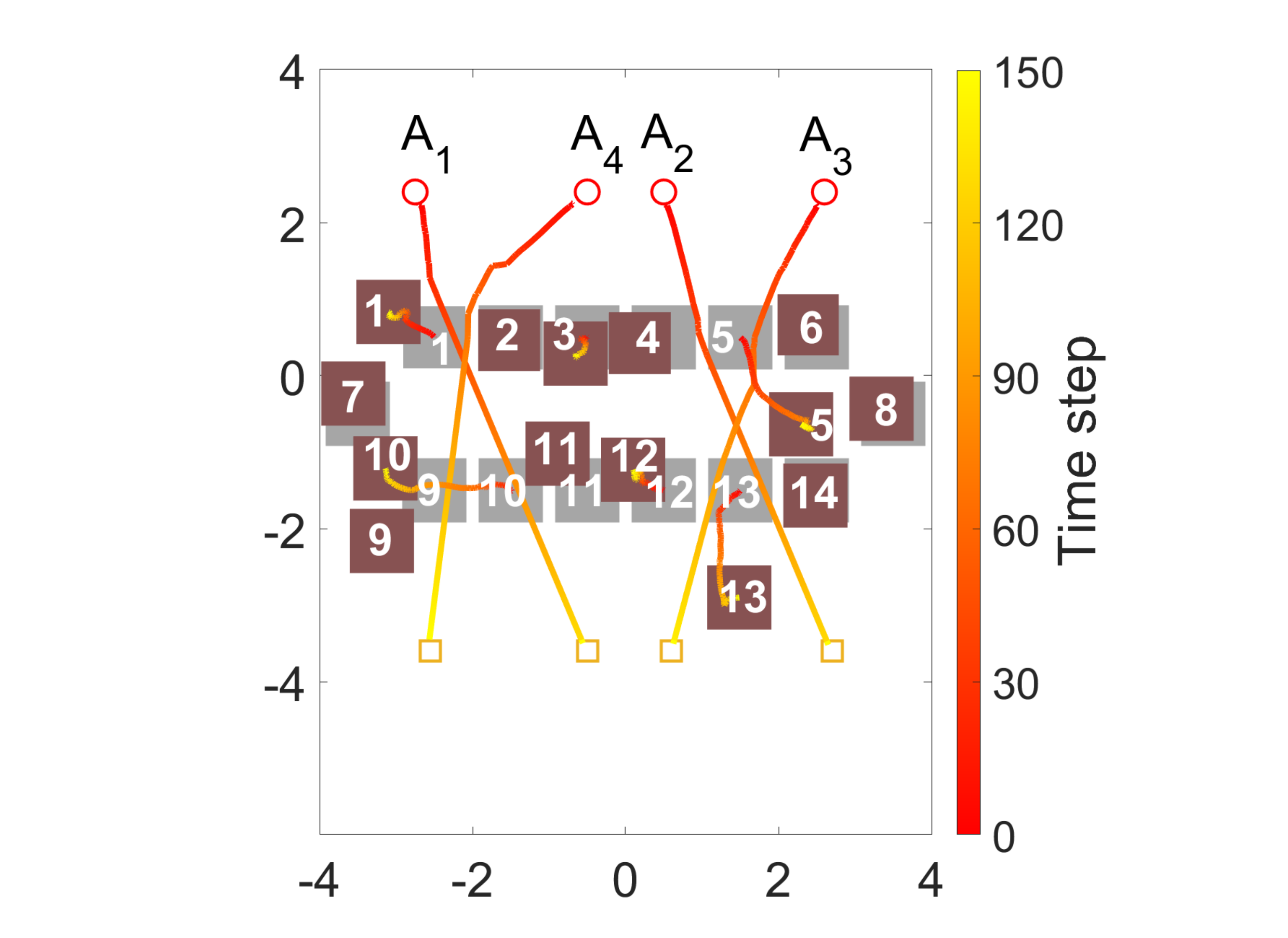}}
	\subfloat[]{\includegraphics[width=0.33\linewidth, height = 0.25\linewidth, trim = {0 0.7cm 3cm 0cm}, clip]{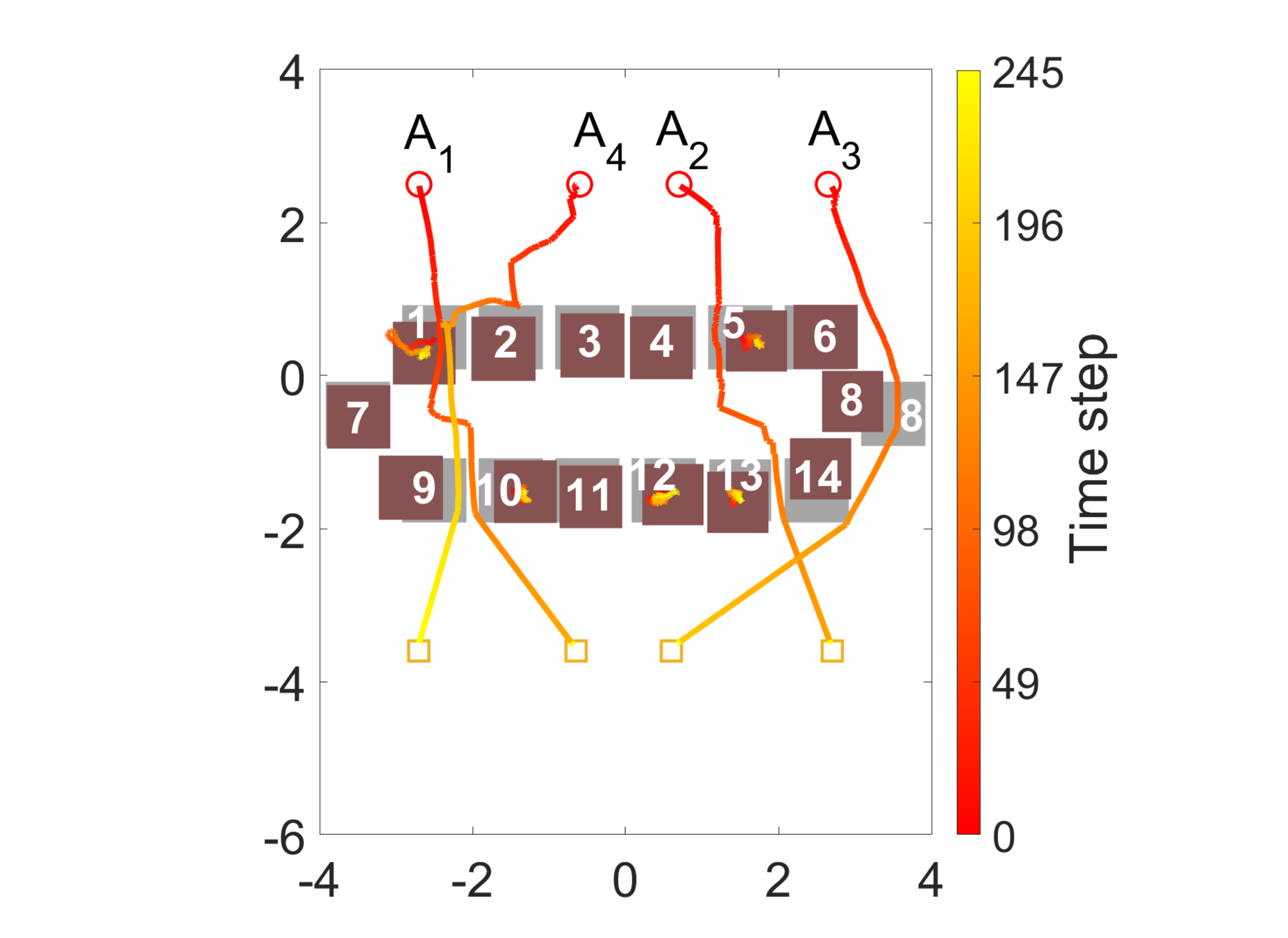}}
	\caption{Examples of online prioritized environment optimization. Red circles are initial positions and yellow squares are destinations. Grey and brown rectangles are obstacles before and after environment optimization, and are numbered for exposition. Color lines from red to yellow are agent trajectories and the color bar represents the time scale. Agent priorities $\{\rho_i\}_{i=1}^4$ reduce from $A_1$ to $A_4$, i.e., $\rho_1 \ge \cdots \ge \rho_4$. (a) Completeness with deviation distance constraint of $C_d=14$. PCTSpeeds of agents are $0.98, 0.97, 0.94, 0.91$ and distance ratios are $0.99, 0.99, 0.98, 0.97$. (b) Completeness with deviation distance constraint of $C_d=10$. PCTSpeeds of agents are $0.97, 0.95, 0.88, 0.84$ and distance ratios are $0.98, 0.98, 0.96, 0.94$. (c) Completeness with deviation distance constraint of $C_d=4$. PCTSpeeds of agents are $0.86, 0.82, 0.76, 0.63$ and distance ratios are $0.93, 0.92, 0.85, 0.82$. Agent performance strictly increases with the priority.}\label{fig:prioritized_example}\vspace{-4mm}
\end{figure*}

\begin{figure*}
\centering
\subfloat[]{\includegraphics[width=0.33\linewidth]{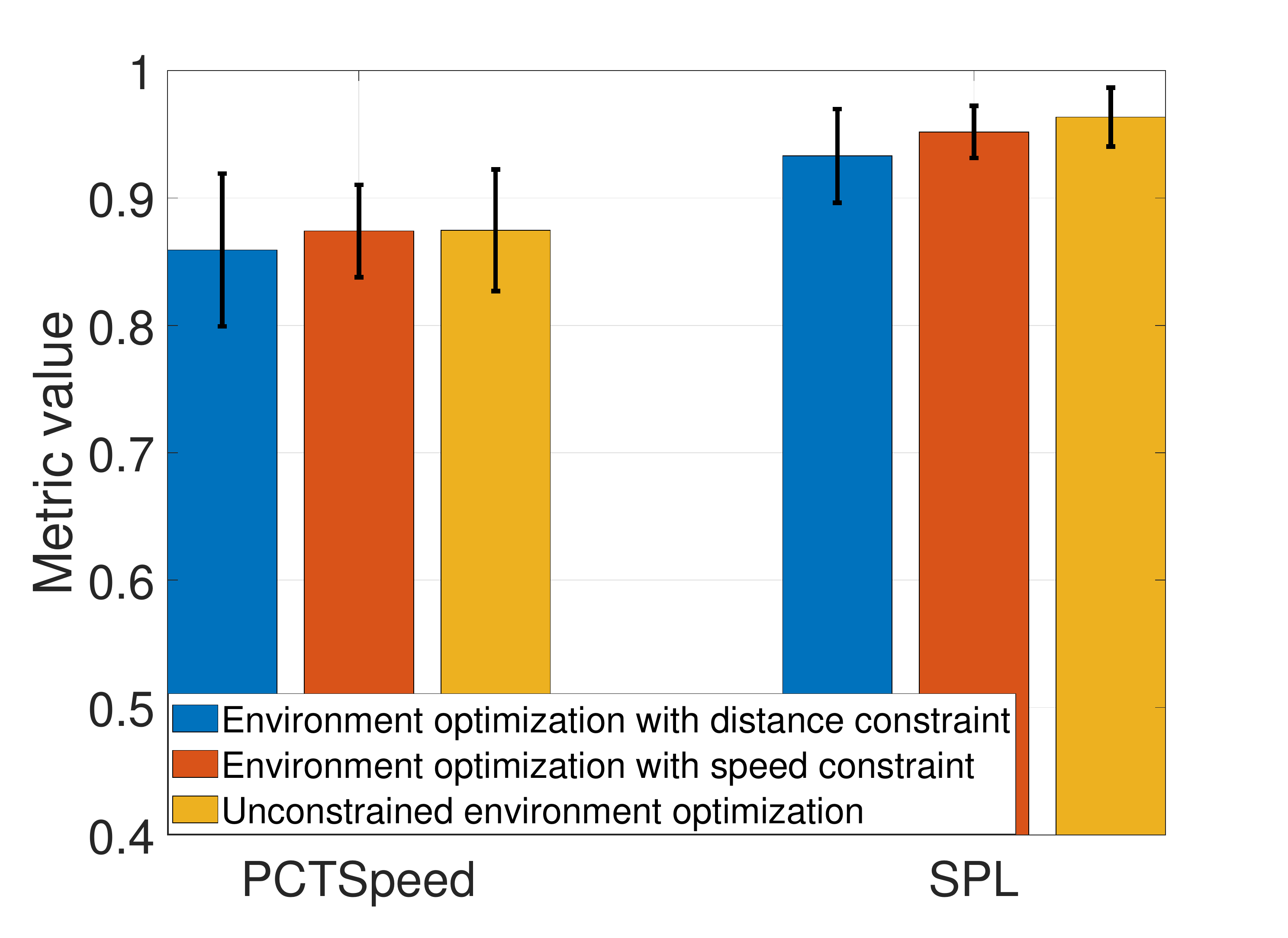}}
\subfloat[]{\includegraphics[width=0.33\linewidth]{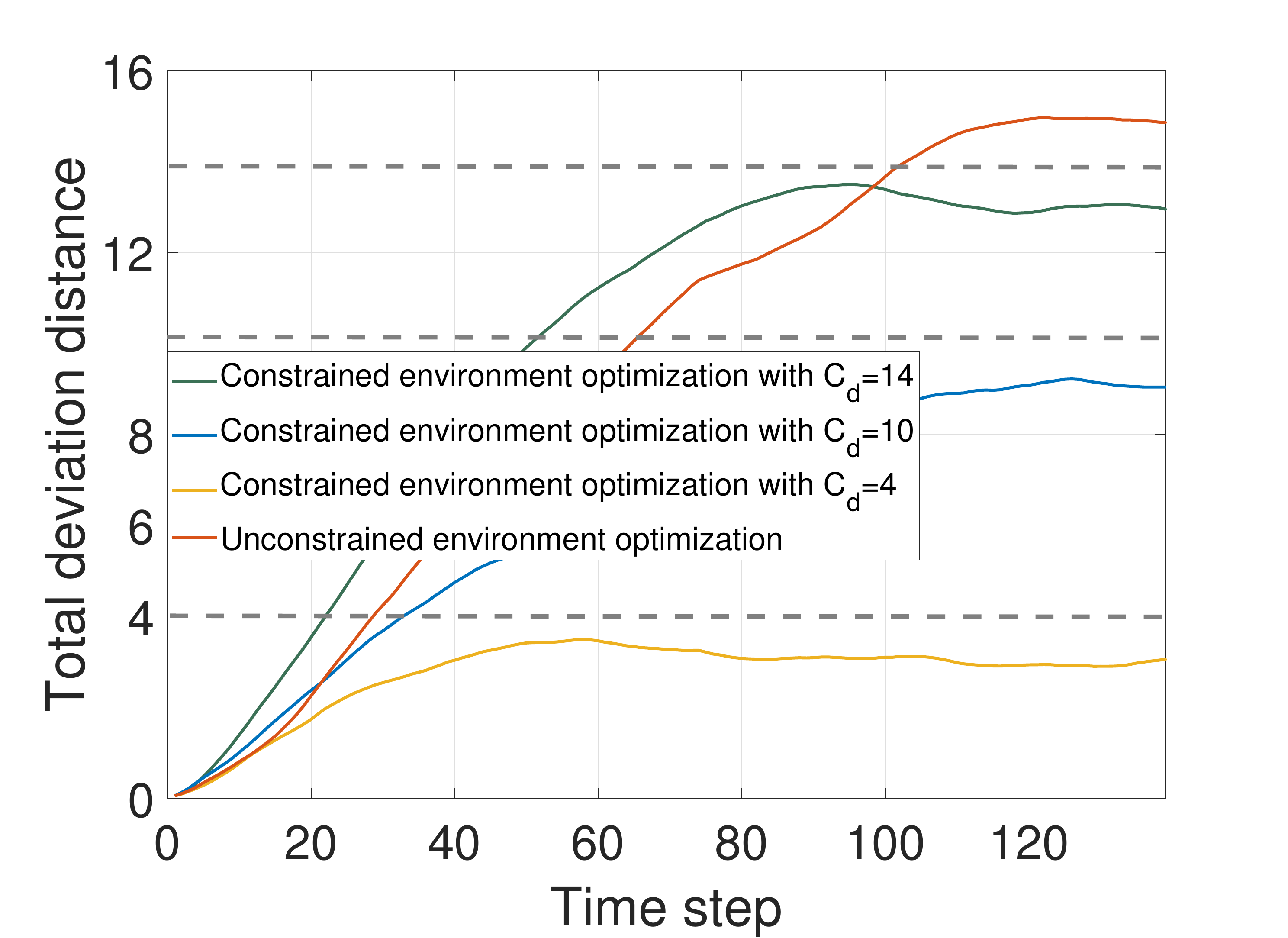}}
\subfloat[]{\includegraphics[width=0.33\linewidth]{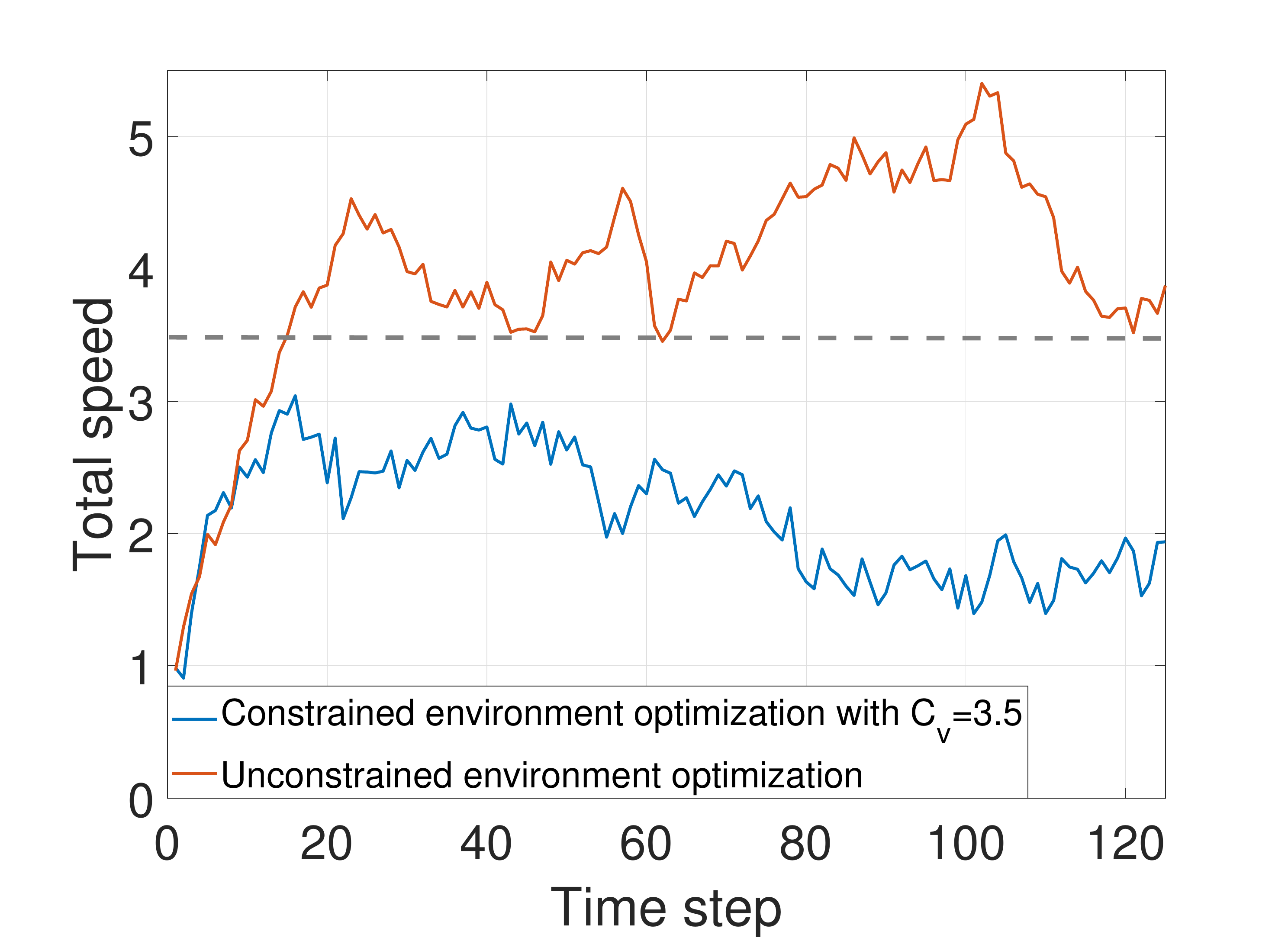}}
\caption{(a) Performance of online prioritized environment optimization with diferent constraints. (b) Deviation distance constraint of obstacles throughout the navigation procedure. (c) Speed constraint of obstacles throughout the navigation procedure.}\label{fig:comparison}\vspace{-4mm}
\end{figure*}

\smallskip
\noindent \emph{Performance.} We evaluate the proposed approach 
on $14$ obstacles and the results are shown in Fig. \ref{fig:offline_prioritized}(a). We see that agent $A_1$ with the highest priority exhibits the best performance with the highest PCTSpeed and distance ratio. The performance degrades as the agent priority decreases (from $A_1$ to $A_4$), which corresponds to theoretical findings in Theorem \ref{thm:prioritizedOfflineCompleteness}. Figs. \ref{fig:offline_prioritized}(c)-(d) display an example of how offline prioritized environment optimization optimizes the obstacle layout. It guarantees the success of all navigation tasks and improves agents' path efficiencies based on their priorities, i.e., it emphasizes the higher-priority agents ($A_1$ and $A_2$) over the lower-priority agents ($A_3$ and $A_4$).

\smallskip
\noindent \emph{Partial completeness.} We test the trained model on $24$ obstacles, where the obstacle region is dense and the obstacle-free area is limited. The success rates of all agents are zero without environment optimization. Fig. \ref{fig:offline_prioritized}(b) shows the results with offline prioritized environment optimization. Agent $A_1$ with the highest priority achieves the best success rate and the success rate decreases with the agent priority, which corroborates the partial completeness in Corollary \ref{coro:incompletePrioritizedOfflineEO}. We note that the success rate of $A_1$ is not one ($100 \%$) because (i) we train our model on $14$ obstacles but test it on $24$ obstacles and (ii) the proposed approach obtains a local not global solution, leading to inevitable performance degradation compared with theoretical analysis.

\smallskip
\noindent\textbf{Online continuous setting.} We consider the constraint as the total deviation distance of the obstacles away from their initial positions, i.e., $\sum_{j=1}^m \|\bbo_j^{(t)} - \bbo_j^{(0)}\|_2 \le C_d$ for all $t$, where $\bbo_j^{(t)}$ is the position of obstacle $j$ at time step $t$ and $\bbo_j^{(0)}$ is the initial position. It corresponds to practical scenarios where the obstacles are not allowed moving too far from their original positions. 

\smallskip
\noindent \emph{Performance.} We set $C_d = 10$ and show the 
results in Fig. \ref{fig:online_prioritized}(a). Agent $A_1$ with the highest priority performs best, and the performance degrades from $A_1$ to $A_4$ with the decreasing of agent priority, corroborating our analysis in Theorem \ref{thm:prioritizedOnlineCompleteness}. Fig. \ref{fig:comparison}(b) shows the constraint value as a function of time steps. The total deviation distance of the obstacles is smaller than the deviation bound throughout the navigation procedure, which validates the proposed primal-dual method. Fig. \ref{fig:prioritized_example}(b) shows the moving trajectories of agents and example obstacles in online prioritized environment optimization. The obstacles change positions to improve agents' performance and the higher-priority agents (e.g., $A_1$) are emphasized over the lower-priority agents (e.g., $A_4$). For example, given the deviation distance constraint, the obstacles create an (almost) shortest path for $A_1$ but not for $A_4$. 

\smallskip
\noindent \emph{Partial completeness.} We corroborate the partial completeness in Corollary \ref{coro:IncompletePrioritizedOnlineEO} by requiring that all agents arrive at destinations within $T_{\max}=150$ time steps. Fig. \ref{fig:online_prioritized}(b) shows that the success rate decreases from higher-priority agents to lower-priority ones. The success rate of $A_1$ is not one ($100 \%$) because (i) the obtained solution is local not global and (ii) we impose the constraint of deviation distance on the environment optimization. 

\smallskip
\noindent \emph{Constraint bound.} We test different constraint bounds $C_d=14, 10$ and $4$, the decreasing of which corresponds to the increasing of environment restrictions. 
Fig. \ref{fig:comparison}(b) shows that all variants satisfy the constraints, and Fig. \ref{fig:prioritized_example} displays three examples of agent and obstacle trajectories. We see that all variants guarantee the success of navigation tasks. The variant with the largest bound performs best with agent trajectories close to the shortest paths, where agent priorities do not play an important role given sufficient resources. The variant with the lowest bound suffers from performance degradation because of the strongest constraint. It puts more emphasis on the higher-priority agents ($A_1$ and $A_2$) while overlooking the lower-priority agents ($A_3$ and $A_4$), which corroborates theoretical analysis in Section \ref{sec:prioritizedEO}

%

\smallskip
\noindent \emph{Velocity constraint.} We show that the proposed approach can handle various constraints. Here, we test a constraint on the total speed of the obstacles, i.e., $\sum_{j=1}^m \|\bbv_j^{(t)}\|_2 \le C_v$ for all $t$. We set $C_v = 3.5$ and Fig. \ref{fig:comparison}(c) shows that the speed constraint is satisfied throughout the navigation procedure. The obstacle speed first increases to make way for the agents and then slows down to satisfy the constraint. Fig. \ref{fig:comparison}(a) compares the performance of environment optimization with different constraints to the unconstrained counterpart. The constrained variants achieve comparable performance (slightly worse), but saves the traveled distance and the velocity energy of the obstacles.  


\section{Conclusion} \label{sec:conclusion}

We proposed novel problems of unprioritized and prioritized environment optimization for multi-agent navigation, each of which contains offline and online variants. By conducting the completeness analysis, we provided conditions under which all navigation tasks are guaranteed success and identified the role played by agent priorities in environment optimization. We imposed constraints on the environment optimization corresponding to real-world restrictions, and formulated the latter as a constrained stochastic optimization problem. We leveraged model-free reinforcement learning together with a primal-dual mechanism to solve the problem. The former overcomes the challenge of explicitly modeling the relation between agents, environment and navigation performance, while the latter handles the constraints. By integrating different information processing architectures (e.g., CNNs and GNNs) for policy parameterization, the proposed approach can adapt to different implementation requirements. Numerical results corroborate theoretical findings, and show adaptability to various objectives and constraints.

\appendices 









\section{Proof of Theorem \ref{thm:offlineCompleteness}}\label{Appendix:Theorem1}

We prove the theorem as follows. First, we optimize $\Delta$ such that the environment is ``well-formed", i.e., any initial position in $\ccalS$ and destination in $\ccalD$ can be connected by a collision-free path. Then, we show the optimized environment guarantees the success of all navigation tasks. 

\smallskip
\noindent \textbf{Obstacle region optimization.} We first optimize $\Delta$ based on $\ccalS$ and $\ccalD$ to make the environment ``well-formed". To do so, we handle $\ccalS$, $\ccalD$ and the other space $\ccalE \setminus (\ccalS \bigcup \ccalD)$ separately. 

\emph{(i)} From Assumption \ref{as:initialDistribution}, the initial positions $\{\bbs_i\}_{i=1}^n$ in $\ccalS$ are distributed such that $d(\bbs_i,\bbs_j) \ge 2 \hat{r}$ and $d(\bbs_i,\partial \ccalS) \ge 2 \hat{r}$. Thus, for any $\bbs_i$, there exists a boundary point $\partial \bbs_i \in \partial \ccalS$ and a path $\bbp_{\bbs_i}^{\partial \bbs}$ connecting $\bbs_i$ and $\partial \bbs_i$ that is collision-free with respect to the other initial positions. 

\emph{(ii)} Similar result applies to the destinations $\{\bbd_i\}_{i=1}^n$ in $\ccalD$, i.e., for any $\bbd_i$, there exists a boundary point $\partial \bbd_i \in \partial \ccalD$ and a path $\bbp_{\partial \bbd_i}^{\bbd_i}$ connecting $\partial \bbd_i$ and $\bbd_i$ that is collision-free with respect to the other destinations.

\emph{(iii)} Consider $\partial \bbs_i$ and $\partial \bbd_i$ for agent $A_i$. The shortest path $\bbp_{\partial \bbs_i}^{\partial \bbd_i}$ that connects them is the straight path, the area of which is bounded as 
\begin{align}
	|\bbp_{\partial \bbs_i}^{\partial \bbd_i}| 
	\le 2 d_{\max}\hat{r}
\end{align}
because $d_{\max}$ is the maximal distance between $\ccalS$ and $\ccalD$. From $|\ccalE \setminus (\Delta \cup \ccalS \cup \ccalD)| \ge 2 n d_{\max} \hat{r}$ in \eqref{eq:offlineCompletenessCondition}, the area of the obstacle-free space in $\ccalE \setminus (\ccalS \bigcup \ccalD)$ is larger than $2 n d_{\max} \hat{r}$. Thus, we can always optimize $\Delta$ to $\Delta^*$ such that the path $\bbp_{\partial \bbs_i}^{\partial \bbd_i}$ is obstacle-free. If $\bbp_{\partial \bbs_i}^{\partial \bbd_i}$ dose not overlap with $\ccalS$ and $\ccalD$, we can connect $\partial \bbs_i$ and $\partial \bbd_i$ with $\bbp_{\partial \bbs_i}^{\partial \bbd_i}$ directly. If $\bbp_{\partial \bbs_i}^{\partial \bbd_i}$ passes through $\ccalS$ for $K$ times, let $\bbs_{i,e}^{(k)}$ and $\bbs_{i,l}^{(k)}$ be the entering and leaving positions of $\bbp_{\partial \bbs_i}^{\partial \bbd_i}$ on $\ccalS$ at $k$th pass for $k=1,...,K$ with $\bbs_{i,l}^{(0)} = \partial \bbs_i$ the initial leaving position. First, we can connect $\bbs_{i,l}^{(k-1)}$ and $\bbs_{i,e}^{(k)}$ by $\bbp_{\partial \bbs_i}^{\partial \bbd_i}$ because $\bbp_{\partial \bbs_i}^{\partial \bbd_i}$ is obstacle-free. Then, there exists a collision-free path $\bbp_i^{(k)}$ inside $\ccalS$ that connects $\bbs_{i,e}^{(k)}$ and $\bbs_{i,l}^{(k)}$ as described in (i). Same result applies to that $\bbp_{\partial \bbs_i}^{\partial \bbd_i}$ passes through $\ccalD$. Therefore, we can connect $\partial \bbs_i$ and $\partial \bbd_i$ with $\bbp_{\partial \bbs_i}^{\partial \bbd_i}$ and $\{\bbp_i^{(k)}\}_{k=1}^K$. 

By concatenating $\bbp_{\bbs_i}^{\partial \bbs_i}$, $\bbp_{\partial \bbs_i}^{\partial \bbd_i}$, $\{\bbp_i^{(k)}\}_{k=1}^K$ and $\bbp_{\partial \bbd_i}^{\bbd_i}$, we can establish the path $\bbp_{\bbs_i}^{\bbd_i}$ connecting $\bbs_i$ to $\bbd_i$ that is collision-free w.r.t. the other initial positions, destinations and the optimized obstacle region $\Delta^*$ for $i=1,...,n$, i.e., the optimized environment is ``well-formed".

\smallskip
\noindent \textbf{Completeness.} From Assumption \ref{as:trajectory} and the fact that the optimized environment is ``well-formed", Theorem 4 in \cite{vcap2015complete} shows that all navigation tasks will be carried out successfully without collision. Therefore, there exists an offline environment optimization scheme that guarantees the success of all navigation tasks completing the proof.	

\section{Proof of Theorem \ref{thm:onlineCompleteness}}\label{proof:Thm2}

We prove the theorem as follows. First, we separate the navigation procedure into $H$ time slices. Then, we optimize the obstacle region based on the agent positions at each time slice and show the completeness of individual time-sliced multi-agent navigation. Lastly, we show the completeness of the entire multi-agent navigation by concatenating individual time slices and complete the proof by limiting the number of time slices to the infinity, i.e., $H \to \infty$. 	

\smallskip
\noindent \textbf{Navigation procedure separation.} Let $T$ be the maximal operation time of trajectories $\{\bbp_i\}_{i=1}^n$ and $\{[(h-1)T/H, hT/H]\}_{h=1}^{H}$ the separated time slices. This yields intermediate positions $\big\{\bbp_i(hT/H)\big\}_{h=0}^{H}$ 
with $\bbp_i(0)=\bbs_i$ and $\bbp_i(T)=\bbd_i$ for $i=1,...,n$. We can re-formulate the navigation task into $H$ sub-navigation tasks, where the $h$th sub-navigation task of agent $A_i$ is from $\bbp_i((h-1)T/H)$ to $\bbp_i(hT/H)$ and the operation time of the sub-navigation task is $\delta t = T/H$. At each time slice, we first change the obstacle region based on the corresponding sub-navigation task and then navigate the agents until the next time slice. 

\smallskip
\noindent \textbf{Obstacle region optimization.} We consider each sub-navigation task separately and start from the $1$st one. Assume the obstacle region $\Delta$ satisfies 
\begin{align}\label{eq:proofTheorem21}
	|\ccalE \setminus (\Delta \cup \ccalS \cup \ccalD)| > 2 n \hat{r} \|\hat{\bbv}\|_2 \delta t.
\end{align}
For the $1$st sub-navigation task, the starting region is $\ccalS^{(1)} = \bigcup_{i=1,\ldots,n} B(\bbp_i(0), r_i) = \ccalS$ and the destination region is $\ccalD^{(1)} = \bigcup_{i=1,\ldots,n} B(\bbp_i(T/H), r_i)$. We optimize $\Delta$ based on $\ccalS^{(1)}$, $\ccalD^{(1)}$ and show the completeness of the $1$st sub-navigation task, which consists of two steps. First, we change $\Delta$ to $\widetilde{\Delta}$ such that $|\Delta| = |\widetilde{\Delta}|$ and $\widetilde{\Delta} \subset \ccalE \setminus (\ccalS^{(1)} \cup \ccalD^{(1)})$. This can be completed as follows. From the condition $\Delta \subset \ccalE \setminus (\ccalS \cup \ccalD)$ and $\ccalS^{(1)} = \ccalS$, there is no overlap between $\Delta$ and $\ccalS^{(1)}$. For any overlap region $\Delta \bigcap \ccalD^{(1)}$, we can change it to the obstacle-free region in $\ccalD$ because $\Delta \subset \ccalE \setminus (\ccalS \cup \ccalD)$ and $|\ccalD^{(1)}| = |\ccalD|$, and keep the other region in $\Delta$ unchanged. The resulting $\widetilde{\Delta}$ satisfies $|\widetilde{\Delta}| = |\Delta|$ and $\widetilde{\Delta} \subset \ccalE \setminus (\ccalS^{(1)} \cup \ccalD^{(1)})$. The changed area from $\Delta$ to $\widetilde{\Delta}$ is bounded by $|\ccalD^{(1)}|\le n \pi \hat{r}^2$. Second, we change $\widetilde{\Delta}$ to $\Delta^{(1)}$ such that the environment is ``well-formed" w.r.t. the $1$st sub-navigation task. The initial position $\bbp_i(0)$ and the destination $\bbp_i(H/T)$ can be connected by a path $\bbp_i^{(1)}$ that follows the trajectory $\bbp_i$. Since $\|\hat{\bbv}\|_2$ is the maximal speed and $\delta t$ is the operation time, the area of $\bbp_i^{(1)}$ is bounded by $2 \hat{r} \| \hat{\bbv} \|_2 \delta t$. Since this holds for all $i=1,...,n$, we have $\sum_{i=1}^n |\bbp_i^{(1)}| \le 2 n \hat{r} \| \hat{\bbv} \|_2 \delta t$. From \eqref{eq:proofTheorem21}, $|\ccalS^{(1)}| = |\ccalS|$, $|\ccalD^{(1)}| = |\ccalD|$ and $|\widetilde{\Delta}| = |\Delta|$, we have
\begin{align}\label{eq:proofTheorem23}
	|\ccalE \setminus (\widetilde{\Delta} \cup \ccalS^{(1)} \cup \ccalD^{(1)})| > 2 n \hat{r} \|\hat{\bbv}\|_2 \delta t.
\end{align}		
This implies that the obstacle-free area in $\ccalE$ is larger than the area of $n$ paths $\{\bbp_i^{(1)}\}_{i=1}^n$. Following the proof of Theorem \ref{thm:offlineCompleteness}, we can optimize $\widetilde{\Delta}$ to $\Delta^{(1)}$ to guarantee the success of the $1$st sub-navigation task. The changed area from $\widetilde{\Delta}$ to $\Delta^{(1)}$ is bounded by $2 n \hat{r} \|\hat{\bbv}\|_2 \delta t - n \pi \hat{r}^2$ because the initial positions $\{\bbp_i(0)\}_{i=1}^n$ and destinations $\{\bbp_i(T/H)\}_{i=1}^n$ in $\{\bbp_i^{(1)}\}_{i=1}^n$ are collision-free from the first step, which dose not require any further change of the obstacle region. The total changed area from $\Delta$ to $\Delta^{(1)}$ can be bounded as
\begin{align}\label{eq:proofTheorem24}
	\frac{\big|(\Delta \bigcup \Delta^{(1)}) \setminus (\Delta \bigcap \Delta^{(1)})\big|}{2} \le 2 n \hat{r} \|\hat{\bbv}\|_2 \delta t.
\end{align}

From \eqref{eq:proofTheorem23}, $|\Delta^{(1)}| = |\widetilde{\Delta}|$ and $\Delta^{(1)} \subset \ccalE \setminus (\ccalS^{(1)} \cup \ccalD^{(1)})$, the optimized $\Delta^{(1)}$ satisfies $|\ccalE \setminus (\Delta^{(1)} \cup \ccalS^{(1)} \cup \ccalD^{(1)})| \ge 2 n \hat{r} \|\hat{\bbv}\|_2 \delta t$, which recovers the assumption in \eqref{eq:proofTheorem21}. Therefore, we can repeat the above process and guarantee the success of $H$ sub-navigation tasks. The entire navigation task is guaranteed success by concatenating these sub-tasks. 

\smallskip
\noindent \textbf{Completeness.} When $H \to \infty$, we have $\delta t \to 0$. Since the environment optimization time is same as the agent operation time at each sub-navigation task, the obstacle region and the agents can be considered taking actions simultaneously when $\delta t \to 0$. The initial environment condition in \eqref{eq:proofTheorem21} becomes
\begin{align}\label{eq:proofTheorem27}
	\lim_{\delta t \to 0}| \ccalE \setminus (\Delta \cup \ccalS \cup \ccalD)| > 2 n \hat{r} \|\hat{\bbv}\|_2 \delta t \to 0.
\end{align}	 
which is satisfied from the condition $\ccalW \setminus (\Delta \cup \ccalS \cup \ccalD) \ne \emptyset$. The changed area of the obstacle region in \eqref{eq:proofTheorem24} becomes
\begin{align}\label{eq:proofTheorem28}
	\!\lim_{\delta t \to 0}\!\frac{\big|(\Delta^{(h)} \!\bigcup\! \Delta^{(h+1)}) \!\setminus\! (\Delta^{(h)} \!\bigcap\! \Delta^{(h+1)})\big|}{2 \delta t} \!\le\! 2 n \hat{r} \|\hat{\bbv}\|_2.
\end{align}
That is, if the capacity of the online environment optimization is stronger than $2 n \hat{r} \|\hat{\bbv}\|_2$, i.e., $\dot{\Delta} \ge 2 n \hat{r} \| \hat{\bbv} \|_2$, the navigation task can be carried out successfully without collision. Therefore, there exists an online environment optimization scheme that guarantees the success of all navigation tasks.

\section{Proof of Theorem \ref{thm:prioritizedOfflineCompleteness}}\label{proof:Theorem3}

We start by considering agent $A_1$ with the highest priority. From the condition \eqref{eq:prioritizedOfflineCompletenessCondition}, we have
\begin{align}\label{eq:proofThm31}
	|\ccalE \!\setminus\! (\Delta \!\cup\! \ccalS \!\cup\! \ccalD)| \!\ge\! 2 d_{A_1} \hat{r}.
\end{align}		
Following the proof of Theorem \ref{thm:offlineCompleteness}, we can optimize the obstacle region $\Delta$ to $\Delta^{(1)}$ such that for the initial position $\bbs_1$ and destination $\bbd_1$ of agent $A_1$, there exists a collision-free path $\bbp_1$ that connects $\bbs_1$ and $\bbd_1$.

We then consider the second agent $A_2$. From Assumption \ref{as:initialDistribution}, the initial positions $\{\bbs_i\}_{i=1}^n$ in $\ccalS$ are distributed such that $d(\bbs_i,\bbs_j) \ge 2 \hat{r}$, $d(\bbs_i,\partial \ccalS) \ge 2 \hat{r}$ and $\bbs_1$, $\bbs_2$ are distributed in the starting components $\ccalS_{A_1}$, $\ccalS_{A_2}$. If $\ccalS_{A_1} = \ccalS_{A_2}$, there exists a path $\bbp_{\bbs_2}^{\bbs_1}$ in $\ccalS_{A_1}$ without changing the obstacle region. If $\ccalS_{A_1} \ne \ccalS_{A_2}$, there exist boundary points $\partial \bbs_1 \in \partial \ccalS_{A_1}$, $\partial \bbs_2 \in \partial \ccalS_{A_2}$ and paths $\bbp_{\bbs_1}^{\partial \bbs_1}$, $\bbp_{\bbs_2}^{\partial \bbs_2}$ that connect $\bbs_1$, $\partial \bbs_1$ and $\bbs_2$, $\partial \bbs_2$, respectively, and are collision-free with respect to the other initial positions. For the boundary points $\partial \bbs_2$ and $\partial \bbs_1$, the shortest path $\bbp_{\partial \bbs_2}^{\partial \bbs_1}$ that connects them is the straight path, the area of which is bounded as 
\begin{align}\label{eq:proofThm32}
	|\bbp_{\partial \bbs_2}^{\partial \bbs_1}| 
	\le 2 d_{\max}(\ccalS_{A_2}, \ccalS_{A_1})\hat{r}
\end{align}
where $d_{\max}(\ccalS_{A_2}, \ccalS_{A_1})$ is the distance between $\ccalS_{A_1}$, $\ccalS_{A_2}$ and $d_{A_2, \ccalS} = d_{\max}(\ccalS_{A_2}, \ccalS_{A_1})$ by definition. Similar result applies to the destinations $\bbd_{1}$, $\bbd_{2}$ in $\ccalD_{A_1}$, $\ccalD_{A_2}$. That is, if $\ccalD_{A_1} = \ccalD_{A_2}$, there exists a path $\bbp_{\bbd_1}^{\bbd_2}$ in $\ccalD_{A_1}$ without changing the obstacle region. If $\ccalD_{A_1} \ne \ccalD_{A_2}$, there exist boundary points $\partial \bbd_1 \in \partial \ccalD_{A_1}$, $\partial \bbd_2 \in \partial \ccalD_{A_2}$ and paths $\bbp_{\partial \bbd_1}^{\bbd_1}$, $\bbp_{\partial \bbd_2}^{\bbd_2}$ that connect $\partial \bbd_1$, $\bbd_1$ and $\partial \bbd_2$, $\bbd_2$, respectively, and are collision-free with respect to the other destinations. The area of the shortest path $\bbp_{\partial \bbd_1}^{\partial \bbd_2}$ that connects the boundary points $\partial \bbd_1$ and $\partial \bbd_2$ is bounded as
\begin{align}\label{eq:proofThm33}
	|\bbp_{\partial \bbd_1}^{\partial \bbd_2}| 
	\le 2 d_{\max}(\ccalD_{A_2}, \ccalD_{A_1})\hat{r}
\end{align}
where $d_{\max}(\ccalD_{A_2}, \ccalD_{A_1})$ is the distance between $\ccalD_{A_1}$, $\ccalD_{A_2}$ and $d_{A_2, \ccalD} = d_{\max}(\ccalD_{A_2}, \ccalD_{A_1})$ by definition. From the condition \eqref{eq:prioritizedOfflineCompletenessCondition}, we have
\begin{align}\label{eq:proofThm34}
	|\ccalE \!\setminus\! (\Delta \!\cup\! \ccalS \!\cup\! \ccalD)| \!\ge\! 2 d_{A_1} \hat{r} + 2 (d_{A_2,\ccalS}\!+\!d_{A_2,\ccalD})\hat{r}.
\end{align}	
Thus, we can optimize $\Delta^{(1)}$ to $\Delta^{(2)}$ such that the paths $\bbp_{\partial \bbs_2}^{\partial \bbs_1}$ and $\bbp_{\partial \bbd_1}^{\partial \bbd_2}$ are obstacle-free. Following the proof of Theorem \ref{thm:offlineCompleteness}, we can establish the paths $\bbp_{\bbs_2}^{\bbs_1}$ and $\bbp_{\bbd_1}^{\bbd_2}$ that connect $\bbs_2$, $\bbs_1$ and $\bbd_1$, $\bbd_2$, respectively, and are collision-free w.r.t. the other initial positions, destinations and the optimized obstacle region $\Delta^{(2)}$. By concatenating $\bbp_{\bbs_2}^{\bbs_1}$, $\bbp_1$ and $\bbp_{\bbd_1}^{\bbd_2}$, we can establish the collision-free path $\bbp_2$ for agent $A_2$. 

Lastly, we follow the above procedure to optimize the obstacle region $\Delta^{(2)}$ to $\Delta^{(n)}$ for agents $A_3, \ldots, A_n$, recursively, and establish the paths $\{\bbp_i\}_{i=1}^n$ that are collision-free w.r.t. the other initial positions, destinations and the optimized obstacle region. This shows the optimized environment is ``well-formed'' for all agents $\{A_i\}_{i=1}^n$ and completes the proof by using Assumption \ref{as:trajectory} and Theorem 4 in \cite{vcap2015complete}.


\section{Proof of Corollary \ref{coro:incompletePrioritizedOfflineEO}}\label{proof:Corollary1}

Consider a sub-system of $b$ agents $\{A_i\}_{i=1}^b$ with highest priorities. Following the proof of Theorem \ref{thm:prioritizedOfflineCompleteness}, we can optimize the obstacle region $\Delta$ to $\Delta^*$ such that for the initial position $\bbs_i$ and destination $\bbd_i$ of agent $A_i$, there exists a collision-free path $\bbp_i$ that connects $\bbs_i$ and $\bbd_i$ for $i=1,\ldots,b$. Therefore, $\Delta^*$ is ``well-formed'' w.r.t. the considered sub-system and the navigation tasks of $b$ agents $\{A_i\}_{i=1}^b$ can be carried out successfully without collision. 

For the rest of the agents $A_j$ for $j=b+1,\ldots,n$, assume that the initial position $\bbs_j$ of agent $A_j$ is within the same starting component as $\bbs_i$ of agent $A_i$ with $i\in \{1,\ldots,b\}$, i.e., $\ccalS_{A_j} = \ccalS_{A_i}$. Denote by $\partial \bbs_i$ the boundary point where the path $\bbp_i$ intersects with the boundary of the starting component $\ccalS_{A_i}$. Since the initial positions $\{\bbs_i\}_{i=1}^n$ in $\ccalS$ are distributed in a way such that $d(\bbs_{i_1},\bbs_{i_2}) \ge 2 \hat{r}$ and $d(\bbs_i,\partial \ccalS) \ge 2 \hat{r}$ from Assumption \ref{as:initialDistribution} and $\bbs_j$ is in the same starting component as $\bbs_i$, there exists a path $\bbp_{\bbs_j}^{\partial \bbs_i}$ in $\ccalS_{A_i}$ that connects $\bbs_j$, $\partial \bbs_i$ and is collision-free w.r.t. the other initial positions. Similar result applies to the destinations $\bbd_j$ and $\bbd_i$. That is, if $\bbd_j$ is within the same destination component as $\bbd_i$, i.e., $\ccalD_{A_j} = \ccalD_{A_i}$, there exists a path $\bbp_{\bbd_j}^{\partial \bbd_i}$ in $\ccalD_{A_i}$ that connects $\bbd_j$, $\partial \bbd_i$ and is collision-free w.r.t. the other destinations. Since the boundary points $\partial \bbs_i$ and $\partial \bbd_i$ can be connected by the collision-free path $\bbp_i$, we can establish the collision-free path $\bbp_j$ that connects $\bbs_j$ and $\bbd_j$ by concatenating $\bbp_{\bbs_j}^{\partial \bbs_i}$, $\bbp_i$ and $\bbp_{\bbd_j}^{\partial \bbd_i}$. Therefore, the optimized obstacle region $\Delta^*$ is ``well-formed'' w.r.t. agent $A_j$ as well, the navigation task of which can be carried out successfully without collision. The same result holds for all agents $A_j \in \{A_{b+1}, \ldots, A_n\}$ satisfying the above conditions, completing the proof.

\section{Proof of Theorem \ref{thm:prioritizedOnlineCompleteness}}\label{proof:Theorem4}

We prove the theorem following Theorem \ref{thm:onlineCompleteness}. First, we re-formulate the navigation task as $H$ sub-navigation tasks for each agent. Then, we optimize the obstacle region to guarantee the completeness of sub-navigation tasks successively. Lastly, we show the completeness of the entire navigation by concatenating these sub-navigation tasks and complete the proof by limiting $H \to \infty$. 

Let $T$ be the maximal operation time required by trajectories $\{\bbp_i(t)\}_{i=1}^n$ with velocities $\{\bbv_i(t)\}_{i=1}^n$ and $\big\{\bbp_i(hT/H)\big\}_{h=0}^{H}$ be the intermediate positions with $\bbp_i(0)=\bbs_i$ and $\bbp_i(T)=\bbd_i$ for $i=1,...,n$. The goal of the $h$th sub-navigation task for agent $A_i$ is from $\bbp_i((h-1)T/H)$ to $\bbp_i(hT/H)$ and the operation time required by each sub-navigation task is $\delta t = T/H$. In this context, we separate the procedure of online environment optimization as a number of time slices with duration $2\delta t$. At each time slice, we change the obstacle region for sub-navigation tasks with duration $\delta t$ and navigate the agents with duration $\delta t$ in an alternative manner. From the condition \eqref{eq:prioritizedOnlineCompletenessCondition}, the area of the obstacle region that can be changed at each time slice is
\begin{align}\label{eq:proofthm41}
	2 b \hat{r} \|\hat{\bbv}\|_2 \delta t \!\le \!\!\frac{\big|(\!\Delta \!\bigcup\! \widetilde{\Delta}) \!\setminus\! (\Delta \bigcap \widetilde{\Delta})\big|}{2} \!\! <\! 2 (b\!+\!1) \hat{r} \|\hat{\bbv}\|_2 \delta t
\end{align}
where $\Delta$ is the original obstacle region and $\widetilde{\Delta}$ is the changed obstacle region. From \eqref{eq:proofthm41} and the proof of Theorem \ref{thm:onlineCompleteness}, at each time slice, we can change the obstacle region to make the environment ``well-formed'' w.r.t. the sub-navigation tasks of only $b$ agents instead of all $n$ agents, i.e., it only guarantees the success of $b$ sub-navigation tasks. For the agents whose sub-navigation tasks are not selected for environment optimization, we keep them static until the next time slice when their sub-navigation tasks are selected. By tuning the selections of $b$ sub-navigation tasks across time slices, we prove there exists a selection scheme such that sub-navigation tasks of all agents can be carried out successfully without collision and the navigation time of the agents can be bounded inverse proportionally to their priorities. 

Specifically, let $T_1 = H_1 (2 \delta t)$ with $H_1 \ge H$ be the maximal navigation time required by agent $A_1$ with the highest priority $\rho_1$, i.e., agent $A_1$ requires completing its $H$ sub-navigation tasks within $H_1$ time slices, and $T_i = H_i  (2 \delta t) = \lceil H_1 \rho_1 / \rho_i \rceil  (2 \delta t)$ be the maximal navigation time required by agent $A_i$ with the priority $\rho_i$ for $i=2,\ldots,n$. For agent $A_1$, there exists a scheme that selects the sub-navigation task of $A_1$ for environment optimization $H$ times within $H_1$ time slices because $H_1 \ge H$ and thus, its $H$ sub-navigation tasks can be carried out successfully without collision. By concatenating these sub-tasks, the navigation task of $A_1$ can be carried out successfully without collision within the required time.

For agents $A_1$ and $A_2$, if it holds that 
\begin{align}\label{eq:proofthm415}
	\min\{1, b\} (H_2-H_1) + \min\{2, b\} H_1 \ge 2 H
\end{align}
there exists a scheme that selects the sub-navigation tasks of $A_1$ and $A_2$ for environment optimization $H$ times during $H_2$ time slices, respectively. The minimal operations $\min\{1, b\}$ and $\min\{2, b\}$ in \eqref{eq:proofthm415} represent the facts that single agent can be selected at most once at each time slice, and each time slice can select at most $b$ agents. Thus, the navigation tasks of $A_1$ and $A_2$ can be carried out successfully without collision within the required time. Analogously for agents $\{A_1,\ldots,A_i\}$, if it holds that 
\begin{align}\label{eq:proofthm42}
	\sum_{j=1}^{i} \min\{j, b\} (H_{i+1-j} - H_{i-j}) \ge i H
\end{align}
with $H_0 = 0$ by default, there exists a scheme that selects the sub-navigation tasks of $\{A_1,\ldots,A_i\}$ for environment optimization $H$ times during $H_i$ time slices, and their navigation tasks can be carried out successfully without collision within the required time. Therefore, we conclude that if
\begin{align}\label{eq:proofthm43}
	\sum_{j=1}^{n} \min\{j, b\} (H_{n+1-j} - H_{n-j}) \ge n H
\end{align}
there exists a scheme that selects the sub-navigation tasks of all $n$ agents for environment optimization $H$ times during $H_n$ time slices, and all navigation tasks can be carried out successfully without collision within the required time. Since $H_i = \lceil H_1 \rho_1 / \rho_2 \rceil$, i.e., $H_i$ can be represented by $H_1$ and the associated priorities for $i=1,\ldots,n$, we can rewrite \eqref{eq:proofthm43} as
\begin{align}\label{eq:proofthm44}
	\!\sum_{j=1}^{n}\! \min\{j,\! b\} \Big(\Big\lceil \frac{H_1 \rho_1}{\rho_{n+1-j}} \Big\rceil \!-\! \Big\lceil \frac{H_1 \rho_1}{\rho_{n-j}}  \Big\rceil\Big) \!\ge\! n H.
\end{align}
Since $\{\rho_i\}_{i=1}^n$, $H$ are given and the left-hand side of \eqref{eq:proofthm44} increases with $H_1$, there exists a large enough $H_1$ such that \eqref{eq:proofthm44} holds. Therefore, the navigation tasks of all agents can be carried out successfully and the navigation time of agent $A_i$ is bounded by $T_i = H_i (2 \delta t) = \lceil H_1 \rho_1 / \rho_i \rceil (2 \delta t)$. 

When $H$ is sufficiently large, i.e., $\delta t$ is sufficiently small, the agents and the obstacles can be considered moving simultaneously. Since $H_1 \ge H$ is sufficiently large as well, we have $ \lceil H_1 \rho_1 / \rho_i \rceil (2\delta t) \approx  2 H_1 \rho_1 \delta t / \rho_i$. With this observation and the conclusion obtained from \eqref{eq:proofthm44}, we complete the proof, i.e., the navigation tasks of all agents can be carried out successfully without collision and the navigation time of agent $A_i$ is bounded by $T_i = C_T / \rho_i$ for $i=1,\ldots,n$ where $C_T = 2 H_1 \rho_1 \delta t$ is the time constant.

\section{Proof of Corollary \ref{coro:IncompletePrioritizedOnlineEO}}\label{proof:Corollary2}

From the proof of Theorem \ref{thm:prioritizedOnlineCompleteness}, for any integer $1 \le \eta \le n$, if it holds that
\begin{align}\label{eq:proofcoro21}
	\sum_{j=1}^{\eta}\! \min\{j, b\} \Big(\Big\lceil\! \frac{H_1 \rho_1}{\rho_{\eta+1-j}} \Big\rceil \!-\! \Big\lceil \frac{ H_1 \rho_1}{\rho_{\eta-j}} \Big\rceil\Big) \!\ge\! \eta H
\end{align}
there exists a scheme with a sufficiently large $H_1$ that selects the sub-navigation tasks of $\eta$ agents for environment optimization $H$ times during $H_{\eta} = \lceil H_1 \rho_1 / \rho_{\eta+1-j} \rceil$ time slices and the navigation tasks of agents $\{A_i\}_{i=1}^{\eta}$ can be carried out successfully without collision within time $\{T_i\}_{i=1}^\eta$, respectively. Since the agents are required to reach their destinations within time $T_{\max}$, i.e., the maximal navigation time is $T_{\max}$, the maximal (allowed) number of time slices is $H_{\max} = \lfloor T_{\max} / 2 \delta t \rfloor$. This is equivalent to requiring
\begin{align}\label{eq:proofcoro22}
	H_{\eta} = \Big\lceil \frac{H_1 \rho_1}{\rho_{\eta}}  \Big\rceil \le \Big\lfloor \frac{T_{\max}}{2 \delta t} \Big\rfloor
\end{align}
because $H_{\eta}$ is the maximal number of time slices required by the first $\eta$ agents. Thus, we have
\begin{align}\label{eq:proofcoro23}
	H_1 \le \frac{\big\lfloor \frac{T_{\max}}{2 \delta t} \big\rfloor \rho_\eta}{\rho_1} 
\end{align}
Since the left-hand side of \eqref{eq:proofcoro21} increases with $H_1$, substituting \eqref{eq:proofcoro23} into \eqref{eq:proofcoro21} yields
\begin{align}\label{eq:proofcoro24}
	&\sum_{j=1}^{\eta}\! \min\{j, b\} \Big(\Big\lceil\! \frac{H_1 \rho_1}{\rho_{\eta+1-j}} \Big\rceil \!-\! \Big\lceil \frac{ H_1 \rho_1}{\rho_{\eta-j}} \Big\rceil\Big) \\
	&\le \sum_{j=1}^{\eta}\! \min\{j, b\} \Big(\Big\lceil \frac{\big\lfloor \frac{T_{\max}}{2 \delta t} \big\rfloor \rho_\eta}{\rho_{\eta+1-j}} \Big\rceil \!-\! \Big\lceil \frac{\big\lfloor \frac{T_{\max}}{2 \delta t} \big\rfloor \rho_\eta}{\rho_{\eta-j}} \Big\rceil\Big). \nonumber 
\end{align}
By substituting \eqref{eq:proofcoro24} into \eqref{eq:proofcoro21}, we have
\begin{align}\label{eq:proofcoro25}
	\!\sum_{j=1}^{\eta}\! \min\{j, b\} \Big(\Big\lceil \frac{\big\lfloor \frac{T_{\max}}{2 \delta t} \big\rfloor \rho_\eta}{\rho_{\eta+1-j}} \Big\rceil \!-\! \Big\lceil \frac{\big\lfloor \frac{T_{\max}}{2 \delta t} \big\rfloor \rho_\eta}{\rho_{\eta-j}} \Big\rceil\Big) \!\ge\! \eta H.
\end{align}
By using the fact $T = H \delta t$ where $T$ is the maximal operation time of trajectories $\{\bbp_i(t)\}_{i=1}^n$, we get
\begin{align}\label{eq:proofcoro26}
	\!\sum_{j=1}^{\eta}\! \min\{j, b\} \Big(\Big\lceil \frac{\big\lfloor \frac{HT_{\max}}{2 T} \big\rfloor \rho_\eta}{\rho_{\eta+1-j}} \Big\rceil \!-\! \Big\lceil \frac{\big\lfloor \frac{H T_{\max}}{2 T} \big\rfloor \rho_\eta}{\rho_{\eta-j}} \Big\rceil\Big) \!\ge\! \eta H.
\end{align}
When $H$ is sufficiently large, we have $\lceil \lfloor HT_{\max}/(2T) \rfloor \rho_\eta/\rho_{\eta+1-j}\rceil \approx HT_{\max}\rho_\eta/(2 T \rho_{\eta+1-j})$ and $\lceil \lfloor HT_{\max}/(2T) \rfloor \rho_\eta/\rho_{\eta-j}\rceil \approx HT_{\max}\rho_\eta/(2 T \rho_{\eta-j})$, and \eqref{eq:proofcoro26} is equivalent as 
\begin{align}\label{eq:proofcoro27}
	\!\sum_{j=1}^{\eta}\! \min\{j, b\} \Big( \frac{T_{\max} \rho_\eta}{2 T \rho_{\eta+1-j}} \!-\! \frac{T_{\max} \rho_\eta}{2 T \rho_{\eta-j}} \Big) \!\ge\! \eta.
\end{align}
By setting $n_p$ as the maximal $\eta$ that satisfies \eqref{eq:proofcoro27} and following the proof of Theorem \ref{thm:prioritizedOnlineCompleteness}, the navigation tasks of $\{A_i\}_{i=1}^{n_p}$ can be carried out successfully without collision within the required time $T_{\max}$, completing the proof.

\section{Proof of Theorem \ref{thm:safetyGuarantee}}\label{proof:Theorem5}

For a feasible solution of problem \eqref{eq:alternativeLearningProblem} with the constraint constant $\ccalC_q = (1-\delta + \eps)/(1-\gamma)$ [cf. \eqref{eq:constraintConstants}], it satisfies that
\begin{align}\label{eq:thm51}
	&\mathbb{E}\Big[\!\sum_{t=1}^\infty\!\! \gamma^t\! \mathbbm{1}[h_q(\bbX_o^{(t)}\!,\! \bbU_o^{(t)}) \!\le\! 0]\Big] \!=\! \sum_{t=1}^\infty\! \gamma^t \mathbbm{P}[h_q(\bbX_o^{(t)}, \bbU_o^{(t)}) \le 0] \nonumber \\
	& \ge \frac{1-\delta + \eps}{1-\gamma} = \frac{1}{1-\gamma} - \frac{\delta - \epsilon}{1-\gamma},~\for~q=1,\ldots,Q 
\end{align}
where the linearity of the expectation and \eqref{eq:probability} are used. For the term $\sum_{t=1}^\infty \gamma^t \mathbbm{P}[h_q(\bbX_o^{(t)}, \bbU_o^{(t)}) \le 0]$, we have 
\begin{align}\label{eq:thm52}
	&\sum_{t=1}^\infty\!\! \gamma^t \mathbbm{P}[h_q(\bbX_o^{(t)}\!,\! \bbU_o^{(t)}\!) \!\le\! 0] \!=\!\! \sum_{t=1}^\infty\!\! \gamma^t \!\!-\!\! \sum_{t=1}^\infty\!\! \gamma^t \mathbbm{P}[h_q(\bbX_o^{(t)}\!\!,\! \bbU_o^{(t)}\!) \!>\! 0] \nonumber \\
	&= \frac{1}{1-\gamma} - \sum_{t=1}^\infty \gamma^t \mathbbm{P}[h_q(\bbX_o^{(t)}, \bbU_o^{(t)}) > 0]. 
\end{align}
By comparing \eqref{eq:thm51} and \eqref{eq:thm52}, we get
\begin{align}\label{eq:thm53}
	&\sum_{t=1}^\infty \gamma^t \mathbbm{P}[h_q(\bbX_o^{(t)}, \bbU_o^{(t)}) > 0] \le \frac{\delta - \epsilon}{1 - \gamma}.
\end{align}
For the maximal time horizon $T$, we have
\begin{align}\label{eq:thm54}
	&\sum_{t=1}^T \gamma^t \mathbbm{P}[h_q(\bbX_o^{(t)}, \bbU_o^{(t)}) > 0] \\
	&\le \sum_{t=1}^\infty \gamma^t \mathbbm{P}[h_q(\bbX_o^{(t)}, \bbU_o^{(t)}) > 0] \le \frac{\delta - \epsilon}{1 - \gamma} \nonumber
\end{align}
because each term in the summation is non-negative. Then, by setting $\epsilon = \delta (1-\gamma^T (1 - \gamma)) < \delta$ and substituting the latter into \eqref{eq:thm54}, we get 
$\sum_{t=1}^T \gamma^t \mathbbm{P}[h_q(\bbX_o^{(t)}, \bbU_o^{(t)}) > 0] \le \gamma^T \delta$. 
Since $\gamma^T \le \gamma^t$ for all $t \le T$, we have 
$\sum_{t=1}^T \gamma^T \mathbbm{P}[h_q(\bbX_o^{(t)}, \bbU_o^{(t)}) > 0] \le \gamma^T \delta$ 
and thus
\begin{align}\label{eq:thm57}
	&\sum_{t=1}^T \mathbbm{P}[h_q(\bbX_o^{(t)}, \bbU_o^{(t)}) > 0] \le \delta.
\end{align}
By leveraging the Boole-Frechet-Bonferroni inequality, we complete the proof
\begin{align}\label{eq:thm58}
	\mathbb{P}\big[\cap_{0 \le \tau \le t} \{h_q(\bbX_o^{(\tau)}, \bbU_o^{(\tau)}) \le 0\}\big] \ge 1-\delta.
\end{align}

\bibliographystyle{IEEEtran}
\bibliography{myIEEEabrv,biblioOp,AP_bib}

\end{document}